\documentclass[11pt,a4paper]{amsart}
\usepackage{amsmath,amsfonts,amsthm,amssymb}
\usepackage[T1]{fontenc}
\usepackage{color}
\usepackage{epsfig}
\usepackage{graphicx}
\usepackage{cite,url}
\usepackage{hyperref}
\usepackage{xspace}
\usepackage{enumerate}
\usepackage[left=1in,right=1in,top=1in,bottom=1in,foot=0.5in]{geometry}
\usepackage{thm-restate}
\usepackage{soul}
\usepackage{cleveref}

\newtheorem{theorem}{Theorem}

\newtheorem{proposition}[theorem]{Proposition}

\newtheorem{lemma}[theorem]{Lemma}

\newtheorem{case}{Case}

\theoremstyle{definition}

\newtheorem{definition}[theorem]{Definition}

\DeclareMathOperator{\next}{succ}
\DeclareMathOperator{\precc}{prec}
\DeclareMathOperator{\oG}{\overrightarrow{G}}

\newcommand{\tr}{_{\scalebox{0.53}{$\triangle$}}}
\newcommand{\trsq}{_{\scalebox{0.53}{$\triangle\square$}}}

\newcommand{\sq}{_{\scalebox{0.53}{$\square$}}}

\newcommand{\rDGz}{\textrm{(DG$_0$)}\xspace}
\newcommand{\rDG}{\textrm{(DG)}\xspace}
\newcommand{\rDM}{\textrm{(DM)}\xspace}

\newcommand{\rDTC}{\textrm{(DTC)}\xspace}
\newcommand{\rDB}{\textrm{(DB)}\xspace}
\newcommand{\rDH}{\textrm{(DH)}\xspace}

\newcommand{\rMG}{\textrm{(MG)}\xspace}
\newcommand{\rMGt}{\textrm{(MG$\tr$)}\xspace}
\newcommand{\rMGs}{\textrm{(MG$\sq$)}\xspace}
\newcommand{\rMTC}{\textrm{(MTC)}\xspace}
\newcommand{\rMB}{\textrm{(MB)}\xspace}
\newcommand{\rMH}{\textrm{(MH)}\xspace}
\newcommand{\rMM}{\textrm{(MM)}\xspace}

\newcommand{\cA}{\ensuremath{\mathcal{A}}\xspace}
\newcommand{\cB}{\ensuremath{\mathcal{B}}\xspace}

\newcommand{\cG}{\ensuremath{\mathcal{G}}\xspace}
\newcommand{\cP}{\ensuremath{\mathcal{P}}\xspace}

\newcommand{\N}{\ensuremath{\mathbb{N}}\xspace}

\author[J.\ Chalopin]{J\' er\'emie Chalopin}
\address{CNRS, Aix-Marseille Universit\'e, LIS, Marseille, France
}
\email{jeremie.chalopin@lis-lab.fr}

\author[V.\ Chepoi]{Victor Chepoi}
\address{Aix-Marseille Universit\'e, CNRS, LIS, Marseille, France}
\email{victor.chepoi@lis-lab.fr}

\author[M. Kokkou]{Maria Kokkou}
\address{Paderborn University}
\email{maria.kokkou@uni-paderborn.de}

\begin{document}

\pagestyle{plain}

\title{Distance-based certification for leader election in meshed graphs and local recognition of their subclasses}

\begin{abstract}
  In this paper, we present a 2-local proof labeling scheme with
  labels in $\{ 0,1,2\}$ for leader election in anonymous meshed graphs. Meshed
  graphs form a general class of graphs defined by a distance
  condition. They comprise several important classes of graphs, which
  have long been the subject of intensive studies in metric graph
  theory, geometric group theory, and discrete mathematics: median
  graphs, bridged graphs, chordal graphs, Helly graphs, dual polar
  graphs, modular, weakly modular graphs, and basis graphs of
  matroids.  We also provide 3-local proof labeling schemes to
  recognize these subclasses of meshed graphs using labels of size
  $O(\log D)$ (where $D$ is the diameter of the graph).

  To establish these results, we show that in meshed graphs, we can
  verify locally that every vertex $v$ is labeled by its distance
  $d(s,v)$ to an arbitrary root $s$.  To design proof labeling schemes
  to recognize the subclasses of meshed graphs mentioned above, we use
  this distance verification to ensure that the triangle-square
  complex of the graph is simply connected and we then rely on
  existing local-to-global characterizations for the different classes
  we consider.

  To get a proof-labeling scheme for leader election with labels of
  constant size, we then show that we can check locally if every $v$
  is labeled by $d(s,v) \pmod{3}$ for some root $s$ that we designate
  as the leader.
\end{abstract}

\maketitle

\section{Introduction} In local certification \cite{feuilloley2019introduction}---a broad term encompassing several models such as proof labeling schemes \cite{korman2010proof}, locally checkable proofs \cite{goos2016locally}, and nondeterministic local decision---the goal is to verify global properties through local conditions. Each node of a network/graph individually assesses these conditions and either accepts (i.e., if at the node all conditions are satisfied) or rejects (i.e., if at least one condition is violated). An error in the graph is detected if some node rejects. For example, to verify that a connected graph is a cycle and not a path, each node accepts if its degree is two and rejects if its degree is one. However, the inverse problem of accepting a path and rejecting a cycle is not possible without additional information, even if the nodes are not identical~\cite{feuilloley2019introduction}.

The main problem we consider in this paper is Leader Election \cite{LeLann77}. In anonymous networks, solving leader election is impossible without additional assumptions due to symmetry \cite{angluin1980local}. This impossibility result can be overcome by using randomization \cite{itai1990symmetry} or by restricting attention to specific graph classes \cite{yamashita1996computing}. Verifying that a unique leader exists in a graph where each node is equipped with a unique identifier can be done by encoding node identifiers or distances in the graph in the certificate, with certificates of size $O(\log n)$ (where $n$ is the number of vertices  of the graph) and $O(\log D)$ (where $D$ is the diameter of the graph), respectively. Verifying the existence of a unique leader (also known as the conjunction of \emph{At Most One Selected} (AMOS) \cite{fraigniaud2013amos} and \emph{At Least One Selected} (ALOS) \cite{feuilloley2018alos}) in anonymous graphs has recently started receiving attention \cite{chalopin2026stabilising,ChKo,FeSeSl}. In particular, \cite{FeSeSl} was the motivation for our work. Therein, Feuilloley, Sedl\'{a}\v{c}ek, and Sl\'{a}vik presented local certificates with labels of size $O(\log D)$ for leader election in anonymous chordal and grid graphs $G$ of diameter $D$. In their model, the label of each vertex $v$ is the presumed distance from $v$ to one and the same vertex $s$ (the root), not necessarily known to the vertices of $G$. The certification consists in the verification of a set of local  rules. If all vertices of $G$ pass this verification, then the labels  of vertices $v$ are the distances $d(s,v)$ from $v$ to  $s$ and $s$ can be elected  as the leader.
In \cite{chalopin2026stabilising,ChKo}, certificates were designed for
simply connected subgraphs of the triangular
grid~\cite{chalopin2026stabilising} as well as for chordal and
$K_4$-dismantlable graphs~\cite{ChKo}. These certificates are obtained
by orienting all edges and designing local verification rules such
that if these rules are satisfied at every vertex, then the
orientation of the graph contains a unique sink that can be designated
as a leader. This gives certificates of size $O(\Delta)$ where
$\Delta$ is the maximum degree of the graph. 
As a form of fault-tolerance, verification is a recurrent but not the main task to be solved within a system and as a result it should not consume too many resources. Hence, in this work, we present constant size certificates for large classes of graphs, namely for meshed graphs and their subclasses. Similarly to \cite{FeSeSl}, we begin by establishing certificates based on distances to the root that are logarithmic in the diameter of the graph (Theorems \ref{meshed1-intro} and \ref{bridged-helly1-intro}). Then we extend this method and show how to transform these certificates to certificates of constant size by replacing the distances to the root  by these  distances modulo 3. We show that these certificates ensure the existence and the election of a unique leader in all meshed graphs (Theorem \ref{meshed2-intro}). It should be noted, that the class of meshed graphs is very general and comprises most classes of graphs studied in metric graph theory.

In addition to Leader Election, we show how to use our first result
and the already known local-to-global characterizations of all main
classes of meshed graphs (but not for the class of all meshed graphs)
in order to construct certificates verifying that a given graph
belongs to one of the considered classes (Theorem
\ref{local-recogn-intro}). This is particularly useful, as it is
possible that an algorithm only works in a specific class of graphs
and becomes stuck in others. Thus, being able to locally verify that
the graph in which an algorithm is run belongs to the intended class
can lead to avoiding deadlocks. This direction has also received
significant attention in the literature.  Previous works have
established that several classes of graphs can be recognized with
certificates of size $O(\log n)$ (where $n$ is the number of vertices
of the graph): e.g., planar graphs~\cite{feuilloley2021compact},
graphs of bounded
genus~\cite{esperet2022certification,feuilloley2023genus}, and several
geometric intersection classes of graphs (chordal, interval, circular
arc, trapezoid, and permutation)~\cite{jauregui2025compact}. On the
other hand, there exist other geometric intersection graph classes
that cannot be recognized with certificates of size
$o(n)$~\cite{defrain2024local}.

\subsection{Our results}
In this paper, we significantly extend, unify, and improve the results
of \cite{FeSeSl}. Namely, we consider the following \emph{local
  distance certification problem}: given a class of graphs
$\cG$, we ask for a set of $r$-local rules
$\cA$ ($\cA$ is called an \emph{$r$-local verifier}) for $\cG$ such that, given any anonymous graph $G=(V,E)$ from
$\cG$ and any set $\{ D(v): v\in V\}$ of labels of vertices of
$G$, all vertices of $G$ satisfy the local rules of $\cA$ if and only if 
there exists a root vertex $s$ in $G$ such that for any $v\in V$ the
label $D(v)$ is the distance $d(s,v)$ from $s$ to $v$.  By
$r$-\emph{locality} is meant that the rules at $v$ depend only on the
labels of the vertices in the $r$-ball $B_r(v)$ centered at $v$ and
the subgraph induced by $B_r(v)$. Our first main result is the
following theorem:

\begin{theorem}\label{meshed1-intro}
  The class of meshed graphs admits a 2-local distance certification
  protocol.\end{theorem}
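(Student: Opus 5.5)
The plan is to take as certificate the label $D(v)$ itself (intended to be $d(s,v)$) and to use local rules that copy the defining distance condition of meshed graphs. Recall that $G$ is meshed if for any vertices $u,v,w$ with $d(v,w)=2$ there is a common neighbor $x$ of $v$ and $w$ with $2d(u,x)\le d(u,v)+d(u,w)$. The verifier $\cA$ at a vertex $v$ checks the following:
\begin{enumerate}[(R1)]
\item $D(v)\ge 0$, and $|D(v)-D(u)|\le 1$ for every neighbor $u$;
\item if $D(v)>0$, then $v$ has a neighbor $u$ with $D(u)=D(v)-1$;
\item if $D(v)=0$, then no neighbor of $v$ has label $0$;
\item for every $w$ with $d(v,w)=2$ there is a common neighbor $x$ of $v$ and $w$ with $2D(x)\le D(v)+D(w)$.
\end{enumerate}
Rule (R4) only needs $B_2(v)$, so the verifier is 2-local.

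\emph{Completeness.} If $D=d(s,\cdot)$, then (R1)--(R3) are immediate. Rule (R4) is exactly the meshed condition applied with $u=s$.

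\emph{Soundness, step 1: at most one zero.} Suppose all rules hold, and suppose there are two vertices $s\neq t$ with $D(s)=D(t)=0$. Among all $s$--$t$ paths joining two distinct zeros, I will pick a path $P$ minimizing the following parameters, in lexicographic order:
\begin{enumerate}[(a)]
\item the maximum label $m$ on $P$;
\item the number of vertices of $P$ with label $m$;
\item the length of $P$.
\end{enumerate}
If $m=0$, then some two consecutive vertices of $P$ are adjacent zeros, which contradicts (R3).

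If $m>0$, take a maximal run of consecutive vertices of $P$ with label $m$. It is flanked by a vertex $v$ with $D(v)\le m-1$, so by (R1) $D(v)=m-1$. Let $y$ be the first vertex of the run and $w$ the vertex after $y$, so $D(w)\le m$.
\begin{itemize}
\item If $v=w$ or $v\sim w$, shortcutting $y$ removes a label-$m$ vertex, decreasing (b) without increasing (a).
\item Otherwise $d(v,w)=2$. Rule (R4) at $v$ gives a common neighbor $x$ with $2D(x)\le 2m-1$, hence $D(x)\le m-1$. Replacing $y$ by $x$ again decreases (b) without increasing (a).
\end{itemize}
When the run is exhausted, (a) drops. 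Such replacements may create a walk rather than a path, but shortcutting repetitions never increases (a) or (b). This contradicts the minimality of $P$.

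\emph{Soundness, step 2: exactly one zero and correct labels.} By (R2), starting from any vertex and repeatedly stepping to a neighbor with smaller label reaches a zero, so a zero exists. By step 1 it is unique; call it $s$. The descending walk from $v$ reaches $s$ in $D(v)$ steps, so $d(s,v)\le D(v)$. By (R1), $D$ is 1-Lipschitz with $D(s)=0$, so $D(v)\le d(s,v)$. Hence $D=d(s,\cdot)$.

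\emph{Main obstacle.} The delicate step is the peak-reduction termination in step 1. Rule (R4) only applies to pairs at distance exactly $2$, so the degenerate cases (coinciding or adjacent flanking vertices) must be handled by shortcutting. Plateaus of equal maximal labels must be attacked from an endpoint adjacent to a strictly lower vertex, so that the floor in $D(x)\le m-\tfrac12$ yields a strict decrease. The lexicographic potential above is designed to make this descent well-founded.
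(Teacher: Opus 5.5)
Your proof is correct, but your soundness argument takes a different route from the paper's. The paper's rules are (DG$_0$), (DG) and (DM). Rule (DM) is checked at the \emph{middle} vertex of a $2$-path and only for the label patterns $D(u)=D(v)=D(u')+1$ and $D(u)=D(u')=D(v)-1$. Your (R4) is checked at an endpoint and stated for every pair at distance $2$. Given (R1), the inequality $2D(x)\le D(v)+D(w)$ fails for a common neighbor $y$ of $v,w$ exactly when $(v,y,w)$ matches one of those two patterns. Hence your verifier and the paper's accept the same labelings; the real difference is in how soundness is proved.

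The paper takes a vertex $s$ of minimum label, which (DG) forces to be $0$, and inducts on the true distance $d(s,v)$. Let $v\sim u\sim w$ be a geodesic towards $s$ with $d(s,v)=k+1$, and suppose $D(v)\le k$. Then (DM) at $u$ gives $x\sim v,w$ with $D(x)\le k-1$, although $d(s,x)=k$, contradicting the induction hypothesis. Uniqueness of the zero comes out for free.

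You instead separate the two issues. First, a peak-reduction argument on paths between two zeros shows the zero is unique; it lexicographically minimizes the maximal label and the number of vertices attaining it. Then the descending walk gives $d(s,v)\le D(v)$, and the Lipschitz rule gives $D(v)\le d(s,v)$.

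The paper's induction is shorter and needs no path surgery. Your route is longer, but it isolates what the rule really guarantees: $D$ cannot have two distinct zeros. It is also the same minimal-counterexample surgery the paper uses later, in Lemma~\ref{lem-recog-sc}, to show that acceptance forces the triangle-square complex to be simply connected, applied here to paths instead of cycles. In both arguments soundness holds for arbitrary graphs; meshedness is used only for completeness.
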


Meshed graphs are defined by a condition on the distance function (the
Weak Quadrangle Condition) and they comprise numerous important
subclasses of graphs: weakly modular graphs, modular graphs, bridged
graphs, Helly graphs, median graphs, dual polar graphs, basis graphs
of matroids, etc. All these classes are among the main classes of
graphs studied in metric graph theory \cite{BaCh_survey,CCHO} and also
arise in geometric group theory, metric geometry, combinatorial
optimization, and concurrency. Most of these classes are weakly
modular graphs, which are defined by two conditions, the Triangle and
the Quadrangle Conditions.  Bridged and median graphs are 
far reaching generalizations of chordal graphs and grids
considered in~\cite{FeSeSl}. 

In the case of (weakly) bridged and Helly graphs, Theorem
\ref{meshed1-intro} can be slightly improved:

\begin{theorem}\label{bridged-helly1-intro}
  The classes of (weakly) bridged graphs and Helly graphs admit
  1-local distance certification protocols.
\end{theorem}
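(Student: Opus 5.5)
The plan is to use the natural 1-local rules and then the local-to-global properties of (weakly) bridged and Helly graphs to show that the labels are distances from a root. Every vertex $v$ sees the labels of its closed neighborhood $B_1(v)$ and checks the following.
\begin{enumerate}[(1)]
\item $|D(u)-D(v)|\le 1$ for every neighbor $u$ of $v$, and $D(v)\ge 0$.
\item If $D(v)=0$, nothing further is required.
\item If $D(v)>0$, the set $N^-(v)=\{u\sim v: D(u)=D(v)-1\}$ is nonempty and induces a clique.
\end{enumerate}
Rule (3) would be replaced, for Helly graphs, by a suitable 1-local condition. One candidate is that $N^-(v)$ is a clique and that every neighbor $w$ of $v$ with $D(w)=D(v)$ has a neighbor in $N^-(v)$. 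This mirrors the fact that in Helly and weakly bridged graphs the sets $I(v,s)\cap N(v)$ are cliques, by convexity of balls and the Triangle Condition. Completeness is immediate: if $D(v)=d(s,v)$, the known properties of these classes (balls are convex in weakly bridged graphs, and $B_{k}(s)\cap N(v)$ is a clique for $d(s,v)=k+1$ in weakly bridged and Helly graphs) give all the rules.

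For soundness, I would first note that rules (1) and (3) make $D$ decrease along some path from any vertex to a vertex of label $0$. It remains to show that there is exactly one vertex $s$ with $D(s)=0$ and that $D=d(s,\cdot)$.

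I would follow the approach of the paper's proof of Theorem~\ref{meshed1-intro}. That proof presumably shows that $D$ is a distance function once certain local triangle/square conditions hold and the triangle-square complex is simply connected, which is automatic for meshed graphs. Here I would show that the 1-local rules imply the 2-local rules used there, so Theorem~\ref{meshed1-intro} applies.

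The concrete step is as follows. Suppose $D(u)=D(w)=k$ and $u,w$ have a common neighbor $v$ with $D(v)=k+1$. Apply the Triangle/Quadrangle-type conditions of weakly modular graphs at $v$ relative to the labels. The clique condition on $N^-(v)$ then forces $u\sim w$, which gives the "no induced square going down" behaviour that characterizes distances to a point in bridged graphs.

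Alternatively, I would argue directly by contradiction. Take two label-$0$ vertices, or a vertex $v$ with $D(v)\ne d(s,v)$, and choose a closed walk or cycle witnessing the discrepancy with minimal "area". Since weakly bridged and Helly graphs have simply connected clique complexes (the triangle complex suffices), any cycle is a sum of triangles and of squares with diagonals. Then do induction on the maximum label along the cycle. At a vertex $v$ of maximal label $k+1$ on the cycle, both cycle neighbors lie in $N^-(v)\cup\{\text{label }k+1\}$. The clique rule lets us shortcut $v$, lowering the area and contradicting minimality.

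The main obstacle I expect is the squares in Helly graphs: they need not be bridged, so induced $C_4$'s exist. Shortcutting a vertex $v$ of maximal label whose two cycle-neighbors $u,w$ are non-adjacent with labels $k$ is impossible by the clique rule. The trouble is the case where the neighbors have labels $k+1$ and $k$, or the case of a flat $C_4$.

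There I would use the Helly property on the balls $B_1(u)$, $B_1(w)$ and $B_1(v)$ together with the extra rule that every same-level neighbor of $v$ sees $N^-(v)$. The goal is a vertex of label $\le k$ adjacent to the relevant cycle vertices, so the square decomposes into triangles and the induction on the sum of labels around the cycle goes through. Getting this exact 1-local rule right for Helly graphs is the delicate part. For weakly bridged graphs, the characterization by convexity of balls around convex sets and the absence of induced $C_4$ going down should make the argument straightforward.
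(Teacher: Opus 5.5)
Your proposal has genuine gaps, in the rules themselves and not only in the soundness argument.

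\textbf{Weakly bridged graphs: your rules are not sound.} You correctly note that (1) and (3) already force $D=d(s,\cdot)$ once the vertex $s$ of label $0$ is unique. Rule (1) gives $D(v)\le d(s,v)$, and the descending paths of rule (3) give $d(s,v)\le D(v)$. So soundness comes down to uniqueness of the root, and nothing in (1)--(3) enforces it. The path $abcd$ is a tree, hence chordal, bridged, weakly bridged and Helly. Labeled $0,1,1,0$, it passes (1)--(3) at every vertex. Because of rule (2), so does a single edge labeled $0,0$.

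What is missing is a labeled Triangle Condition, the paper's rule \rDTC: if $u\sim v$ and $D(u)=D(v)$, some $x\sim u,v$ has $D(x)=D(v)-1$. You also need \rDGz. With \rDGz, \rDG, \rDTC, \rDB, soundness is a plain induction that uses no topology and holds in every graph. Take $s$ of minimum label. For $d(s,v)=k+1$, choose $u\sim v$ and $w\sim u$ with $d(s,u)=k$ and $d(s,w)=k-1$.
\begin{enumerate}[(a)]
\item If $D(v)=k-1$, then \rDB at $u$ forces $v\sim w$, which is impossible.
\item If $D(v)=k$, then \rDTC at $u$ gives $x\sim u,v$ with $D(x)=k-1$. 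This is impossible both when $d(s,x)=k+1$ (by case (a)) and when $d(s,x)=k$ (by induction).
\end{enumerate}
So the detour through simply connected clique complexes and minimal-area diagrams is unnecessary. That part was also never made precise: you do not say which cycle witnesses a discrepancy. Note too that the proof of Theorem~\ref{meshed1-intro} does not use simple connectivity either. Your reduction to Theorem~\ref{meshed1-intro} would work with the correct rules, since \rDTC and \rDB together imply \rDM. It fails with yours: the path above violates \rDM at $b$.

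\textbf{Helly graphs: your candidate rule fails completeness.} You claim that $N(v)\cap B_k(s)$ is a clique when $d(s,v)=k+1$. This is false in Helly graphs, whose balls need not be convex. Take the $3\times 3$ king grid, which is Helly, with $s=(0,1)$ and $v=(2,1)$. Then $(1,0)$ and $(1,2)$ both lie in $N^-(v)$ but are at distance $2$, so the true distance labeling is rejected. This is exactly the square phenomenon you flagged.

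The paper resolves it by replacing \emph{clique} with \emph{domination}, the Bandelt--Pesch property \rDH: if $D(v)>0$, some $x\in N(v)$ with $D(x)=D(v)-1$ is adjacent to every other neighbor $u$ of $v$ with $D(u)\le D(v)$.
\begin{enumerate}[(a)]
\item Completeness follows by applying the Helly property to $B_k(s)$, $B_1(v)$, and the balls $B_1(u)$ for all such neighbors $u$.
\item Soundness is the same induction as above. If $D(v)\le k$, then \rDH at $u$ yields $x\sim v,w$ with $D(x)=k-1$, while $d(s,x)=k$.
\end{enumerate}
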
 

Our second result concerns the \emph{local distance mod 3
  certification problem} and its use to the \emph{leader election
  problem}. The local distance mod 3 certification problem is similar
to the local distance certification problem, only the labels $L(v)$ of
vertices $v$ belong to the set $\{ 0,1,2\}$ and all vertices $v$ of
$G$ satisfy the local rules of $\cA$ if and only if
$L(v) = d(s,v) \pmod{3}$ from some root $s$ to $v$. From a local
verifier for the local distance mod 3 certification problem, one can
derive a leader election labeling scheme, i.e., a proof labeling
scheme that certifies the existence of a unique leader in the graph.

\begin{theorem}\label{meshed2-intro}
 The class of meshed graphs admits   a $2$-local verifier for
  the distance mod 3 certification problem.
Consequently, the class of meshed graphs has a $2$-local leader election labeling
  scheme with labels of constant size. \end{theorem}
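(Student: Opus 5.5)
The strategy is to reduce the mod 3 labels to the integer-distance setting of Theorem~\ref{meshed1-intro}. From labels $L(v)\in\{0,1,2\}$ we recover a consistent integer potential locally, and then run the 2-local rules of the distance certification protocol on that potential.

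\textbf{Step 1: the rules.} Each vertex $v$ checks the following.
\begin{enumerate}[(i)]
\item For every edge $uv$, $L(u)\neq L(v)$. Thus $L(u)-L(v)\equiv\pm 1 \pmod 3$, and the edge receives an orientation $u\to v$ exactly when $L(v)\equiv L(u)+1$.
\item The mod-3 translation of the local rules of Theorem~\ref{meshed1-intro} holds in $B_2(v)$. In that protocol every rule compares labels of vertices at distance at most $2$, so it only uses differences in $\{-2,\dots,2\}$ together with the ``lower neighbour'' structure. These differences cannot be read off mod 3 alone, so we define the local potential $\delta_v(w)$ for $w\in B_2(v)$ as the signed length of a path of length at most $2$ from $v$ to $w$.
\end{enumerate}
As part of (ii), $v$ verifies that $\delta_v(w)$ does not depend on the chosen path. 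Concretely, around every triangle and every square in $B_2(v)$ the signed sum of the edge increments $\pm1$ must be $0$. Mod 3 alone cannot detect a triangle oriented cyclically, since $1+1+1\equiv 0$, so a cyclic triangle is explicitly rejected. In a square, the signed sum is one of $0,\pm2,\pm4$. Since $\pm 2\not\equiv 0$ and $\pm4\not\equiv0$, the requirement that the sum be $0$ is automatic once the labels are taken mod 3.

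\textbf{Step 2: global integration.} The key claim is that the integer increments integrate to a global function $D$ with $D\equiv L\pmod 3$. This holds exactly when the 1-form is exact, that is, when its sum around every cycle vanishes. Step 1 makes it vanish on triangles and squares, so the remaining task is to show that the cycle space of a meshed graph is generated by triangles and squares. This is where the Weak Quadrangle Condition enters. The plan is to use the known fact that the triangle-square complex of a meshed graph is simply connected. I would prove it by taking a cycle $C$ and a vertex $s$ of $C$ and inducting on the pair (maximum distance from $s$ to $C$, number of vertices of $C$ attaining it). At a farthest vertex $x$, apply (TC) or (QC)-type consequences of the weak quadrangle condition to its two $C$-neighbours. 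This produces a triangle or square that replaces $x$ by vertices closer to $s$.

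This step, specifically the case where both neighbours of $x$ on $C$ are at the same distance $k$ as $x$ or at distance $k-1$, is the main obstacle. I would first try the standard argument from the literature on meshed graphs (Bandelt--Chepoi) and adapt it to the case split.

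\textbf{Step 3: conclusion.} Once $D$ exists, it is defined up to an additive constant. The local rules of Theorem~\ref{meshed1-intro} depend only on differences, so they hold for $D$. Applying them to $D-\min D$ shows that this shifted function is $d(s,\cdot)$ for a root $s$, namely the unique local minimum, and that $\min D$ is attained only at $s$. Hence $L(v)=d(s,v)\bmod 3$.

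Conversely, if $L=d(s,\cdot)\bmod 3$ then every rule is satisfied: adjacent vertices have distances differing by at most $1$, and no triangle is cyclic in terms of distances. For leader election, the leader is the vertex with no neighbour labelled $L(v)-1$. The rules of Theorem~\ref{meshed1-intro} guarantee that only $s$ has this property, so the leader is unique, and each vertex can check locally whether it is the leader.
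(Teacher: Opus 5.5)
\textbf{Gap: rule (i) breaks completeness.} Your rule (i) fails on the very labelings the verifier must accept. Whenever $uv$ is an edge with $d(s,u)=d(s,v)$, the labeling $f_s$ gives $L(u)=L(v)$. This happens in every non-bipartite graph, for instance at the two neighbours of $s$ in $K_3$. Worse, under (i) the three vertices of any triangle carry three distinct labels, so every triangle is oriented cyclically and your triangle test rejects it. Your verifier therefore rejects every graph containing a triangle, including most meshed graphs ($K_3$ itself, chordal and Helly graphs, basis graphs of matroids). Your own completeness check (``adjacent vertices have distances differing by at most $1$'') contradicts (i).

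The natural repair is to take the representative of $L(v)-L(u)$ in $\{-1,0,1\}$ as the edge increment. But then your Step 1 claim that squares are automatic fails. Take a square whose labels read $a,a+1,a+2,a$ cyclically, so that all three labels occur and the two equal labels are adjacent. Its increments are $1,1,1,0$, with sum $3\equiv 0 \pmod 3$ but $\neq 0$. So squares must be tested explicitly. The bad triangles are exactly the fully coloured ones and the bad squares are exactly those forbidden by \rMGs, so you recover the paper's rules \rMGt and \rMGs.

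\textbf{Second gap, in Step 3.} The rules \rDGz and \rDG do not depend only on differences, because whether $D(v)=0$ cannot be seen mod $3$. A purely difference-based test accepts $L=d(s,\cdot)+c\pmod 3$ with $c\neq 0$; for example, $K_2$ labelled $1,2$ is not of the form $f_s$. You need an absolute clause such as \rMG: a vertex with no neighbour labelled $\precc(L(v))$ must have label $0$ and all its neighbours labelled $1$. This forces a minimizer of the lift to satisfy $D\equiv 0 \pmod 3$. Once increments lie in $\{-1,0,1\}$, the translation of \rDM is exactly \rMM.

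\textbf{With both repairs, your route is a genuine alternative.} The paper proves acyclicity of $\oG_L$ via a disk diagram and Musin's Sperner-type lemma (\Cref{th-musin}). You instead observe that \rMGt and \rMGs make the $\{-1,0,1\}$-valued $1$-form vanish on every $2$-cell. Simple connectivity of $X\trsq(G)$ then integrates it to a global lift $D\equiv L$ with $|D(u)-D(v)|\le 1$. You may cite that simple connectivity, \cite[Theorem~9.1]{CCHO}, as the paper does, rather than reprove it. Acyclicity of $\oG_L$ is then immediate, since every arc decreases $D$ by $1$. The induction of Theorem~\ref{th-meshed-distance} applied to $D-\min D$ gives $L=f_s$ and uniqueness of the sink. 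This is more elementary than the Musin argument and yields more, namely an integer potential. The paper's route instead isolates acyclicity as a standalone statement about directed cycles.
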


Theorem~\ref{meshed2-intro} improves and generalizes the result
of~\cite{FeSeSl} as it considers a much larger
class of graphs and since it uses a constant number of labels.  The
class of simply connected subgraphs of the triangular grid considered
in~\cite{chalopin2026stabilising} is a subclass of $K_4$-free bridged
graphs which is in turn a subclass of $K_4$-free dismantlable graphs
considered in~\cite{ChKo}, as well as a subclass of meshed graphs
considered in the current paper.  Dismantlable graphs, also known as
cop-win graphs~\cite{NoWi,Qui83}, also contain several important
classes from metric graph theory such as bridged or Helly
graphs. However the class of $K_4$-free dismantlable graphs considered
in~\cite{ChKo} and the class of meshed graphs are incomparable.

For bridged and Helly graphs, Theorem \ref{meshed2-intro} can be
refined in the following way:

\begin{theorem}\label{bridged-helly2-intro}
  The classes  of (weakly) bridged graphs and Helly graphs admit 
  a $1$-local verifier for the distance mod 3 certification
  problem.
Consequently, there exists a $1$-local leader election labeling
  scheme using labels of constant size for these classes of graphs.
\end{theorem}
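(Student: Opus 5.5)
We need a 1-local verifier for distance mod 3 in weakly bridged and Helly graphs, with labels $L(v)\in\{0,1,2\}$, and we obtain it by lifting mod-3 labels to genuine distance labels and then invoking the 1-local distance verifier of Theorem~\ref{bridged-helly1-intro}. Orient each edge $uv$ from $u$ to $v$ when $L(v)=L(u)+1 \pmod 3$; leave it unoriented when $L(u)=L(v)$. The rules at $v$ are as follows.
\begin{enumerate}[(i)]
\item Adjacent labels differ by at most $1$ modulo $3$. This is automatic for three values, so the real content is the next rule.
\item Read modulo $3$, the labels on $N[v]$ satisfy the same 1-local conditions that Theorem~\ref{bridged-helly1-intro} imposes on the integer labels $D$. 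Here $v$ plays the role of a vertex at level $k$, and its neighbours are at levels $k-1,k,k+1$ according to their residues.
\end{enumerate}
A vertex $v$ with no predecessor is a candidate root. We also require that at most one local configuration of this kind is accepted, namely that a vertex with no predecessor has all neighbours labelled $L(v)+1$.

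The key step is to show that under these rules the function $L$ lifts to an integer function $D$ with $D \equiv L \pmod 3$ and $|D(u)-D(v)|\le 1$ on edges. Define $D$ by choosing a base vertex, integrating the increments $\pm1,0$ along paths, and proving that the sum of increments around every closed walk is $0$.
\begin{itemize}
\item In weakly bridged graphs and Helly graphs, the triangle(-square) complex is simply connected, as recalled from the local-to-global characterizations used elsewhere in the paper. Every cycle is therefore a product of triangles, and of squares in the weakly modular/Helly setting.
\item On a triangle the increments are taken from $\{-1,0,1\}$ and sum to $0 \pmod 3$. Their integer sum lies in $[-3,3]$, so it equals $0$ or $\pm3$, and $\pm3$ is impossible: three steps of $+1$ would require three pairwise distinct labels each incremented by $1$ around the triangle. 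Rule (ii), specialised to triangles, forbids exactly this "rotating" configuration.
\item For squares the same argument uses the local rule that forbids a $4$-cycle winding with labels $0,1,2,0$ in cyclic increasing order. Such a square has increments $+1,+1,+1,0$, summing to $3$.
\end{itemize}
With these forbidden configurations excluded, the homotopy invariance gives a well-defined integer lift $D$.

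After normalising $D$ so that its minimum value is $0$, $D$ satisfies the 1-local rules of Theorem~\ref{bridged-helly1-intro}, because those rules depend only on differences, which $L$ determines. That theorem then gives $D(v)=d(s,v)$ for a unique root $s$, so $L(v)=d(s,v) \pmod 3$. Conversely, true distances mod 3 satisfy all the rules, since triangles and squares in these graphs never wind. The leader is the unique vertex with no neighbour of label $L(v)-1$.

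The main obstacle is the lifting step. We must verify that the forbidden winding triangles and squares can be checked within the $1$-ball, and that the squares needed to generate cycles in Helly and weakly bridged graphs can all be witnessed 1-locally. In the weakly bridged case one can argue with triangles and squares around a common vertex. In the Helly case the argument goes through the clique complex, where triangles alone suffice because Helly graphs have simply connected clique complexes. The argument therefore reduces to triangles, and these are visible in $B_1(v)$.
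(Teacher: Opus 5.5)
Your lifting idea is sound and differs from the paper's route. The increments $\delta(u,v)\in\{-1,0,1\}$ form an antisymmetric $1$-cochain whose sum around a triangle is $0$ unless the triangle carries all three labels. Once such triangles are excluded, they therefore integrate on a simply connected triangle complex to some $D\equiv L \pmod 3$ with $|D(u)-D(v)|\le 1$ on edges. This replaces the paper's use of Musin's Sperner-type lemma (Proposition~\ref{th-musin} inside Lemma~\ref{lem-acyclic}) by elementary cocycle integration. It also gives more: a global potential, which makes acyclicity of $\oG_L$ immediate. No squares are needed for either class: weakly bridged graphs have no induced $C_4$ and, like Helly graphs, a simply connected triangle complex.

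However, the verifier as you describe it is not sound.
\begin{enumerate}
\item The claim that rule (ii) ``specialised to triangles'' forbids a triangle labelled $0,1,2$ is unjustified. At each vertex of such a triangle, the other two appear as one predecessor and one successor. In the integer scheme the triangle is excluded by \rDG at its endpoints, which bounds differences across an edge, and that is exactly the condition that becomes vacuous mod $3$, as your rule (i) notes. Concretely, $K_3$ labelled $0,1,2$ satisfies every rule obtained by reading \rDG, \rDTC, \rDB, \rDH modulo $3$, and also your no-predecessor rule. Yet $\oG_L$ is a directed triangle with no sink. The triangle rule \rMGt must be stated explicitly; it is $1$-local.
\item Your no-predecessor rule omits $L(v)=0$. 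After normalising $\min D=0$ you only get $L\equiv d(s,\cdot)+L(s)\pmod 3$. For example, $K_2$ labelled $1,2$ passes all your rules but is not $d(s',\cdot)\bmod 3$ for any $s'$. Leader election survives this, but soundness for the distance mod $3$ problem fails. The fix is to require, as \rMG does, that a vertex without a predecessor has label $0$.
\end{enumerate}

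Theorem~\ref{bridged-helly1-intro} also cannot be invoked as a black box. Its statement does not specify the rules, so ``they depend only on differences'' is an assumption. Even for the paper's rules, you cannot conclude that the lift satisfies them. \rDG, and the hypothesis of \rDH, concern every vertex with $D(v)>0$, whereas your mod-$3$ rules only constrain vertices that already have a predecessor. Before knowing $D=d(s,\cdot)$ you therefore cannot exclude further local minima of the lift.

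What is needed is the induction on $d(s,v)$ from a global minimum $s$ of $D$. At the step $k\to k+1$, the rules are used only at a vertex $u$ with $d(s,u)=k$ that has a neighbour $w$ with $d(s,w)=k-1$. So the conditional mod-$3$ rules \rMTC and \rMB (respectively \rMH) suffice, but they must be stated explicitly rather than as a reading of an unspecified verifier. This is exactly the content of Lemmas~\ref{lem-bridged-tcb} and~\ref{lem-helly-mh}, which the paper applies to a sink of $\oG_L$.

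Completeness likewise needs more than the absence of winding triangles for $f_s$. You need the triangle condition and convexity of balls for weakly bridged graphs, respectively the Bandelt--Pesch domination property for Helly graphs. With these repairs, your lifting argument gives a complete alternative proof.
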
 

In order to prove Theorems~\ref{meshed2-intro}
and~\ref{bridged-helly2-intro}, we associate a directed graph $\oG_L$
to any graph $G$ and any labeling $L: V(G) \to \{0,1,2\}$. The vertex
set of $\oG_L$ is $V(G)$ and there is an arc $\overrightarrow{vu}$ in
$\oG_L$ if and only if $vu$ is an edge of $G$ and
$L(v)=L(u) +1 \pmod{3}$. In order to prove
Theorems~\ref{meshed2-intro} and~\ref{bridged-helly2-intro}, we prove
that if the local verifier accepts at every vertex of $G$, then $\oG_L$
contains a unique sink that can be designated as the leader.

\begin{proposition}\label{prop-meshed2-intro}
  For any meshed (respectively, weakly bridged or Helly) graph $G$ and
  any labeling $L= V(G) \to \{0,1,2\}$ satisfying the rules of the
  local verifier of Theorem~\ref{meshed2-intro} (respectively, of
  Theorem~\ref{bridged-helly2-intro}), the following hold:
  \begin{enumerate}[(1)]
  \item the directed graph $\oG_L$ is acyclic;
  \item if $s$ is any sink of $\oG_L$, then $L(v)=d(s,v)\pmod{3}$ for any vertex $v\in V$;
  \item $\oG_L$ has a unique sink $s$.\end{enumerate}
\end{proposition}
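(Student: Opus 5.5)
The plan is to pass from the labels $L$ modulo $3$ to an integer-valued potential, and then to reuse the arguments behind Theorem~\ref{meshed1-intro} (respectively, Theorem~\ref{bridged-helly1-intro}). Since the rules of the verifiers of Theorems~\ref{meshed2-intro} and~\ref{bridged-helly2-intro} are not stated above, I assume they contain mod-$3$ analogues of the earlier distance rules. Specifically, I assume they include a condition ensuring that around every triangle and every square of $G$ the signed sum of the label differences, each taken in $\{-1,0,1\}$, is $0$.

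\textbf{Step 1: lift $L$ to an integer potential.} For each edge $uv$, let $\delta(u,v)\in\{-1,0,1\}$ be the unique representative of $L(v)-L(u) \pmod 3$. Then $\oG_L$ has the arc $\overrightarrow{vu}$ exactly when $\delta(u,v)=1$. I would show that $\delta$ sums to zero along every closed walk of $G$. The local rules give this for triangles and squares. Meshed graphs have a simply connected triangle-square complex (this is the fact underlying the recognition results), so every closed walk is a combination of triangles and squares, and the sum vanishes in general.

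Hence there is a function $h:V\to\mathbb{Z}$ with $h(v)-h(u)=\delta(u,v)$ on every edge. This function satisfies $h\equiv L \pmod 3$, and the arc $\overrightarrow{vu}$ is present exactly when $h(v)=h(u)+1$.

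Assertion (1) follows at once: $h$ strictly decreases along arcs, so $\oG_L$ is acyclic. Moreover, $|h(u)-h(v)|\le 1$ on every edge, so $h(v)\le h(s)+d(s,v)$ for all $s,v$.

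\textbf{Step 2: sinks realize distances.} A sink $s$ is a vertex with no neighbour of smaller $h$. For (2), I would prove $h(v)=h(s)+d(s,v)$ by induction on $k=d(s,v)$, mirroring the proof of Theorem~\ref{meshed1-intro} with $h$ in place of $D$. The earlier rules presumably force the following: every vertex $v$ that is not a sink has a neighbour with $h$ one smaller, and neighbours within $B_2(v)$ behave like a distance layering (for instance, two neighbours at equal level with a common lower neighbour, as in the weak quadrangle condition). In the bridged and Helly case the corresponding $1$-local rules play this role.

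Take $v$ with $d(s,v)=k$ and a neighbour $u$ of $v$ with $d(s,u)=k-1$. By induction, $h(u)=h(s)+k-1$, so $h(v)\in\{h(s)+k-2,\,h(s)+k-1,\,h(s)+k\}$. I would rule out the two smaller values by combining the weak quadrangle condition for $s$ with the local rules at $u$ and $v$. These rules are violated whenever the level function "folds back" relative to a true distance layer from a sink.

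Then $L(v)=h(v)-h(s)+L(s)$. Choosing the normalisation in which the sink has label $0$ (presumably a rule at sinks), this gives $L(v)=d(s,v)\pmod 3$.

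\textbf{Step 3: uniqueness.} A sink exists because $\oG_L$ is a finite acyclic digraph. If $s$ and $t$ are both sinks, Step 2 applied to each gives $h(t)=h(s)+d(s,t)$ and $h(s)=h(t)+d(s,t)$. Hence $d(s,t)=0$ and $s=t$, which proves (3).

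I expect the main obstacle to be Step 2, namely excluding $h(v)<h(s)+d(s,v)$. This is where the specific $2$-local rules and the weak quadrangle condition must be used, just as in the integer-label proof. To handle it, I would transport the proof of Theorem~\ref{meshed1-intro} verbatim, replacing $D$ by the lift $h$.
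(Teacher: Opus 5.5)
Your approach is sound. For assertions (1) and (3) it differs genuinely from the paper's; your Step 2 is the paper's induction rewritten on the lift.

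For meshed graphs your guess about the rules is exactly right. \rMGt and \rMGs say precisely that the $\{-1,0,1\}$-valued cochain $\delta$ sums to $0$ around every triangle and every square. A sum of $\pm 3$ arises exactly from a fully coloured triangle, or from a square labelled $a,a,b,c$ in cyclic order with $a,b,c$ distinct. So simple connectivity of $X\trsq(G)$ makes $\delta$ a coboundary, and acyclicity follows with the same generality as Lemma~\ref{lem-acyclic}.

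The paper argues differently. It takes a disk diagram of a putative directed cycle and subdivides each square by an extra vertex carrying a repeated label. It then invokes Musin's Sperner-type result (Proposition~\ref{th-musin}) to find a fully coloured triangle. Your lift captures the same discrete Stokes phenomenon without Musin's lemma or the subdivision. The global potential also turns uniqueness into a one-line symmetric argument: $h(t)=h(s)+d(s,t)$ and $h(s)=h(t)+d(s,t)$. In exchange, the paper's proofs of (2) and (3) (Lemma~\ref{lem-meshed-mod} and the two-sink argument) work purely modulo $3$ and hold in any graph satisfying \rMG and \rMM. There, topology is used only to produce a sink, whereas your route needs simple connectivity before (2) and (3) can be addressed.

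Two points need repair.

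First, the $1$-local Helly verifier consists of \rMG, \rMGt, \rMH only, with no square rule, and Helly graphs may contain squares (e.g.\ $W_4$). So for Helly graphs you must build $h$ from the simple connectivity of the triangle complex $X\tr(G)$, which holds for Helly graphs and is what the paper uses, not from that of $X\trsq(G)$. For weakly bridged graphs there is no issue, since they contain no squares.

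Second, Step 2 is only a plan, and the weak quadrangle condition for $s$ plays no role in it. The verbatim transport works because $h\equiv L \pmod 3$ and $|h(a)-h(b)|\le 1$ on edges. Hence the rules \rMM, \rMTC, \rMB, \rMH are exactly \rDM, \rDTC, \rDB, \rDH for $h$, with the hypothesis $D(v)>0$ of \rDH replaced by ``$v$ is not a sink''. Also, \rMG at a sink $s$ forces $L(s)=0$ and $h(y)=h(s)+1$ for all $y\in N(s)$. In the inductive step, with $d(s,v)=k\ge 2$ and $d(s,u)=k-1$, you also need $w\sim u$ with $d(s,w)=k-2$, so that $v\nsim w$. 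Suppose $h(v)\le h(u)$. Then \rMM at $u$, applied to the pair $v,w$, yields $x\sim v,w$ with $h(x)\le h(w)=h(s)+k-2$. But $d(s,x)=k-1$, so the induction hypothesis gives $h(x)=h(s)+k-1$. The contradiction thus comes from the rule at $u$ alone, exactly as in Lemma~\ref{lem-meshed-mod}.
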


The main idea behind the local certification protocols for distance
and distance mod 3 problems in meshed, bridged and Helly graphs, is to
replace the distance $d(s,v)$ in Triangle and Weak Quadrangle
Conditions by the labels $D(v)$ and $L(v)$. This allows to prove
Theorems~\ref{meshed1-intro} and \ref{bridged-helly1-intro}, as well
as Assertion (2) in \Cref{prop-meshed2-intro}. The proof of Assertion
(1) of \Cref{prop-meshed2-intro} uses the fact that the
triangle-square complexes of meshed graphs are simply
connected~\cite{CCHO} and relies on a generalization of the Sperner's
lemma to disk triangulations due to Musin~\cite{Musin}. Assertion (1)
holds for any graph with a simply connected triangle-square complex,
and thus, it generalizes a result of~\cite{ChKo} established for
dismantlable graphs.

Our third result concerns local recognition of classes of graphs based
on local distance certificates.  A class of graphs $\cG$ admits an
\emph{$r$-local recognition protocol of size $k$} if there exists a
set $S \subseteq \{0,1\}^k$ of labels and an $r$-local verifier $\cA$
(for labeled graphs with labels in $S$) such that a graph $G \in \cG$
if and only if there is a labeling $R: V(G) \to S$ such that the rules
of $\cA$ are satisfied at every vertex $v \in V(G)$. We show that all
the subclasses of meshed graphs we mentioned above admit a local
recognition protocol of size $O(\log D)$ (where $D$ is the diameter of
the graph).

\begin{theorem}\label{local-recogn-intro} Each of the following
  classes of meshed graphs admits an $r$-local recognition protocol
  with labels of size $O(\log D)$:
  \begin{enumerate}[(1)]
  \item chordal and  bridged graphs  with $r=1$,
  \item weakly bridged graphs and Helly graphs with $r=2$,
  \item weakly modular, modular, median, pseudo-modular, bucolic, dual
    polar, and sweakly modular graphs with $r=3$,
  \item basis graphs of matroids and of even $\Delta$-matroids with
    $r=3$.
  \end{enumerate}
\end{theorem}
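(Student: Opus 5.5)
The plan is to combine the distance certification of Theorem~\ref{meshed1-intro} (and Theorem~\ref{bridged-helly1-intro}) with the known local-to-global characterizations of each subclass. For a class $\cG$ in the list, the certificate at $v$ consists of a distance label $D(v)\in\{0,\dots,D\}$ (size $O(\log D)$), possibly plus a constant number of extra bits.

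The verifier runs in three layers. First, it checks locally that $G$ is \emph{locally meshed} or \emph{locally weakly modular}. Concretely, the triangle and quadrangle conditions restricted to pairs at distance at most $2$ or $3$ are verifiable in $B_r(v)$. Second, it runs the rules of Theorem~\ref{meshed1-intro} on $D$. Third, it checks the local condition from the relevant local-to-global theorem. Examples are: flag triangle complex plus no induced $C_4$, $C_5$ for bridged graphs; no induced $W_4$, $W_5$ plus flagness for weakly bridged graphs; clique-Helly on triangles for Helly graphs; forbidden $K_{2,3}$ or $K_{2,3}$-minus-edge for modular and median graphs; the local exchange conditions of Maurer and Chepoi for basis graphs of matroids and even $\Delta$-matroids; and the analogous conditions for bucolic, pseudo-modular, sweakly modular and dual polar graphs. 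These characterizations all state that $G\in\cG$ if and only if the triangle-square complex $X\trsq(G)$ (or the triangle complex) is simply connected and $G$ satisfies the local condition. The radius $r$ in each item is the largest ball needed to test these local conditions: $1$ for the clique or $C_4$/$C_5$ conditions, $2$ for wheels and Helly-type conditions, and $3$ for conditions involving $K_{2,3}$, squares sharing edges, or exchange axioms.

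Completeness is immediate. If $G\in\cG$, then $G$ is meshed, so labeling $D(v)=d(s,v)$ for any $s$ passes the rules of Theorem~\ref{meshed1-intro}, and the local conditions hold by definition.

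Soundness is the key step. Suppose all vertices accept. The obstacle is that Theorem~\ref{meshed1-intro} is stated for graphs already known to be meshed, whereas here $G$ is arbitrary. So I must prove directly that a labeling $D$ satisfying the distance rules, in a graph where all local conditions hold, forces $X\trsq(G)$ to be simply connected. The plan is to use $D$ as a Morse-type function. Every vertex $v$ with $D(v)>0$ has a neighbor with smaller label. The rules (label differences at most $1$ on edges, plus the local triangle and weak-quadrangle rules) ensure that any two neighbors $x,y$ of $v$ with $D(x)=D(y)=D(v)-1$ are either adjacent or have a common neighbor $z$ with $D(z)=D(v)-2$. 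Edges between equal labels are handled by the triangle rule. Then I take any cycle $C$ and homotope it within $X\trsq(G)$, replacing a vertex of maximal label by a path through lower-labeled vertices using a triangle or square. This strictly decreases the multiset of labels on $C$, so $C$ is eventually contracted to the unique label-$0$ vertex. Uniqueness of that vertex also follows from connectivity and the rules. Hence $X\trsq(G)$ is simply connected, and the local-to-global theorem yields $G\in\cG$.

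The delicate step is ensuring that this homotopy argument works without presupposing meshedness, i.e.\ that the 2-local rules really provide the needed triangles and squares. This mirrors the proof of Theorem~\ref{meshed1-intro}, and I expect to reuse its lemmas essentially verbatim.
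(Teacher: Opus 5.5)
Your architecture matches the paper's: distance labels, the local half of a local-to-global theorem, and a homotopy lemma showing that acceptance on an \emph{arbitrary} graph forces simple connectivity (this is Lemma~\ref{lem-recog-sc}). The gap is that the lemma you sketch relies on \rDM, which supplies squares, so it only shows that $X\trsq(G)$ is simply connected. That is enough for items (3) and (4), and for weakly bridged graphs, since once induced $C_4$'s are excluded, $X\trsq(G)=X\tr(G)$. It is not enough for bridged, chordal and Helly graphs, whose characterizations need $X\tr(G)$ simply connected. The difference is real. Label $C_4$ by $d(a,\cdot)$: it satisfies \rDGz, \rDG, \rDTC and \rDM, contains no wheel, is clique-Helly, and $X\trsq(C_4)$ is a disk. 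Yet $C_4$ is neither bridged nor Helly.

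The missing idea is to use the $1$-local rules \rDTC together with \rDB (resp.\ \rDH) instead of \rDM. Then, when the top vertex $v_1$ of the cycle has both cycle neighbors $v_0,v_2$ at label $r-1$, either $v_0\sim v_2$, or there is $y\sim v_0,v_1,v_2$ with $D(y)=r-1$. So the homotopy crosses only triangles.

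This is also what makes $r=1$ possible in item (1). \rDM needs $B_2(v)$, and your test for induced $C_4$, $C_5$ is $2$-local too, since such a cycle need not lie in any $1$-ball. The $1$-local condition is the absence of induced $W_4$ and $W_5$, because each such wheel lies in the $1$-ball of its hub. With that weaker condition, the $C_4$ example shows that $X\trsq(G)$ is not enough. For chordal graphs you also need a local-to-global statement, which the paper proves: chordal graphs are exactly the bridged graphs without induced $W_k$, $k\ge 4$.

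Several smaller points also need fixing:
\begin{enumerate}
\item Forbidding $W_4$ and $W_5$ does not characterize weakly bridged graphs. $W_5$ itself is weakly bridged, so completeness fails. The right local condition is no induced $C_4$ plus the $\widehat{W}_5$-wheel condition.
\item Modular graphs are the triangle-free weakly modular graphs, and median graphs are the $K_{2,3}$-free modular graphs.
\item \rDTC is not one of the rules of Theorem~\ref{meshed1-intro}, and it is distinct from (LTC). Add it to the verifier; it holds because meshed graphs satisfy TC$(s)$.
\item \rDM only yields a common neighbor $z$ with $D(z)\le D(v)-1$, not $D(z)=D(v)-2$. The weaker bound suffices.
\item Inserting the \rDTC-vertex into an edge $v_0v_1$ with $D(v_0)=D(v_1)=r$ deletes nothing, so the multiset of labels does not decrease. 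The paper instead minimizes, lexicographically, the maximum label $r$, then the number of edges with both ends labeled $r$, then the number of vertices labeled $r$.
\end{enumerate}
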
 

As explained above, one can recognize cycles without certificates
(i.e., $k = 0$). But it is well-known
(see~\cite{feuilloley2019introduction} for example) that 
for any fixed radius $r$ and any fixed $p \geq 2r+2$, in order to
recognize all paths and reject the cycle $C_p$ of length $p$ with an $r$-local recognition protocol, one needs labels of size $\Omega(\log n) = \Omega(\log
D)$. This is due to the fact that in $C_p$, all $r$-balls are paths. This implies that in order to recognize a class of graphs that
contains all paths but no long cycles, one needs to use labels of size
$\Omega(\log D)$. This lower bound then applies to the class of
meshed graphs, as well as to all the subclasses mentioned in Items (1),
(2), and (3) in Theorem~\ref{local-recogn-intro}, establishing that
the sizes of the labels we use in these cases are optimal.

Our local recognition method is based on two ingredients: simple
connectivity of the triangle-square complexes of all meshed graphs and
local-to-global characterizations of many subclasses of meshed
graphs. These local-to-global characterizations are all of the form: a
graph $G$ belongs to a class $\cG$ if and only if its
triangle-square complex is simply connected and $G$ satisfies a
specific local condition (say, all small balls of $G$ look like balls
in graphs from $\cG$). Such type of results are called Cartan-Hadamard
theorems and originate from Riemannian geometry.  While general meshed
graphs do not admit local-to-global characterizations~\cite{CCHO}, all
their subclasses mentioned in this paper can be characterized in this
way. This is the fruit of a long line of research, which started with
the paper \cite{Ch_CAT} and culminated with the papers \cite{ChChOs}
and \cite{CCHO}. Checking if the triangle-square complex $X\trsq(G)$
of a given graph $G$ is simply connected is an undecidable
problem~\cite{Hak}. However, we show that if the local verifier of
Theorem~\ref{meshed1-intro} accepts a graph $G$ with some labeling
$L: V(G) \to \N$, then $X\trsq(G)$ is simply connected. We also know
that for any meshed graph $G$, $G$ is recognized by this verifier when
$L = d(s,\cdot)$ for some vertex $s$ of $G$. Consequently, we have a
proof labeling scheme that accepts all meshed graphs, rejects all
graphs that do not have a simply connected triangle-square complex,
and behaves on an undefined way on graphs  that are not meshed but
have simply connected triangle-square complexes. This property is then
sufficient to obtain local recognition protocols for the metric
classes of graphs we consider from their existing local-to-global
characterizations.

\subsection{Related work} In this subsection, we briefly present  the related work on locally checkable labelings and proofs and the general framework and the main motivations behind meshed graphs and their subclasses.  

\subsection*{LCL and LCP} 
In the LOCAL model introduced by Linial~\cite{linial1987local,linial1992journal}, it is assumed that nodes exchange messages with neighboring nodes until a task is solved. The complexity of an algorithm within this model only depends on the number of synchronous communication rounds. An alternative definition of the model that is more appropriate for our setting, is to assume that every node can see the state of each node in its neighborhood up to some radius $r$ and based on this partial knowledge of the graph, choose an output. The goal in this case is to minimize $r$ and the size of states for some given task. What can be computed within this setting has received significant attention (e.g., \cite{vishkin1986determinisitic,goldberg1987parallel,luby1986simple,alon1986fast}). The concept of locally verifiable problems was formalized by Naor and Stockmeyer in their seminal paper~\cite{naor1993lcl,naor1995journal} in which they defined the family of \emph{Locally Checkable Labeling} (LCL) problems. A problem is in LCL if given the view, consisting of constant-sized input and output labels, at some constant radius $r$ for each node in a class of bounded degree graphs, there exists an algorithm that can decide whether a computed solution is correct. Although LCL problems were continually explored since their introduction, no groundbreaking results emerged until 2016 when Brandt et al.~\cite{brandt2016lower} led to a series of papers (e.g., \cite{balliu2018classes, balliu2018locally, balliu2020randomness}) that gradually provided a complete picture on the classification of the round complexity of all LCL problems in four broad classes \cite{suomela2020landscape}. Recently, Bousquet et al.~\cite{feuilloley2025certification} proposed a different direction of classification of LCL problems based on space instead of round complexity. In this paper we consider large classes of graphs from metric graph theory, namely \emph{meshed graphs} and their subclasses. In all our results, the radius $r$ is a small constant (usually, $r\in \{ 2,3\}$). Our approach is complementary to that of \cite{feuilloley2025certification} in the sense that we only consider specific problems but for much larger classes of graphs.

A generalization of LCL, called \emph{Locally Checkable Proofs (LCP)}, was introduced in \cite{goos2016locally}. The main difference between the two approaches is that the goal in LCL is to locally solve a global problem whereas in LCP (and in the related model of \emph{Proof Labeling Schemes (PLS)} \cite{korman2005proof,korman2010proof}) the solution to some problem can be computed globally and it is only \emph{verified} locally. That is, in LCP each node $v$ is provided with a certificate: if the certificate of $v$ and the certificates of the nodes in the graph induced by the $r$-neighborhood of $v$, for some constant $r$, satisfy a number of given conditions {(this is also called an \emph{$r$-local verifier})} then $v$ accepts the local solution. If every node in the graph also accepts, then the solution is globally correct. {Proof Labeling Schemes are closely connected to LCP with two key differences: (i) in the classical PLS model, each node only has access to certificates of its neighbors at distance one (as opposed to any constant $r$ in LCP) and (ii) the nodes do not have access to the identifiers and input of their neighbors. Unlike PLS and other models such as non-deterministic local decision}, in LCP the verification of a solution is {therefore} only limited by locality (i.e., the chosen radius $r$). In more recent work, such as \cite{FeSeSl} the concept of \emph{$r$-local} PLS is used, nullifying the first difference between the two models. When considering only anonymous nodes and when the local conditions do not depend on the input of the neighbors (as it is the case in the current paper or in~\cite{FeSeSl}), the concepts of LCP and $r$-local PLS can be used interchangeably. Finally, as LCP was introduced as a generalization of LCL, in the particular case where all certificates given to nodes are empty, LCP and LCL are identical.

\subsection*{Meshed graphs and their subclasses} The graphs studied in this paper are defined or characterized by metric properties. Some of these properties are discrete analogs of basic properties of classical metric spaces: Euclidean and hyperbolic spaces and their generalization---CAT(0) spaces (geodesic spaces of nonpositive global curvature), Manhattan $\ell_1$-spaces and hypercubes,  and $\ell_\infty$-spaces and their generalization---injective spaces. The resulting classes  of graphs share common metric properties, dubbed weak modularity and meshedness, which allow to study them from a unified point of view. Many of them have been characterized topologically, in a local-to-global way. This allowed to establish several correspondences between graph classes from metric graph theory and classes of simplicial and cube complexes studied in geometric group theory. 

For example, bridged graphs are the graphs in which the balls around
convex sets are convex \cite{FaJa,SoCh}. Convexity of balls and of
balls around convex sets are fundamental properties of convexity in
Euclidean spaces, which is also important in CAT(0) geometry
\cite{BrHa}. For graphs, this convexity property turned out to be
equivalent to yet another truly metric property: bridged graphs are
the graphs in which all isometric cycles have length 3
\cite{FaJa,SoCh}. Classical examples of bridged graphs are chordal
graphs and the triangular grid. A local-to-global characterization of
bridged graphs was given in \cite{Ch_CAT}: they are the graphs whose
clique complex is simply connected and that do not contain $4$-wheels
and $5$-wheels. These graphs were rediscovered~\cite{JaSw} in
geometric group theory as the $1$-skeletons of systolic complexes.
Similar correspondences hold for Helly and median graphs. Helly graphs
are the discrete analogs of $\ell_{\infty}$- and injective spaces
since they are the graphs in which balls satisfy the Helly property:
any collection of pairwise intersecting balls has a nonempty
intersection.Several metric characterizations of Helly graphs have
been obtained in \cite{BP-absolute,BaPr} and a local-to-global
characterization of clique complexes of Helly graphs was given in
\cite{CCHO}. Helly graphs are universal in the following sense: any
graph $G$ is an isometric subgraph of a Helly graph.
Finally, median graphs are the discrete analogs of
$\ell_1$-spaces. Examples of median graphs are grids, trees,
and hypercubes. They are precisely
the 1-skeleta of CAT(0) cube complexes~\cite{Ch_CAT,Ro} that are an
important object of study in geometric group theory~\cite{Gromov}. 

Albeit featuring three different types of classical metrics ($\ell_2$,
$\ell_1$, and $\ell_\infty$), bridged, median, and Helly graphs share
two common distance properties, called the Triangle Condition (TC) and
the Quadrangle Condition (QC) (see Section~\ref{sec-wm}). The graphs
satisfying (TC) and (QC) are called \emph{weakly
  modular}~\cite{BaCh_helly,Ch_metric}. They comprise several other
interesting classes of graphs, such as modular, pseudo-modular, dual
polar, sweakly modular, and bucolic graphs. \emph{Meshed graphs} were
introduced in~\cite{BaMuSo}, as a generalization of weakly modular
graphs and are defined by a weaker version (QC$^-$) of the Quadrangle
Condition.  Though meshed graphs have been introduced to capture
distance properties of the icosahedron~\cite{BaMuSo}, two important classes of
graphs have been shown in \cite{Ch_delta} to be meshed (but not weakly
modular): basis graphs of matroids and of even $\Delta$-matroids. Even
if meshed graphs have simply connected triangle-square complexes,
it was shown in \cite{CCHO} that general meshed graphs cannot be
characterized in a local-to-global way. However, all the subclasses of
meshed graphs mentioned above admit such
characterizations~\cite{BrChChGoOs,CCHO,ChOs} that are at the heart of
their local recognition (Theorem \ref{local-recogn-intro}).

\section{Preliminaries}

\subsection{Graphs}
All graphs $G=(V(G),E(G))$ considered in this paper are finite,
undirected, connected, and simple. For two distinct vertices
$u,v\in V$ we write $u\sim v$ if $u$ and $v$ are connected by an edge
and $u\nsim v$ otherwise.  As usual, $N(v)$ denotes the set of
neighbors of a vertex $v$ in $G$ and $N[v]=N(v)\cup \{ v\}$.  The
subgraph of $G$ \emph{induced by} a subset $A\subseteq V(G)$ is the
graph $G[A]=(A,E')$ such that $uv\in E'$ if and only if $uv\in
E(G)$. A \emph{triangle} of $G$ is a cycle of length $3$ and a
\emph{square} of $G$ is an induced cycle of size $4$.  The
\emph{distance} $d(u,v)=d_G(u,v)$ between two vertices $u$ and $v$ of
$G$ is the length of a shortest $(u,v)$--path.  
For a vertex $v$ of $G$ and an integer $r\ge 1$, we will denote by
$B_r(v)$ the \emph{ball} in $G$ of radius $r$ centered at $v$, i.e.,
$B_r(v)=\{ x\in V: d(v,x)\le r\}$.  More generally, the
$r$--\emph{ball around a set} $A\subseteq V(G)$ is the set
$B_r(A)=\{ v\in V(G): d(v,A)\le r\},$ where
$d(v,A)=\text{min} \{ d(v,x): x\in A\}$. The \emph{interval}
$I(u,v)$ between $u$ and $v$ consists of all vertices on shortest
$(u,v)$--paths.  A subgraph
$H=(V(H),E(H))$ is an \emph{isometric} subgraph of $G=(V(G),E(G))$ if
$V(H) \subseteq V(G)$ and for any $u,v \in V(H)$, $d_H(u,v) = d_G(u,v)$ 
and is a \emph{convex} subgraph of $G$ if $I(u,v)\subseteq V(H)$ for 
any two vertices $u,v\in V(H)$.

\subsection{Cell complexes of graphs}

All cell complexes considered in this paper are
CW-complexes~\cite{Ha}. Given a cell complex $X$, its $1$-skeleton
$G(X)$ is the graph whose vertices are the $0$-cells of $X$ and the
edges are the $1$-cells of $X$. More generally, the
\emph{$k$-skeleton} $X^{(k)}$ of $X$ is the cell complex containing
only the cells of $X$ of dimension at most $k$. As morphisms between
cell complexes we always consider \emph{cellular maps}, that is, maps
sending the $k$--skeleton into the $k$--skeleton.

For a graph $G$, we define its \emph{triangle} (respectively,
\emph{square}) complex $X\tr(G)$ (respectively, $X\sq(G)$) as a
two-dimensional cell complex where the $0$-cells are the vertices of
$G$, the $1$-cells are the edges of $G$ and the $2$-cells are (solid)
triangles (respectively, squares) whose boundaries are identified by
isomorphisms with (graph) triangles (respectively, squares) in $G$. A
\emph{triangle-square complex} $X\trsq(G)$ is defined analogously, as
the union of $X\tr(G)$ and $X\sq(G)$.  A triangle-square complex is
\emph{flag} if it coincides with the triangle-square complex of its
1-skeleton.

A cell complex $X$ is called \emph{simply connected} if it is
connected and if every continuous mapping of the 1-dimensional sphere
$S^1$ into $X$ can be extended to a continuous mapping of the disk
$D^2$ with boundary $S^1$ into $X$. Note that $X$ is connected iff
$G(X)$ is connected, and $X$ is simply connected iff its 2-skeleton
$X^{(2)}$ is simply connected. Equivalently, a cell complex $X$ is
simply connected if $X$ is connected and every cycle $C$ of its
1-skeleton is null-homotopic, i.e., it can be contracted to a single
point by elementary homotopies. This is also equivalent to the
existence of disk diagrams, which we define next. Let $X$ be a cell
complex and $C$ be a cycle in the $1$--skeleton of $X$. Then a pair
$(D, \varphi)$ is called a \emph{disk diagram} (or Van Kampen diagram)
if $D$ is a plane graph and $\varphi: D\rightarrow X$ is a cellular
map such that $\varphi$ maps the boundary of $\partial D$ to $C$ (for
more details see \cite[Chapter V]{LySch}).  According to Van Kampen's
lemma \cite[pp.\ 150--151]{LySch}, for every cycle $C$ of a simply
connected cell complex, one can construct a disk diagram.
If $X$ is a simply connected
triangle-square complex, then for each cycle $C$ all inner faces in a
disk diagram $(D,\varphi)$ of $C$ are triangles or squares and the
image of each triangle (respectively, square) of $D$ is a triangle
(respectively, a square) of $X$.

\subsection{Proof Labeling Schemes}

We follow the notations of~\cite{FeSeSl} (see
also~\cite{goos2016locally}). A labeled graph $(G,L)$ is a graph where
each vertex $v$ is initially given a label $L(v)$.  For a family of
(labeled) graphs $\cG$, a \emph{property} $\cP$ is a subset of $\cG$ that is
closed under isomorphism.

A \emph{certificate assignment} (or a proof) $P$ for a graph
$G \in \cG$ is a function $P: V(G) \to \{0,1\}^*$ that associates with
each vertex a \emph{certificate}. The certificate is of size $k \in \N$
if $P(v) \leq k$ for every $v \in V(G)$. A \emph{verifier} $\cA$ is a
function that take a graph $G \in \cG$, a certificate assignment $P$
for $G$, and a vertex $v$ of $G$ and outputs $\cA(G,P,v) \in
\{0,1\}$. A verifier $\cA$ is \emph{$r$-local} if the output of $\cA$
depends only on the vertices located in the ball of radius $r$ around $v$, i.e.,
$\cA(G,P,v) = \cA(B_r(v), P_{|B_r(v)},v)$ for every $G \in \cG$ and
every $v \in V(G)$.

A graph property $\cP \subseteq \cG$ admits a \emph{proof labeling
  scheme} of radius $r$ and of size $\kappa: \cP \to \N$ in the family $\cG$
if there is an $r$-local verifier $\cA$ such that the following
properties hold:
\begin{enumerate}[(i)]
\item \emph{Completeness}: for any $G \in \cP$, there exists a certificate
  assignment $P$ for $G$ of size $\kappa(G)$ such that
  $\cA(G,P,v) = 1$ for each $v \in V(G)$.
\item \emph{Soundness}: for any $G \in \cG \setminus \cP$, for any
  certificate assignment $P$ for $G$, there exists $v \in V(G)$ such
  that $\cA(G,P,v) = 0$. 
\end{enumerate}

In this paper, we provide proof labeling schemes for some recognition
problems in general graphs and for leader election in specific classes
of graphs.  A family of graphs $\cP$ admits an \emph{$r$-local
  recognition protocol of size $k$} if there exists an $r$-local proof
labeling scheme of size $k$ for the class $\cP$ in the family of all
unlabeled graphs.

In order to express the leader election problem as a property, we
assume that each vertex $v$ is initially assigned a label
$I(v) \in \{0,1\}$ and we want to recognize if there is exactly one
vertex in the graph that is labeled by $1$. We say that a family of
graphs $\cG$ admits an \emph{$r$-local leader election labeling scheme
  of size $s$} if for the family of graphs $\cG$ where each vertex is
initially labeled $0$ or $1$, there exists an $r$-local proof labeling
scheme of size $s$ recognizing all graphs of $\cG$ that contain
exactly one vertex $v$ with $I(v) = 1$. In general, we do not
explicitly refer to the initial labeling $I$ and we say that the proof
labeling scheme can distinguish a leader from the other vertices.

For both recognition and leader election, we design proof labeling
schemes where the certificate given to each vertex is either an integer
in $\N$, or an integer in $\{0,1,2\}$. We provide local verifiers that
ensure that in each graph $G$ endowed with such a numbering, each
vertex $v$ is labeled either by $d(s,v)$ or by $f_s(v) = d(s,v) \pmod{3}$,
where $d(s,v)$ is its distance to a root $s \in V(G)$.  We say that a
class $\cG$ admits an \emph{$r$-local distance certification protocol}
if there exists an $r$-local verifier that accepts at every vertex $v$
in a graph $G \in \cG$ if and only if there exists a vertex $s$ such
that each vertex $v \in G$ is labeled by $d(s,v)$. Similarly, the
class $\cG$ admits an \emph{$r$-local distance mod 3 certification
  protocol} if there exists an $r$-local verifier that accepts at every
vertex $v$ in a graph $G \in \cG$ if and only if there exists a vertex
$s$ such that each vertex $v \in G$ is labeled by $d(s,v)
\pmod{3}$. 

Observe that for both kinds of local certification protocols, if a
graph $G$ is accepted, then the root $s$ is necessarily unique and can
be recognized locally. Indeed, in the case of a local distance
certification protocol, $s$ is the only vertex labeled by $0$. In 
the case of a local distance mod 3 certification protocol, $s$ is the
only vertex $v$ that does not have a neighbor $u$ such that
$d(s,u) = d(s,v) -1 \pmod{3}$. Consequently, one can obtain an
$r$-local leader election protocol of size $O(\log D)$ (respectively,
$O(1)$) from a local distance (respectively, local distance mod $3$)
certification protocol by checking that the root $s$ is the only
vertex $v$ with $I(v)=1$.

\section{Meshed and weakly modular graphs}\label{sec-wm} To define meshed and weakly modular graphs, first we introduce three
conditions, which are defined globally but concern distances from a
vertex $s$ to two vertices $u,v$ at distance 1 or 2 (see
Figure~\ref{fig-TC-QC}):

\begin{enumerate}
\item[(TC)] Triangle Condition: for any  $u,v,s \in V(G)$ with
  $d(u,v)=1$ and $d(u,s)=d(v,s)$, there exists a common neighbor $x$
  of $u$ and $v$ such that $d(s,x)=d(s,u)-1$.
\item[(QC)] Quadrangle Condition: for any $u,v,w,s\in V(G)$ with
  $d(u,w)=d(v,w)=1$, $d(u,v) = 2$, and $d(s,u)=d(s,v)=d(s,w)-1,$ there
  exists a common neighbor $x$ of $u$ and $v$ such that
  $d(s,x)=d(s,u)-1$.
\item[(QC$^-$)] Weak Quadrangle Condition: for any  $u,v,s\in V(G)$ with
$d(u,v)=2$, there exists a common neighbor $x$ of $u$ and $v$ such that $2d(s,x)\le d(s,u)+d(s,v).$
\end{enumerate}

If $s$ is fixed and (TC), (QC), or (QC$^-$) holds for any vertices
$u,v$, then we refer to such condition as (TC$(s)$), (QC$(s)$), or
(QC$^-(s)$). Since if $x\sim u,v$, $d(s,x)$ differs by at most 1 from
each $d(s,u)$ and $d(s,v)$, (QC$^-$) is equivalent to the two
following conditions (see Figure~\ref{fig-TC-QC}).
\begin{enumerate}
\item[(QC$_1^-$)] for any $u,v,w,s\in V(G)$ with $d(u,w)=d(v,w)=1$,
  $d(u,v)=2$ and $d(s,u) = d(s,w) = d(s,v) +1$, there exists a common
  neighbor $x$ of $u$ and $v$ such that $d(s,x) = d(s,v)$.
\item[(QC$_2^-$)] for any $u,v,w,s\in V(G)$ with $d(u,w)=d(v,w)=1$,
  $d(u,v) = 2$, and $d(s,u)=d(s,v)=d(s,w)-1,$ there exists a common
  neighbor $x$ of $u$ and $v$ such that
  $d(s,u)-1 \leq d(s,x) \leq d(s,u)$.
\end{enumerate}

\begin{figure}[ht]
\begin{center}
\includegraphics{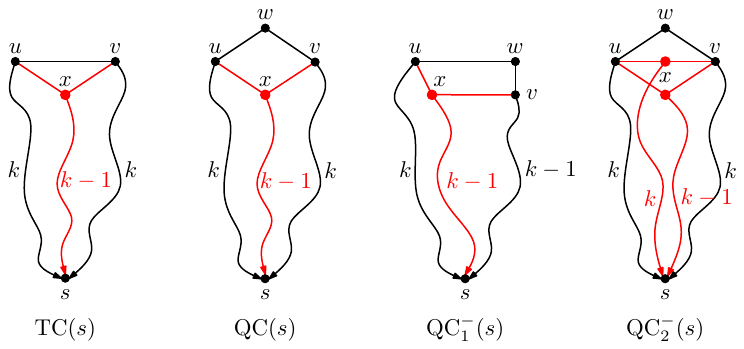}
\end{center}
\caption{The triangle, quadrangle and weak quadrangle
  conditions}\label{fig-TC-QC}
\end{figure}

\begin{definition} [Meshed and weakly modular graphs] 
A graph $G$ is  \emph{meshed}  if $G$ satisfies  (QC$^-$) and $G$ is  
\emph{weakly modular} if $G$ satisfies (TC) and (QC).  Analogously, $G$ is  \emph{meshed with respect to $s$}   if $G$ satisfies  (QC$^-(s)$) and \emph{weakly modular with respect to $s$} if $G$ satisfies (TC$(s)$) and (QC$(s)$).
\end{definition}

Weakly modular graphs have been introduced in \cite{BaCh_helly,Ch_metric} with the purpose of giving a common generalization of many classes of graphs defined by distance properties. Meshed graphs have been introduced in
the unpublished paper \cite{BaMuSo} as a generalization of weakly modular graphs. Indeed, (TC) and (QC) imply
(QC$^-$) and thus weakly modular graphs are meshed. On the other hand, the weak quadrangle condition (QC$^-$) implies the triangle condition (TC). Weakly modular graphs and their subclasses have been studied in numerous papers from metric graph theory and geometric group theory, see for example the survey \cite{BaCh_survey}, the monograph \cite{CCHO}, and some references below. Meshed graphs comprise several important classes of graphs (e.g., basis graphs of matroids) which are not weakly modular. They have been further studied in the papers \cite{BaCh_median,CCHO,Ch_delta,Ch_S3}. We continue with some properties of these classes and with the definition of their subclasses.

Three vertices $v_1,v_2,v_3$ of a graph $G$ form a {\it metric triangle}
$v_1v_2v_3$ if the intervals $I(v_1,v_2), I(v_2,v_3),$ and
$I(v_3,v_1)$ pairwise intersect only in the common end-vertices, i.e.,
$I(v_i, v_j) \cap I(v_i,v_k)= \{v_i\}$ for any $1 \leq i, j, k \leq 3$.
If $d(v_1,v_2)=d(v_2,v_3)=d(v_3,v_1)=k,$ then this metric triangle is
called {\it equilateral} of {\it size} $k.$ An equilateral metric
triangle $v_1v_2v_3$ of size $k$ is called \emph{strongly equilateral}
if $d(v_1,v)=k$ for all $v\in I(v_2,v_3)$. Weakly modular graphs are
characterized by their metric triangles.

\begin{lemma}[\!\!\cite{Ch_metric}]\label{strongly-equilateral}   A graph $G$ is weakly modular
  if and only if all metric triangles of $G$ are strongly equilateral.
\end{lemma}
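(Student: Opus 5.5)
We prove the two directions separately. In both, the work is to pass between the distance conditions (TC), (QC) and the shape of metric triangles.

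\emph{Weakly modular $\Rightarrow$ strongly equilateral.} Let $v_1v_2v_3$ be a metric triangle.

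First we show that it is equilateral. Suppose $d(v_1,v_2)<d(v_1,v_3)$, and set $s=v_1$. Walk from $v_3$ toward $v_2$ along a shortest $(v_3,v_2)$-path; distances to $s$ change by at most one per step. We use (TC$(s)$) and (QC$(s)$) to find a neighbor of $v_3$ in $I(v_3,v_2)$ that is also closer to $v_1$:
\begin{itemize}
\item Take $y\in I(v_3,v_2)$ adjacent to $v_3$. If $d(s,y)<d(s,v_3)$, then $y\in I(v_3,v_1)\cap I(v_3,v_2)$, contradicting that the triangle is metric.
\item If $d(s,y)=d(s,v_3)$, apply (TC) to the edge $yv_3$. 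This gives a common neighbor $x$ closer to $s$, which again lies in $I(v_3,v_1)$; one then checks that $x\in I(v_3,v_2)$ as well, using the distance $d(v_2,\cdot)$.
\item If $d(s,y)=d(s,v_3)+1$, apply (QC).
\end{itemize}

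To avoid this delicate case analysis we use the standard induction instead. Fix $s=v_1$ and induct on $d(v_2,v_3)$. Choose neighbors $v_2'$ of $v_2$ in $I(v_2,v_3)$ and $v_3'$ of $v_3$ in $I(v_3,v_2)$. Using (TC) and (QC) with base point $v_1$, show that $d(v_1,v_2')=d(v_1,v_2)$ and $d(v_1,v_3')=d(v_1,v_3)$: any decrease would violate metricity, and any increase leads, via (QC), to a common neighbor lying in two intervals. Propagating this along the shortest path gives $d(v_1,v_2)=d(v_1,v)$ for every $v\in I(v_2,v_3)$. Taking $v=v_3$ yields $d(v_1,v_2)=d(v_1,v_3)$. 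Symmetry then gives equilaterality, and the same propagation argument gives strong equilaterality.

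\emph{Strongly equilateral $\Rightarrow$ weakly modular.}
\begin{itemize}
\item \emph{(TC).} Take $u\sim v$ with $d(s,u)=d(s,v)=k$. Choose a vertex $s'\in I(u,s)\cap I(v,s)$ farthest from $s$, and then, inside these intervals, extract a metric triangle $u'v's'$ with $u'\in I(u,s')$ and $v'\in I(v,s')$. This is the standard "quasi-median" construction: successively move $u$ and $v$ toward $s'$ while staying on geodesics, as long as intervals overlap. Since $d(u',v')\le 1$ and the triangle is equilateral, it has size $1$ (a triangle $uvx$ with $x$ closer to $s$) or size $0$ ($s'=u=v$, impossible). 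Care is needed in showing $u'=u$ and $v'=v$; this follows because $d(s,u)=d(s,v)$ together with equilaterality forces $d(u',s')=d(v',s')$, and distances along the geodesics are preserved.
\item \emph{(QC).} Take $u,v\sim w$ with $d(u,v)=2$ and $d(s,u)=d(s,v)=k=d(s,w)-1$. Form a quasi-median $u'v's'$ of $u,v,s$ in the same way. Since $w\notin I(u,s)$, the only options are size $1$ or size $2$. For size $2$ we have $u'=u$, $v'=v$, and strong equilaterality gives $d(s',w)=2$ for $w\in I(u,v)$. But $d(s,w)=k+1$ with $d(s,s')=k-2$ forces $d(s',w)=3$, a contradiction. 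Size $1$ with $u'=u$, $v'=v$ would give $u\sim v$, which is impossible. Hence one of $u',v'$ moved toward $s$, producing the required common neighbor.
\end{itemize}

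\textbf{Main obstacle.} The hardest step is the first direction: controlling how $d(v_1,\cdot)$ behaves along $I(v_2,v_3)$ using only the local conditions. The plan is an induction on $d(v_2,v_3)+d(v_1,v_2)$ in which each application of (TC) or (QC) either produces a vertex contradicting the pairwise-disjointness of the intervals or keeps the distance constant.
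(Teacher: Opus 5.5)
The paper gives no proof of this lemma; it is quoted from \cite{Ch_metric}. Your argument therefore has to stand on its own.

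\textbf{Converse direction.} This is the standard quasi-median argument, and it is fine apart from small slips:
\begin{itemize}
\item In (TC), $u'=u$ simply because $u'\in I(u,v)\cap I(u,s)=\{u\}$ (as $d(s,u)=d(s,v)$), and $v'=v$ symmetrically.
\item In (QC) you omit the size-$0$ case, and the remark ``$w\notin I(u,s)$'' plays no role. This is harmless: whenever $u'\neq u$ (or $v'\neq v$), the moved vertex is already a common neighbour of $u,v$ at distance $k-1$ from $s$. Only the case $u'=u$, $v'=v$ needs strong equilaterality.
\end{itemize}

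\textbf{Forward direction: the increase case is not excluded.} This direction is not proved. Its decisive step says that for the neighbour $v_2'$ of $v_2$ in $I(v_2,v_3)$, an increase $d(v_1,v_2')=d(v_1,v_2)+1$ is impossible ``via (QC)''. You exhibit no (QC) configuration, and an increase along the single edge $v_2v_2'$ does not produce two vertices at distance $2$ with a farther common neighbour. Moreover, the claim cannot be derived from (TC) and (QC) based at $v_1$ alone, which is what you propose. Here is a counterexample.
\begin{itemize}
\item Take an induced square $v_1v_2xy$ and add a vertex $v_3$ adjacent to $x$ and $y$.
\item Then $I(v_1,v_2)=\{v_1,v_2\}$, $I(v_1,v_3)=\{v_1,y,v_3\}$ and $I(v_2,v_3)=\{v_2,x,v_3\}$. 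So $v_1v_2v_3$ is a metric triangle with sides $1,2,2$.
\item (TC$(v_1)$) holds: the only edge whose ends are equidistant from $v_1$ is $xv_3$, and $y$ serves it.
\item (QC$(v_1)$) holds: the only pair at distance $2$ whose vertices are equidistant from $v_1$ is $v_2,y$, and $v_1$ serves it.
\item Here $v_2'=x$ and $d(v_1,x)=d(v_1,v_2)+1$.
\end{itemize}
What rules this out is (TC$(v_3)$) applied to the edge $v_1v_2$. So any correct proof must also use the conditions based at $v_2$ or $v_3$.

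\textbf{Forward direction: the propagation has no inductive hypothesis.} After one step, $v_1v_2'v_3$ is no longer a metric triangle. Its sides would be $k,k,k-1$, and in the triangular grid (for $k\ge 2$) it already fails metricity at $v_1$. So neither your metricity argument nor an induction on $d(v_2,v_3)$ over metric triangles applies to it.

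What you need is an induction over \emph{all} metric triangles that passes to genuinely smaller ones. For $x\in I(v_2,v_3)$, take the quasi-medians of $v_1,v_2,x$ and of $v_1,x,v_3$. These keep $v_2$ (resp.\ $v_3$) fixed. Together with the induction hypothesis they give $d(v_1,x)\ge\max\{d(v_1,v_2),d(v_1,v_3)\}$ for inner vertices $x$. You then need a separate argument excluding the increase, using (TC)/(QC) based at $v_2$ or $v_3$.

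Your ``Main obstacle'' paragraph describes such a plan but does not carry it out. So the implication ``weakly modular $\Rightarrow$ all metric triangles strongly equilateral'' remains unproved.
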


A \emph{median} $m$ of a triplet of vertices $x,y,z$ is a vertex that
lies simultaneously on shortest paths from $x$ to $y$, from $x$ to
$z$, and from $y$ to $z$, i.e.,
$m \in I(x,y) \cap I(x,z) \cap I(y,z)$. For a triplet of vertices, a
median may not exist or may not be unique.

The following classes of graphs are subclasses of weakly modular
graphs (the proof follows in most cases from Lemma
\ref{strongly-equilateral}):
\begin{itemize}
\item{\emph{Modular graphs}:} Each triplet of vertices $x,y,z$ has a median,
  i.e., each metric triangle has size 0;
\item{\emph{Median graphs}:} Each triplet of vertices $x,y,z$ has a unique median (trees, hypercubes and multidimensional grids are simple examples of median graphs);
\item{\emph{Pseudo-modular graphs}~\cite{BaMu}:}  Each metric triangle has size 0 or 1; 
\item{\emph{Helly graphs}}~\cite{BP-absolute,BaPr}: The balls satisfy
  the Helly property: any family of pairwise intersecting balls has a
  non-empty intersection (an example of a Helly graph is the king grid
  ${\mathbb Z}_{\infty}^d$ obtained from ${\mathbb Z}^d$ by replacing
  each cube by a clique) ;
\item{\emph{Bridged graphs}}~\cite{FaJa,SoCh}: Balls around convex
  sets are convex and equivalently all isometric cycles have length 3
  (chordal graphs are examples of bridged graphs);
\item{\emph{Weakly bridged graphs}}~\cite{ChOs}: Weakly modular graphs
  with convex balls;
\item{\emph{Sweakly modular graphs}}~\cite{CCHO}: Weakly modular
  graphs without $K_4^-$ and $K^-_{3,3}$ ($K_4^-$ and $K^-_{3,3}$ are
  the graphs obtained respectively by removing an edge from the
  complete graph $K_4$ and the complete bipartite graph $K_{3,3}$);
\item{\emph{Dual polar graphs}}~\cite{Ca}: The incidence graphs of
  maximal subspaces of polar spaces and equivalently thick weakly
  modular graphs not containing $K_4^-$ and $K^-_{3,3}$ (a graph is
  \emph{thick} if every pair of vertices $u,v$ at distance $2$ have two
  non-adjacent common neighbors);
\item{\emph{Bucolic graphs}}~\cite{BrChChGoOs}: Weakly modular graphs not
  containing $K_{2,3}, W_4$ and $W_4^-$ (the wheel $W_k$ is obtained
  by adding a universal vertex to the cycle $C_k$, and $W_k^-$ is
  obtained by removing an edge from $W_k$);
\item{\emph{Cage-amalgamation graphs}} \cite{BrChChKoLaVa}: Weakly
  modular graphs not containing $K_{2,3}, W_4^-$, and $W_k$ for all
  $k\ge 4$.
\end{itemize}

In the mentioned papers and in other papers, these graphs and their
local and global structures have been thoroughly investigated (such
results will be presented and used in Section~\ref{sec-recognition}).

In meshed graphs, metric triangles are equilateral~\cite{BaCh_median}
but not always strongly equilateral. The graphs in the following
classes are meshed but not always weakly modular~\cite{Ch_delta}.
\begin{itemize}
\item{\emph{Basis graphs of matroids} \cite{Mau}:} The graphs whose vertices are the bases of a matroid and two bases are adjacent if one can be obtained from the second by a single exchange;
\item{\emph{Basis graphs of even $\Delta$-matroids} \cite{Bou,ChKa,DrHa}:} The graphs whose vertices are the bases of an even $\Delta$-matroid and two bases are adjacent if one can be obtained from the second by a single exchange.
\end{itemize}

\section{Local certification of distances}

In this section, we consider labelings with labels within $\N$ and we
want to define a $2$-local distance certification protocol for meshed
graphs, i.e., we want to ensure that for any meshed graph $G$, a
labeling $L: V(G) \to \N$ is accepted by the $2$-local verifier if and
only if there exists a root $s$ such that $L(v) = d(s,v)$ for all
$v \in V(G)$.

The main idea of our certification protocol is to verify that
QC$^-(s)$ by checking that the condition holds around each vertex $q$ of
$G$  when we replace $d(v,s)$ by $D(v)$ at every vertex $v$. We
first describe rules that are satisfied in any graph $G$ when every
vertex $v$ is labeled by $d(s,v)$ for some arbitrary root $s$.

\begin{enumerate} 
\item[\rDGz:] If $D(v)=0$, then for all $u\in N(v)$, we have $D(u)=1$.
\item[\rDG:] If $D(v)>0$, then there exists $u \in N(v)$ with
  $D(u) = D(v) -1$ and for all $w\in N(v)$, we have
  $D(v) -1 \leq D(w) \leq D(v) + 1$.
\end{enumerate}

We now consider an extra rule that holds for $D = d(s,\cdot)$ if and only if
$G$ satisfies QC$^-(s)$.

\begin{enumerate}
\item[\rDM:] If $u,u'\in N(v), u\nsim u'$, and either $D(u)=D(v)=D(u')+1$ or
  $D(u)=D(u')=D(v)-1$, then there exists $x\sim u,u'$ such that
  $D(x)\le D(u')$.
\end{enumerate}

\begin{theorem}\label{th-meshed-distance}
  The rules \rDGz, \rDG, and \rDM define a 2-local distance
  certification protocol for the class of meshed graphs.
\end{theorem}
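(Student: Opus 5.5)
The plan is to prove completeness and soundness separately; the heart of the matter is showing that the vertex labeled $0$ is unique.

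\emph{Completeness.} Let $G$ be meshed, fix any $s$, and set $D=d(s,\cdot)$. Rule \rDGz holds trivially, and \rDG holds because $d(s,\cdot)$ is $1$-Lipschitz and every $v\neq s$ has a neighbor on a shortest path to $s$. For \rDM, take $u,u'\in N(v)$ with $u\nsim u'$, so $d(u,u')=2$ with common neighbor $v$.
\begin{itemize}
\item If $D(u)=D(v)=D(u')+1$, apply (QC$_1^-$) with $w:=v$ and the roles of $u,u'$ as in that condition. It gives $x\sim u,u'$ with $d(s,x)=d(s,u')$.
\item If $D(u)=D(u')=D(v)-1$, apply (QC$_2^-$) with $w:=v$. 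It gives $x\sim u,u'$ with $d(s,x)\le d(s,u)=D(u')$.
\end{itemize}
In both cases $D(x)\le D(u')$, as \rDM requires.

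\emph{Soundness, reduction to uniqueness of the zero.} Assume $G$ is meshed and $D$ satisfies \rDGz, \rDG, \rDM at every vertex. By \rDG, $D$ is $1$-Lipschitz along edges, hence $|D(a)-D(b)|\le d(a,b)$ for all $a,b$. Also from \rDG, following neighbors with decreasing labels from any $v$ gives a path of length exactly $D(v)$ ending at a vertex labeled $0$. So some vertex $s$ has $D(s)=0$. If $s$ is the \emph{only} vertex labeled $0$, we are done: Lipschitz gives $D(v)\le d(s,v)$, and the descending path from $v$ ends at $s$ and has length $D(v)$, so $d(s,v)\le D(v)$.

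\emph{Soundness, uniqueness of the zero.} This is the main step. Fix $s$ with $D(s)=0$ and let $f(v)=d(s,v)-D(v)\ge 0$. Suppose some $v$ has $f(v)>0$. Among such vertices, choose $v$ minimizing $k:=d(s,v)$, and subject to that, minimizing $D(v)$. Every neighbor $w$ of $v$ with $d(s,w)=k-1$ then satisfies $D(w)=k-1$, so $D(v)\in\{k-2,k-1\}$.
\begin{itemize}
\item Case $D(v)=k-2$. Rule \rDG at $w$ forces $D(v)\ge k-2$, so this is allowed. Take a neighbor $u'$ of $v$ with $D(u')=k-3$. Then $d(s,u')\ge k-1$, which makes $f(u')>0$; combined with minimality this forces $d(s,u')\in\{k,k+1\}$.
\item Case $D(v)=k-1$. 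Now $v,w$ is an edge whose endpoints have equal $D$ but unequal distances. Here I would use (TC$(s)$), which follows from (QC$^-$).
\end{itemize}
In both cases the idea is the same: pair a "true-geodesic" neighbor $w$ of $v$ (toward $s$) with a "label-descending" neighbor $u'$ (toward the other zero). This gives a configuration of two non-adjacent neighbors of $v$, or a vertex at distance $2$ on a geodesic to $s$. To such a configuration apply \rDM, which certifies a common neighbor $x$ with small $D$, together with (QC$^-(s)$) in $G$, which certifies a common neighbor with small $d(s,\cdot)$. The goal is to produce a vertex $x$ with $f(x)>0$ that is closer to $s$, or has smaller label at the same distance, contradicting the choice of $v$.

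I expect this case analysis to be the main obstacle, because the minimality of $v$ must interact correctly with both rules. If the purely metric extremal argument gets stuck, my fallback is a topological one. Since meshed graphs have simply connected triangle-square complexes (as recalled from~\cite{CCHO}), take a cycle formed by a shortest $(s,s')$-path and two $D$-descending paths, and fill it with a disk diagram. Then use \rDG and \rDM to show that $D$, pulled back to the diagram, has no interior or boundary "saddle" on any triangle or square face. Two distinct minima of $D$ would force such a saddle, giving the contradiction.
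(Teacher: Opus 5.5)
Your completeness argument is correct and matches the paper's, and so does the reduction of soundness to the uniqueness of the vertex labeled $0$. The gap is in the main step. You never carry out the uniqueness case analysis; you only state its goal. Moreover, the configuration you propose to feed into \rDM does not fit the rule. In your case $D(v)=k-2$, the geodesic neighbor $w$ and the descending neighbor $u'$ of $v$ have labels $k-1$ and $k-3$, which matches neither hypothesis of \rDM at $v$. The topological fallback is also only a heuristic. You do not explain why two minima would force a ``saddle'' face. You also do not explain how rules verified on balls of $G$ transfer to a disk diagram whose map to $G$ need not be injective on neighborhoods.

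The missing idea is to apply \rDM not at the bad vertex $v$ but at its predecessor on a geodesic from $s$. Keep your extremal choice: $v$ with $f(v)>0$ minimizing $k=d(s,v)$; note $k\ge 2$ by \rDGz. Pick $u\sim v$ with $d(s,u)=k-1$ and $w\sim u$ with $d(s,w)=k-2$. Then $w\nsim v$, and by minimality $D(u)=k-1$ and $D(w)=k-2$. Rule \rDG at $u$ gives $D(v)\in\{k-2,k-1\}=\{D(w),D(u)\}$. So either $D(v)=D(u)=D(w)+1$ or $D(v)=D(w)=D(u)-1$, and in both cases \rDM applies at $u$ to the non-adjacent pair $v,w$. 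It yields $x\sim v,w$ with $D(x)\le k-2$. But $x\sim v,w$ forces $d(s,x)=k-1$, so $D(x)=k-1$ by minimality, a contradiction.

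This is exactly the paper's induction (with $s$ any vertex of minimum label). It shows directly that $D=d(s,\cdot)$, so uniqueness of the zero comes for free. Neither (TC$(s)$) nor (QC$^-(s)$) is needed: soundness holds in every graph, and meshedness is used only for completeness.
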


\begin{proof}
  Consider a meshed graph $G$.  We first show that for any root
  $s \in V(G)$, the labeling $D = d(s,\cdot): V(G) \to \N$ satisfies
  \rDGz, \rDG, and \rDM. Note that $s$ is the only vertex $v$ such
  that $d(s,v) = 0$ and $\rDGz$ trivially holds at $s$. For any
  $v \neq s$, \rDG holds at $v$ since for any $w \sim v$,
  $|d(s,v)-d(s,w) | \leq 1$, and since $v$ has a neighbor $u$ on a
  shortest path from $v$ to $s$ that satisfies $d(s,u) =
  d(s,v)-1$. The fact that \rDM holds follows from the fact that $G$
  satisfies QC$^-(s)$. Indeed, $u$ and $u'$ are at distance $2$ in $G$
  and if $D(u)= D(u')+1$, QC$^-(s)$ implies that there exists
  $x \sim u,u'$ with $d(s,x) =d(s,u')$. Consequently, $D(x)=D(u')$ in this case. 
  If $D(u) = D(u')$, QC$^-(s)$
  implies that there exists $x \sim u,u'$ with
  $d(s,x) \in \{d(s,u')-1,d(s,u')\}$, yielding  $D(x)\in \{D(u')-1,D(u')\}$.

  Suppose now that we are given a labeling function $D: V(G) \to \N$
  that satisfies \rDGz, \rDG, and \rDM. Consider a vertex $s \in V(G)$
  with a minimum value of $D$. We show by induction on $k = d(s,v)$
  that for any $v \in V(G)$, we have $D(v) = d(s,v)$. Note that by
  \rDG applied at $s$, we have $D(s) = 0 = d(s,s)$. Moreover, by
  \rDGz, $D(u) = 1 = d(s,u)$ for every $u \in N(s)$. Consequently, the
  claim holds for $k \leq 1$. Assume that the property holds for all
  vertices at distance at most $k \geq 1$ from $s$ and consider a
  vertex $v$ such that $d(s,v) = k+1 \geq 2$. Let $u \sim v$ such that
  $d(s,u) = k$ and let $w \sim u$ such that $d(s,w) = k-1$. Note that
  $v \nsim w$ as $d(s,v) = d(s,w)+2$. By induction hypothesis
  $D(u) = d(s,u) = k$ and $D(w) = d(s,w) = k-1$. By \rDG applied at
  $u$, we have that $k-1 \leq D(v) \leq k+1$. If $D(v) \leq k$, then
  by \rDM applied at $u$, there exists $x \sim w,v$ such that
  $D(x) \leq D(w)=k-1$. But since $x \sim v,w$, necessarily
  $d(s,x) =k$, and by induction hypothesis, $D(x) =k$, a
  contradiction. This shows that $D(v) =k+1 = d(s,w)$ and ends the
  proof that $D$ coincides with $d(s,\cdot)$.

  Finally, note that one can check whether the rules $\rDGz$ and
  $\rDG$ hold at $v$ by considering the labeled 1-ball $B_1(v)$
  centered at $v$, while one can check whether $\rDM$ holds at $v$ by
  considering the labeled 2-ball $B_2(v)$. This shows that \rDGz,
  \rDG, and \rDM define a 2-local distance certification protocol for
  the class of meshed graphs.
\end{proof}

In the case of (weakly) bridged and Helly graphs, one can find 1-local
distance certification protocols by replacing \rDM by other
verification rules. In the case of bridged graphs, we consider the two
following rules. The first one corresponds to the triangle condition
(TC($s$)) while the second one encodes the local convexity of the
balls centered at $s$ (this property holds for weakly bridged graphs
and then also for bridged and chordal graphs).

\begin{enumerate}
\item[\rDTC:] For any $u \in N(v)$ such that $D(u)=D(v)$, 
  there exists $x\sim u,v$ with $D(x)=D(v)-1$.
\item[\rDB:] For any $u,u'\in N(v)$ such that
  $D(u)=D(u')=D(v)-1$, we have $u\sim u'$.
\end{enumerate}

\begin{theorem}\label{th-bridged-distance}
  The rules \rDGz, \rDG, \rDTC and \rDB define a 1-local distance
  certification protocol for the class of weakly bridged graphs.
\end{theorem}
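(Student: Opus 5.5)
We follow the template of the proof of Theorem~\ref{th-meshed-distance}: we prove completeness and soundness separately, and then observe locality.

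\emph{Completeness.} Let $G$ be weakly bridged, $s\in V(G)$ arbitrary, and $D=d(s,\cdot)$.
\begin{enumerate}[(a)]
\item Rules \rDGz and \rDG hold for $D=d(s,\cdot)$ in any graph, exactly as in the proof of Theorem~\ref{th-meshed-distance}.
\item Rule \rDTC is precisely TC$(s)$, which holds because weakly bridged graphs are weakly modular.
\item For \rDB, assume $u,u'\in N(v)$ with $d(s,u)=d(s,u')=d(s,v)-1=k$. Then $u,u'\in B_k(s)$ and $v\in I(u,u')$ with $v\notin B_k(s)$. If $u\nsim u'$, then $d(u,u')=2$, so $v$ lies on a shortest $(u,u')$-path outside the ball. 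This contradicts convexity of balls, which is part of the definition of weakly bridged graphs. Hence $u\sim u'$.
\end{enumerate}

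\emph{Soundness.} Let $D$ satisfy all four rules and let $s$ minimize $D$. By \rDG, $D(s)=0$, and by \rDGz, $D=1$ on $N(s)$. We induct on $k=d(s,v)$. Take $v$ with $d(s,v)=k+1\ge 2$, a neighbor $u\sim v$ with $d(s,u)=k$, and $w\sim u$ with $d(s,w)=k-1$. By induction, $D(u)=k$ and $D(w)=k-1$, and \rDG at $u$ gives $D(v)\in\{k-1,k,k+1\}$.

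\begin{enumerate}[(a)]
\item \emph{Case $D(v)=k$.} Apply \rDTC to the edge $uv$. This gives $x\sim u,v$ with $D(x)=k-1$. Since $x\sim v$, we have $d(s,x)\ge k$. Since $x\sim u$, we have $d(s,x)\le k+1$. If $d(s,x)\le k$, induction gives $D(x)=d(s,x)=k\ne k-1$, a contradiction. Otherwise $d(s,x)=k+1$, which is the same configuration as $v$. So we cannot close this case by induction on $k$ alone.
\item \emph{Case $D(v)=k-1$.} At $u$, the neighbors $v$ and $w$ both satisfy $D=D(u)-1$. Rule \rDB then forces $v\sim w$, impossible since $d(s,v)=d(s,w)+2$.
\end{enumerate}

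To handle case (a), we strengthen the induction. For fixed $k$, we prove simultaneously for all vertices at distance $k+1$ that $D(v)=k+1$. Suppose not, and among the bad vertices $v$ at distance $k+1$ choose one with $D(v)$ minimal.
\begin{enumerate}[(i)]
\item If $D(v)=k-1$, we are in case (b), which gives a contradiction.
\item If $D(v)=k$, the vertex $x$ produced in case (a) is bad with $D(x)=k-1<D(v)$, contradicting minimality.
\end{enumerate}
Hence no bad vertex exists. This closes the induction and shows $D=d(s,\cdot)$. (An alternative for case (a), not needed here, would be to apply \rDG at $v$ to obtain a neighbor labeled $k-1$ and then use \rDB at $u$.)

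\emph{Locality.} Every rule refers only to $v$, its neighbors, and adjacencies among its neighbors, so each rule can be checked in the labeled ball $B_1(v)$. The protocol is therefore 1-local.

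The main obstacle is case (a), $D(v)=k$, because \rDTC alone produces a witness $x$ that may again be at distance $k+1$. The minimality argument over all vertices at distance $k+1$ is what resolves it.
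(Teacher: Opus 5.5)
Your proof is correct and follows the paper's argument. Completeness uses TC$(s)$ and convexity of balls. Soundness is an induction on $d(s,v)$ in which the case $D(v)=k$ is closed by applying the case-$D(v)=k-1$ argument (\rDB at $u$) to the witness $x$ produced by \rDTC. The case-(b) argument applies verbatim to any vertex at distance $k+1$ labeled $k-1$, so the plain induction already closes and your minimal-counterexample strengthening is a harmless repackaging rather than a necessity.

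Drop the parenthetical alternative. The neighbor of $v$ labeled $k-1$ given by \rDG need not be adjacent to $u$ and may lie at distance $k+2$ from $s$, so \rDB at $u$ does not apply to it.
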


\begin{proof}
  Consider a weakly bridged graph $G$.  We first show that for any
  root $s \in V(G)$, the labeling $D = d(s,\cdot): V(G) \to \N$
  satisfies \rDGz, \rDG, \rDTC, and \rDB. \rDGz and \rDG hold for the
  same reasons as in the proof of Theorem~\ref{th-meshed-distance}.
  Since weakly bridged graphs are weakly modular, they satisfy the
  triangle condition. Consequently, for any $v$ and any $u \sim v$
  such that $d(s,u) = d(s,v)$, there exists $x \sim u,v$ such that
  $d(s,x)=d(s,v)-1$ and \rDTC holds at $v$. Consider now a vertex $v$
  and two neighbors $u,u'$ of $v$ such that
  $d(s,u)=d(s,u')=d(s,v)-1$. Let $k = d(s,u)$ and observe that
  $u,u' \in B_k(s)$ while $v \notin B_k(s)$. Since weakly bridged
  graphs have convex balls, $v$ cannot belong to a shortest path from
  $u$ to $u'$ and then $u$ and $u'$ are at distance $1$. This shows
  that \rDB holds at any $v$.

  Suppose now that we are given a labeling function $D: V(G) \to \N$
  that satisfies \rDGz, \rDG, \rDTC, and \rDB.  Consider a vertex
  $s \in V(G)$ with a minimum value of $D$. As in the proof of
  Theorem~\ref{th-meshed-distance}, we show by induction on
  $k = d(s,v)$ that for any $v \in V(G)$, we have $D(v) = d(s,v)$. For
  $k \leq 1$, the proof is the same as in the proof of
  Theorem~\ref{th-meshed-distance}. Assume now that the property holds
  for all vertices at distance at most $k \geq 1$ from $s$ and
  consider a vertex $v$ such that $d(s,v) = k+1 \geq 2$. Let $u \sim v$ such
  that $d(s,u) = k$ and let $w \sim u$ such that $d(s,w) = k-1$. Note
  that $v \nsim w$ as $d(s,v) = d(s,w)+2$.  By induction hypothesis
  $D(u) = d(s,u) = k$ and $D(w) = d(s,w) = k-1$. By \rDG applied at
  $u$, we have that $k-1 \leq D(v) \leq k+1$. If $D(v) = k-1$, then by
  \rDB applied at $u$, $w \sim v$, a contradiction. If $D(v) = k$,
  then by \rDTC applied at $u$, there exists $x \sim u,v$ such that
  $D(x) = k-1$. If $d(s,x) = k+1$, we have a contradiction by the
  previous case replacing $v$ by $x$. Consequently, $d(s,x) \leq k$
  and thus by induction hypothesis, $d(s,x) = D(x) = k-1$. But this is
  impossible as $x$ is a neighbor of $v$ and $d(s,v) = k+1$. This
  shows that $D(v) = k+1$ and ends the proof that $D$ coincides with
  $d(s,\cdot)$.

  One can easily check that \rDGz, \rDG, \rDTC, and \rDB define a
  1-local verifier.
\end{proof}

In the case of Helly graphs, we consider the following rule that
corresponds to a domination property satisfied in Helly graphs that
was established by Bandelt and Pesch~\cite{BP-absolute}.
\begin{enumerate}
\item[\rDH:] If $D(v)>0$, then there exists $x \in N(v)$ with
  $D(x)=D(v)-1$ such that for any $u \in N(v)$ such that
  $D(u)\le D(v)$, we have $u\in N[x]$.
\end{enumerate}

\begin{theorem}\label{th-helly-distance}
  The rules \rDGz, \rDG, and \rDH define a 1-local distance
  certification protocol for the class of Helly graphs.
\end{theorem}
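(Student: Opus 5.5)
The proof follows the template of Theorems~\ref{th-meshed-distance} and~\ref{th-bridged-distance}. There are two directions: \emph{completeness} (the true distance labeling is accepted in every Helly graph) and \emph{soundness} (any accepted labeling equals $d(s,\cdot)$ for $s$ a vertex of minimum label).

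\textbf{Completeness.} Let $G$ be Helly and $s\in V(G)$, and set $D=d(s,\cdot)$. Rules \rDGz and \rDG hold exactly as in the proof of Theorem~\ref{th-meshed-distance}. For \rDH, fix $v$ with $k=d(s,v)>0$. We need a neighbor $x$ of $v$ with $d(s,x)=k-1$ that dominates every $u\in N(v)$ with $d(s,u)\le k$. This is the Bandelt--Pesch domination property; we derive it directly from the Helly property of balls. Let $N_{\le}(v)=\{u\in N(v): d(s,u)\le k\}$ and consider the family consisting of
\begin{itemize}
\item the ball $B_{k-1}(s)$,
\item the ball $B_1(v)$,
\item the balls $B_1(u)$ for all $u\in N_{\le}(v)$.
\end{itemize}
These balls pairwise intersect:
\begin{itemize}
\item two balls $B_1(u), B_1(u')$ (or $B_1(u)$ and $B_1(v)$) both contain $v$;
\item $B_{k-1}(s)\cap B_1(v)\neq\emptyset$ since $d(s,v)=k$;
\item $B_{k-1}(s)\cap B_1(u)\neq\emptyset$ since $d(s,u)\le k=(k-1)+1$.
\end{itemize}
By the Helly property there is a common vertex $x$. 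Then $x\in B_1(v)$ and $d(s,x)\le k-1$, so $x\ne v$, hence $x\sim v$ and $d(s,x)=k-1$. Moreover $x\in B_1(u)$ means $u\in N[x]$ for all $u\in N_{\le}(v)$. So \rDH holds.

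\textbf{Soundness.} Let $D$ satisfy \rDGz, \rDG, and \rDH, and pick $s$ with minimum $D$. As before, $D(s)=0$ and $D=1$ on $N(s)$. Induct on $k=d(s,v)\ge1$: assume $D=d(s,\cdot)$ on $B_k(s)$, and let $d(s,v)=k+1$. Choose $u\sim v$ with $d(s,u)=k$ and $w\sim u$ with $d(s,w)=k-1$, so $w\nsim v$. By induction $D(u)=k$ and $D(w)=k-1$, and by \rDG at $u$, $D(v)\in\{k-1,k,k+1\}$. Suppose $D(v)\le k$. Apply \rDH at $u$: there is $x\sim u$ with $D(x)=k-1$ such that every neighbor $y$ of $u$ with $D(y)\le k$ lies in $N[x]$. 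In particular $v,w\in N[x]$.
\begin{itemize}
\item If $x=v$, then $v\sim w$, a contradiction.
\item If $x=w$, then $w\sim v$, again a contradiction.
\item Otherwise $x\sim v$ and $x\sim w$, so $d(s,x)=k$ (it is adjacent to vertices at distances $k-1$ and $k+1$). By induction $D(x)=k\ne k-1$, a contradiction.
\end{itemize}
Hence $D(v)=k+1$, which completes the induction.

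\textbf{Locality.} All three rules only inspect $v$, its neighbors, and adjacencies among them, so the verifier is 1-local. The only step needing care is deriving \rDH from the Helly property, in particular checking that the chosen family of balls is pairwise intersecting; alternatively one can simply cite Bandelt--Pesch~\cite{BP-absolute}.
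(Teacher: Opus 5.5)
Your proof is correct and follows the paper's argument essentially verbatim. For completeness, you apply the Helly property to $B_{k-1}(s)$, $B_1(v)$ and the balls $B_1(u)$ around the non-ascending neighbors; the paper uses the same family with the index shifted by one. For soundness, you induct and apply \rDH at $u$ to force a common neighbor of $v$ and $w$ labeled $k-1$. Your explicit handling of the degenerate cases $x=v$ and $x=w$ is a small point of care the paper's proof glosses over, since \rDH only guarantees $u\in N[x]$.
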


\begin{proof}
  Consider a Helly graph $G$.  We first show that for any root
  $s \in V(G)$, the labeling $D = d(s,\cdot): V(G) \to \N$ satisfies
  \rDGz, \rDG, and \rDH. \rDGz and \rDG hold for the same reasons as
  in the proof of Theorem~\ref{th-meshed-distance}. Consider a vertex
  $v$ such that $D(v) = d(s,v) = k+1$ and let $S$ be the set of all
  neighbors $u$ of $v$ such that $d(s,u) \leq k+1$. Consider the
  following collection of balls: $B_k(s)$, $B_1(v)$, and $B_1(u)$ for
  each $u \in S$. The balls of this collection pairwise intersect, and
  consequently, by the Helly property, there exists a vertex $x$ in
  the intersection of all these balls. Since $x$ belongs to $B_k(s)$
  and $x \sim v$, we have $D(x)=d(s,x)=k$. Since $x$ belongs to
  $B_1(v)$ and to $B_1(u)$ for every $u\in S$,
  $S\cup \{v \}\subseteq N[x]$, establishing \rDH.

  Suppose now that we are given a labeling function $D: V(G) \to \N$
  that satisfies \rDGz, \rDG, and \rDH.  Consider a vertex
  $s \in V(G)$ with a minimum value of $D$. As in the proof of
  Theorem~\ref{th-meshed-distance}, we show by induction on
  $k = d(s,v)$ that for any $v \in V(G)$, we have $D(v) = d(s,v)$. For
  $k \leq 1$, the proof is the same as in the proof of
  Theorem~\ref{th-meshed-distance}. Assume now that the property holds
  for all vertices at distance at most $k \geq 1$ from $s$ and
  consider a vertex $v$ such that $d(s,v) = k+1 \geq 2$. Let
  $u \sim v$ such that $d(s,u) = k$ and let $w \sim u$ such that
  $d(s,w) = k-1$. Note that $v \nsim w$ as $d(s,v) = d(s,w)+2$.  By
  induction hypothesis $D(u) = d(s,u) = k$ and $D(w) = d(s,w) =
  k-1$. By \rDG applied at $u$, we have that $k-1 \leq D(v) \leq
  k+1$. If $D(v) \leq D(u) = k$, then by \rDH applied at $u$, there
  exists $x \sim u$ such that $D(x) = k-1$ and $x \sim
  w,v$. Consequently, $d(s,x) = k$ and by induction hypothesis, we
  should have $D(s,x) = d(s,x) = k$, a contradiction. This shows that
  $D(v) = k+1$ and ends the proof that $D$ coincides with
  $d(s,\cdot)$.

  One can easily check that \rDGz, \rDG, and \rDH define a 1-local
  verifier.
\end{proof}

\section{Leader election  by counting distances modulo 3}\label{sec-election}

In this section, we consider labelings with labels within $\{0,1,2\}$
and we want to define  $r$-local distance mod 3 certification
protocols, i.e., we want to ensure that for any graph $G$ in the
considered family $\cG$, a labeling $L: V(G) \to \{0,1,2\}$ is
accepted by the $r$-local verifier if and only if there exists a root
$s$ such that $L(v) = d(s,v) \pmod{3}$ at every vertex $v$. Given a
root $s$, let $f_s: V(G) \to \{0,1,2\}$ be the map defined by
$v \mapsto d(s,v) \pmod{3}$.

For any $k \in \{0,1,2\}$, let $\precc(k) = k-1 \pmod{3}$ and
$\next(k) = k+1 \pmod{3}$. Given a graph $G$ and a labeling
$L:V(G) \rightarrow \{ 0,1,2\}$, we define the following directed
graph $\oG_L$: the vertex-set of $\oG_L$ is $V(G)$ and there is an arc
$\overrightarrow{vu}$ in $\oG_L$ if and only if $vu$ is an edge of $G$
and $L(v)=\next(L(u))$. One can observe that if $L=f_s$ for some root
$s$, then $\oG_L$ is acyclic and $s$ is the unique sink of $\oG_L$.
In order to prove our results, we show that our local rules ensure
that for any meshed graph $G$ and any labeling
$L:V(G) \rightarrow \{ 0,1,2\}$, whenever all rules are satisfied, the
directed graph $\oG_L$ is acyclic and admits a unique sink.

We first describe rules that are satisfied at any vertex $v$ in any
graph $G$ when $L = f_s$ for some arbitrary root $s$ (even if $G$ is
not meshed). The first rule can be seen as a modulo $3$ version of
rules $\rDGz$ and $\rDG$, while $\rMGt$ and $\rMGs$ ensure that the
labeling is coherent on every triangle and square of the graph.

\begin{enumerate} 
\item[\rMG:] Either there exists $u \in N(v)$ such that
  $L(u) = \precc(L(v))$, or $L(v) = 0$ and $L(u) = 1$ for any
  $u \in N(v)$.
\item[\rMGt:] For any triangle $uvw$ of $G$, we have
  $\{ L(u),L(v),L(w)\}\ne \{ 0,1,2\}$.
\item[\rMGs:] For any square  $uvwx$ of $G$ such that
  $\{ L(u),L(v),L(w),L(x)\}=\{ 0,1,2\}$, the vertices of $uvwx$
  with the same label are non-adjacent.\end{enumerate}
Observe that one can verify if \rMG and \rMGt are satisfied at a
vertex $v$ by considering $B_1(v)$, and that \rMGs can be checked by
considering $B_2(v)$.  We now consider an extra rule for meshed graphs
that can also be checked by considering $B_2(v)$. This
rule can be seen as a modulo $3$ version of rule $\rDM$.

\begin{enumerate} 
\item[\rMM:] If $u,u'\in N(v), u \nsim u'$, and either $L(u)=L(v)=\next(L(u'))$ or $L(u)=L(u')=\precc(L(v))$, then there exists $x\sim u,u'$ such that $L(x)=L(u')$ or $L(x)=\precc(L(u'))$.
\end{enumerate}

\begin{theorem}\label{meshed-modulo3}
  The rules \rMG, \rMGs, \rMGt and \rMM define a 2-local distance
  mod 3 certification protocol for the class of meshed graphs.

  There exists a $2$-local leader election protocol for
  meshed graphs using labels $\{0,1,2\}$.
\end{theorem}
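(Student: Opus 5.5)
Completeness is routine: for a meshed graph $G$ and $L=f_s$, each rule follows from the corresponding property of $d(s,\cdot)$. \rMG holds because every $v\neq s$ has a neighbor one step closer to $s$, and $s$ has all its neighbors at distance $1$. \rMGt and \rMGs hold because labels along any edge differ by at most $1$ in $d(s,\cdot)$. On a triangle this excludes three distinct distances. On a square, if three residues occur, the distances take three consecutive values $k-1,k,k+1$; the vertices at $k-1$ and at $k+1$ each occur at most twice and must be non-adjacent. \rMM is \rDM reduced mod $3$: the hypotheses of \rMM force the corresponding integer configuration, since labels of adjacent vertices differ by at most one in distance. QC$^-(s)$ then gives $x$ with $d(s,x)\in\{d(s,u'),d(s,u')-1\}$.

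Soundness is the substance. I would prove the three items of Proposition~\ref{prop-meshed2-intro} in the order (1), (2), (3).

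For (1), acyclicity of $\oG_L$: use that meshed graphs have simply connected triangle-square complexes \cite{CCHO}. Suppose $C$ is a directed cycle in $\oG_L$. Take a minimal-area disk diagram $(D,\varphi)$ for $C$ and pull $L$ back to a labeling of $D$. Every triangle of $D$ maps to a triangle of $G$, so by \rMGt it is not "rainbow". Every square carries labels coherent with \rMGs: a square using all three labels has its equal labels on opposite corners, so its boundary winding number is zero. Assign to each edge $ab$ of $D$ the value $\pm1$ or $0$ according to $L(b)-L(a)$ mod $3$ taken in $\{-1,0,1\}$. The sum over the boundary of each inner face is then $0$: the faces allowed by \rMGt and \rMGs have zero winding. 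By a Sperner/Musin-type argument (the Musin generalization cited in the introduction), the total winding around $\partial D$ is the sum of the face windings, hence $0$. But a directed cycle of $\oG_L$ has winding $|C|\neq 0$, a contradiction. I expect this step to be the main obstacle: one must handle squares and degenerate diagram faces carefully, and check that \rMGs indeed forces zero winding on every labeled square.

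For (2), let $s$ be a sink of $\oG_L$. Since $s$ has no neighbor with label $\precc(L(s))$, \rMG forces $L(s)=0$ and all neighbors of $s$ labeled $1$. Then mimic the induction of Theorem~\ref{th-meshed-distance}. Assume $L=f_s$ on $B_k(s)$, and take $v$ with $d(s,v)=k+1$, $u\sim v$ at distance $k$, and $w\sim u$ at distance $k-1$. Since $L(u)-L(v)$ is defined mod $3$, it suffices to exclude $L(v)=L(u)$ and $L(v)=\precc(L(u))=L(w)$. In either case, \rMM at $u$ for the pair $v,w$ yields $x\sim v,w$ with $L(x)\in\{L(w),\precc(L(w))\}$. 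But $d(s,x)=k$, so by induction $L(x)=\next(L(w))$, a contradiction.

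For (3), existence of a sink follows from (1) since $G$ is finite. For uniqueness, if $s,s'$ are both sinks then (2) gives $f_s=f_{s'}$. Then $s'$ has label $0$ and all neighbors labeled $1$. Take a neighbor $y$ of $s'$ on a shortest path to $s$ when $s'\neq s$: then $f_s(y)=f_s(s')-1$, so $L(y)=\precc(L(s'))=2$, contradicting that all neighbors of $s'$ are labeled $1$. For the leader-election consequence, the verifier additionally checks that $I(v)=1$ exactly at the vertices violating the first alternative of \rMG, i.e.\ at the sinks.
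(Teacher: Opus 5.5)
Your proof is correct. Completeness, the induction in (2) and the uniqueness argument in (3) match the paper: Lemmas~\ref{lem-gen-fr-ok}, \ref{lem-meshed-fr-ok}, \ref{lem-meshed-mod}, and the last part of the proof of Theorem~\ref{meshed-modulo3} with $s,s'$ swapped. One small repair in the completeness check for \rMGs: the vertices at distances $k-1$ and $k+1$ each occur exactly once, not ``at most twice''. Being non-adjacent, they are opposite, so the repeated value $k$ lies on the other diagonal, which is exactly what \rMGs requires.

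The acyclicity step is where you genuinely depart from the paper. Lemma~\ref{lem-acyclic} works only with triangulations. It proves the triangle case with Musin's lower bound on fully colored triangles (Proposition~\ref{th-musin}). It then reduces the triangle-square case to this by coning each square $S$ with a new vertex $v_S$, labeled by a label occurring twice on $S$, and uses \rMGs to show that the subdivided labeling satisfies \rMGt.

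You use only the additivity of the degree, applied directly to triangle and square faces. The obstacle you anticipate is not real. Let $r(t)\in\{-1,0,1\}$ be the representative of $t \bmod 3$ and $\omega(a\to b)=r(L(b)-L(a))$. Then:
\begin{enumerate}[(i)]
\item $\omega$ is antisymmetric, so interior edges cancel, as do boundary edges traversed twice in a singular diagram. No minimality of the diagram is needed.
\item A face whose labels lie in some $\{x,x+1\}$ has $\omega(a\to b)=\epsilon(b)-\epsilon(a)$, where $\epsilon$ is the indicator of label $x+1$, so its sum telescopes to $0$. This covers every triangle by \rMGt and every square with at most two labels.
\item A square with three labels has, by \rMGs, the cyclic pattern $x,y,x,z$, with sum $r(y-x)-r(y-x)+r(z-x)-r(z-x)=0$.
\end{enumerate}

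You can even dispense with disk diagrams. Since $\omega$ vanishes on the boundary of every $2$-cell of $X\trsq(G)$, simple connectivity lets you integrate it from a base vertex. This gives $h:V(G)\to\mathbb{Z}$ with $h\equiv L\pmod 3$ and $h(u)=h(v)-1$ on every arc $\overrightarrow{vu}$ of $\oG_L$, so acyclicity is immediate.

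Your route is therefore more elementary and self-contained, since Musin's count of fully colored triangles is more than is needed. It handles squares without subdivision and yields an integer lift of $L$. The paper's route gains only the convenience of quoting an existing Sperner-type theorem.
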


In the case of weakly bridged and Helly graphs, we show that we can
find $1$-local distance mod 3 certification protocols by replacing
\rMGs and \rMM by rules that can be verified at a vertex $v$ by
considering only the labeled $1$-ball $B_1(v)$.
For weakly bridged graphs, we consider the two following rules that
correspond to the modulo 3 versions of rules \rDTC and \rDB:
\begin{enumerate}
\item[\rMTC:] If $u\in N(v)$, $L(u)=L(v)$, then there exist $x\sim u,v$ such that $L(x)=\precc(L(v))$.
\item[\rMB:] If $u,u'\in N(v)$ and $L(u)=L(u')=\precc(L(v))$, then $u\sim u'$. \end{enumerate}

Finally, for Helly graphs, we consider the following rule, that is the modulo 3 version of \rDH:
\begin{enumerate}
\item[\rMH:] If there exists $y\in N(v)$ with $L(y)=\precc(L(v))$, then there exists $x\sim v$ with $L(x)=\precc(L(v))$ such that if $u\sim v$ and $L(u)=L(v)$ or $L(u)=\precc(L(v))$, then $u\in N[x]$.
\end{enumerate}

\begin{theorem}\label{th-bridged-helly-election}
  The rules \rMG, \rMGt, \rMTC and \rMB define a 1-local distance
  mod 3 certification protocol for the class of weakly bridged
  graphs. Therefore, there exists a $1$-local leader election protocol
  for weakly bridged graphs using labels $\{0,1,2\}$.

  The rules \rMG, \rMGt, and \rMH define a 1-local distance mod 3
  certification protocol for the class of Helly graphs. Therefore,
  there exists a $1$-local leader election protocol in Helly graphs
  using labels $\{0,1,2\}$.
\end{theorem}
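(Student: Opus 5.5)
The plan is to prove soundness through the three assertions of Proposition~\ref{prop-meshed2-intro}, specialised to the 1-local rules, after checking completeness. The leader election part then follows from the general remark in the preliminaries: the root $s$ is the unique vertex with no neighbour labeled $\precc(L(s))$, and this is checkable in $B_1$.

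\emph{Completeness.} Let $L=f_s$. Then \rMG holds trivially, and \rMGt holds because adjacent vertices have distances differing by at most $1$. For weakly bridged graphs, \rMTC and \rMB follow from (TC$(s)$) and convexity of balls, exactly as \rDTC and \rDB in Theorem~\ref{th-bridged-distance}. Reducing mod 3 is harmless: a neighbour $u$ of $v$ has $f_s(u)=f_s(v)$ iff $d(s,u)=d(s,v)$, and $f_s(u)=\precc(f_s(v))$ iff $d(s,u)=d(s,v)-1$. For Helly graphs, the Helly argument of Theorem~\ref{th-helly-distance} gives \rMH in the same way.

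\emph{Acyclicity of $\oG_L$.} For an edge $uv$, let $\omega(u,v)\in\{-1,0,1\}$ be the representative of $L(v)-L(u)\pmod 3$. Along a directed cycle of $\oG_L$ of length $\ell$, the sum of $\omega$ is $\pm\ell\neq 0$. For any triangle $uvw$, \rMGt says the labels are not all distinct. A short case check then shows $\omega(u,v)+\omega(v,w)+\omega(w,u)=0$.

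\begin{itemize}
\item I will use that weakly bridged and Helly graphs have simply connected triangle (clique) complexes. This holds because they are weakly modular, and their clique complexes are known to be simply connected; for Helly graphs they are even contractible \cite{CCHO,ChOs}. This is the step I expect to be the main obstacle: unlike the meshed case, there is no \rMGs to control square faces, so I must make sure disk diagrams can be taken with triangular faces only.
\item Taking a van Kampen disk diagram of the cycle with only triangular faces, the boundary sum of $\omega$ equals the sum over all faces, since interior edges cancel. This sum is $0$, a contradiction.
\end{itemize}

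Since $\oG_L$ is finite and acyclic, it has a sink $s$.

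\emph{Every sink gives $L=f_s$.} A sink $s$ has no neighbour labeled $\precc(L(s))$, so by \rMG we get $L(s)=0$ and all neighbours of $s$ are labeled $1$. I then induct on $k=d(s,v)\ge 1$, following Theorems~\ref{th-bridged-distance} and~\ref{th-helly-distance}. Take $d(s,v)=k+1$, $u\sim v$ with $d(s,u)=k$, and $w\sim u$ with $d(s,w)=k-1$, so $v\nsim w$.
\begin{itemize}
\item \emph{Weakly bridged, $L(v)=\precc(L(u))$.} Rule \rMB at $u$ applied to $v,w$ gives $v\sim w$, a contradiction.
\item \emph{Weakly bridged, $L(v)=L(u)$.} Rule \rMTC gives $x\sim u,v$ with $L(x)=\precc(L(u))$. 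If $d(s,x)=k+1$, the previous case applied to $x$ gives a contradiction. Otherwise $d(s,x)=k$, and the induction hypothesis gives $L(x)=L(u)$, again a contradiction.
\item \emph{Helly.} If $L(v)\in\{L(u),\precc(L(u))\}$, rule \rMH at $u$ gives $x\sim u$ with $L(x)=\precc(L(u))$ and $v,w\in N[x]$. Here $x\neq w$ since $v\nsim w$. Hence $x\sim v,w$, so $d(s,x)=k$, and then $L(x)=L(u)$, a contradiction.
\end{itemize}
In every case $L(v)=f_s(v)$.

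\emph{Uniqueness of the sink.} If $s'\neq s$ is another sink, then $L=f_s$, so $s'$ has a neighbour on a shortest path to $s$, which carries label $\precc(L(s'))$. This contradicts $s'$ being a sink. So $s$ is the unique root, and the protocol is a 1-local distance mod 3 certification protocol for both classes.
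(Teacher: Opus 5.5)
Your proof is correct. Outside the acyclicity step it matches the paper's argument: completeness via (TC$(s)$), convexity of balls and \rDH, and the two inductions from a sink. Your uniqueness-of-sink argument is a slightly shorter variant: once $L=f_s$, any other vertex has an out-neighbour on a shortest path to $s$.

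The genuine difference is how you show that $\oG_L$ is acyclic. The paper (Lemma~\ref{lem-acyclic}) pulls $L$ back to a triangulated disk diagram of a directed cycle. It then invokes Musin's generalization of Sperner's lemma (Proposition~\ref{th-musin}) to produce a fully colored triangle, contradicting \rMGt. You instead use that $\omega$ is an antisymmetric integer-valued edge function. Its sum vanishes around every triangle satisfying \rMGt but equals $-\ell\neq 0$ along a directed cycle, and you conclude by cancellation of interior edges in a van Kampen diagram. This is the same winding-number idea: your boundary sum is $3\deg([0,1],T,L)$. Your version, however, is self-contained and needs no external lemma. Moreover, $\omega$ sums to $0$ around a square exactly when \rMGs holds on it. So your argument would also give the triangle--square case of Lemma~\ref{lem-acyclic} directly, without the barycentric subdivision the paper uses. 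Musin's result gives more (at least $|\deg|$ fully colored triangles), but that extra information is not needed here.

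One justification should be tightened. Weak modularity alone gives simple connectivity of $X\trsq(G)$, not of $X\tr(G)$; the $4$-cycle is a median graph whose triangle complex is not simply connected. For weakly bridged graphs you also need that they contain no induced $C_4$, so there are no square cells. For Helly graphs one uses \cite[Theorem~3.8]{CCHO}. Once $X\tr(G)$ is known to be simply connected, van Kampen diagrams over it automatically have only triangular faces. So the obstacle you anticipate does not arise; the paper likewise just invokes this fact.
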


\subsection{Proof of Theorem~\ref{meshed-modulo3}}

We first show that in any meshed graph $G$, for any root $s$, the
labeling $f_s:V(G) \to \{0,1,2\}$ satisfies rules \rMG, \rMGt, \rMGs,
\rMM. We first establish that the three rules are satisfied in any graph $G$ (even if $G$ is not meshed). 

\begin{lemma}\label{lem-gen-fr-ok}
  In any graph $G$, for any root $s \in V(G)$, the labeling
  $f_s: v \in V(G) \mapsto d(s,v) \pmod{3}$ satisfies \rMG, \rMGt,
  \rMGs.
\end{lemma}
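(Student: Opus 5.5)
The plan is to check the three rules directly, using only the fact that $d(s,\cdot)$ changes by at most $1$ along every edge. Write $d(\cdot)=d(s,\cdot)$. All three checks reduce to this Lipschitz property together with a parity or counting argument on small cycles.

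For \rMG, take a vertex $v$. If $v\neq s$, then $v$ has a neighbor $u$ on a shortest path to $s$, so $d(u)=d(v)-1$ and hence $f_s(u)=\precc(f_s(v))$. If $v=s$, then $f_s(v)=0$, and every neighbor $u$ of $s$ has $d(u)=1$, so $f_s(u)=1$. In both cases \rMG holds.

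For \rMGt, take a triangle $uvw$. The three values $d(u),d(v),d(w)$ pairwise differ by at most $1$, so they lie in some set $\{k,k+1\}$. Their residues mod $3$ therefore take at most two values, and $\{f_s(u),f_s(v),f_s(w)\}\neq\{0,1,2\}$.

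For \rMGs, take a square $uvwx$. Since consecutive distances differ by at most $1$ around a $4$-cycle, all four values lie in an interval of length at most $2$. Suppose the labels realize all of $\{0,1,2\}$. Then the distances realize exactly three consecutive values $k,k+1,k+2$; here I use that the span is at most $2$, so three distinct residues force three consecutive integers. In this situation, two vertices with equal labels have equal distances.

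It remains to show that the two vertices sharing a label are not adjacent. Suppose instead that two adjacent vertices of the square share a value, say $d(u)=d(v)$. The other two vertices must then carry the two remaining values. The value $k$ and the value $k+2$ cannot sit on adjacent vertices, since adjacent distances differ by at most $1$. I will enumerate the cyclic arrangements with one repeated value placed on an edge and check each against this constraint.

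If the repeated value is $k+1$, the pattern around the cycle is $k+1,k+1,\ast,\ast$ with the last two being $k$ and $k+2$ in some order. Those two vertices are adjacent in the cycle, which is impossible.

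If the repeated value is $k$, the remaining vertices get $k+1$ and $k+2$, and the vertex labeled $k+2$ is adjacent to one of the two $k$'s. Explicitly, going around the cycle as $k,k,a,b$: the vertex after the second $k$ must be $k+1$, and then $b=k+2$ is adjacent to the first $k$. This is a contradiction. The case where the repeated value is $k+2$ is symmetric.

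Hence the two equally labeled vertices are non-adjacent, and \rMGs holds. The only step needing care is this small case check for \rMGs; everything else is immediate from the Lipschitz property.
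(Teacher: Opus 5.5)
Your proof is correct and follows essentially the same route as the paper: both check the three rules directly, using only that $d(s,\cdot)$ changes by at most $1$ along every edge. The difference is cosmetic. The paper turns the label relation on each edge into an exact distance difference and handles the square with a single ``without loss of generality'' labeling $0,1,1,2$. You instead bound the range of distances on the triangle and on the square, and enumerate explicitly where the repeated value can sit.
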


\begin{proof}
  Consider a vertex $v \in V(G)$. If $v = s$, then $d(s,v) = 0$ and
  $d(s,v) = 1$ for all $u \in N(v)$. Otherwise, there exists
  $u \in N(v)$ such that $d(s,u) = d(s,v) -1$ and then
  $f_s(u) = \precc(f_s(v))$. Consequently, \rMG holds for every vertex
  $v \in V(G)$.

  Note that for any edge $uv$ of $G$,
  $d(s,v)-1 \leq d(s,u) \leq d(s,v)+1$. Consequently, if
  $f_s(u) = \precc(f_s(v))$ (respectively, $f_s(u) = f_s(v)$, or
  $f_s(u) = \next(f_s(v))$), then $d(s,u) = d(s,v) -1$ (respectively,
  $d(s,u) = d(s,v)$, or $d(s,u) = d(s,v) +1$). Suppose that \rMGt does
  not hold and consider a triangle $uvw$ of $G$ such that
  $L(u) = 0, L(v) = 1, L(w) = 2$. Then we have
  $d(s,u) = d(s,v) -1 = d(s,w) -2$, but this is impossible since $uw$ is
  an edge of $G$.  Consequently, \rMGt holds on every triangle of
  $G$. Suppose now that there exists a square $uvwx$ on which $\rMGt$
  does not hold. Without loss of generality, assume that
  $L(u) = 0, L(v) = 1, L(w) = 1$, and $L(x) = 2$. Then
  $d(s,u) = d(s,v)-1 = d(s,w) -1 = d(s,x)-2$. Again, this is impossible
  as $ux$ is an edge of $G$, and consequently, \rMGs holds on every
  square of $G$.
\end{proof}

We now show that $f_s$ satisfies $\rMM$ when the graph is meshed.
\begin{lemma}\label{lem-meshed-fr-ok}
  For any graph $G$ and any root $s \in V(G)$ such that $G$ satisfies
  QC$^-(s)$, the labeling $f_s: v \in V(G) \mapsto d(s,v) \pmod{3}$
  satisfies \rMM.
\end{lemma}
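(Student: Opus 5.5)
The plan is to reduce \rMM for $f_s$ to the distance version \rDM, via the observation from the proof of Lemma~\ref{lem-gen-fr-ok}: for any edge $ab$ of $G$, the value $f_s(a)$ relative to $f_s(b)$ determines $d(s,a)$ relative to $d(s,b)$. Precisely, $f_s(a)=\precc(f_s(b))$, $f_s(a)=f_s(b)$, $f_s(a)=\next(f_s(b))$ correspond respectively to $d(s,a)=d(s,b)-1$, $d(s,a)=d(s,b)$, $d(s,a)=d(s,b)+1$. This holds because $|d(s,a)-d(s,b)|\le 1$, and the three residues mod $3$ are distinct.

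So fix $v$ and $u,u'\in N(v)$ with $u\nsim u'$; then $d(u,u')=2$. There are two cases.

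\emph{First case:} $f_s(u)=f_s(v)=\next(f_s(u'))$. Applying the observation to the edges $uv$ and $vu'$ gives $d(s,u)=d(s,v)=d(s,u')+1$. By QC$^-(s)$ (in the form QC$_1^-$ with $w=v$), there is $x\sim u,u'$ with $d(s,x)=d(s,u')$. Hence $f_s(x)=f_s(u')$.

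\emph{Second case:} $f_s(u)=f_s(u')=\precc(f_s(v))$. The observation applied to $uv$ and $u'v$ gives $d(s,u)=d(s,u')=d(s,v)-1$. Now QC$^-(s)$ yields $x\sim u,u'$ with $2d(s,x)\le d(s,u)+d(s,u')=2d(s,u')$, so $d(s,x)\le d(s,u')$. Since $x\sim u'$, we also have $d(s,x)\ge d(s,u')-1$. Therefore $d(s,x)\in\{d(s,u')-1,d(s,u')\}$, which means $f_s(x)\in\{\precc(f_s(u')),f_s(u')\}$, exactly as \rMM requires.

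This is essentially the argument already used for \rDM in Theorem~\ref{th-meshed-distance}, transported mod $3$. The only delicate point is the translation between labels and distances, and it is settled by the edge observation. Because $x$ is adjacent to $u'$, the conclusion needs only the residue of $d(s,x)$, with no extra information, so I expect no real obstacle.
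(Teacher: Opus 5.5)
Your proof is correct and follows the paper's argument: you translate the labels on the edges $uv$ and $vu'$ into distance relations, then apply QC$_1^-(s)$ in the first case and the weak quadrangle condition in the second. The only difference is that the paper cites QC$_2^-(s)$ in the second case, while you rederive the bound $d(s,u')-1\le d(s,x)\le d(s,u')$ directly from QC$^-(s)$.
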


\begin{proof}
  Consider a vertex $v \in V(G)$ and two non-adjacent neighbors $u,u'$
  of $v$ in $G$. Suppose first that
  $f_s(u)=f_s(v)=\next(f_s(u'))$. This implies that
  $d(s,u) = d(s,v) = d(s,u') +1$. By (QC$_1^-(s)$), there exists
  $x \sim u,u'$ such that $d(s,x) = d(s,u') = d(s,u)-1$, and thus
  $f_s(x) = f_s(u')$.  Suppose now that
  $f_s(u)=f_s(u')=\precc(f_s(v))$. This implies that
  $d(s,u) = d(s,u') = d(s,v) -1$. By (QC$_2^-(s)$), there exists
  $x \sim u,u'$ such that $d(s,u') - 1 \leq d(s,x) \leq
  d(s,u')$. Consequently, $L(x) \in \{L(u'),\precc(L(u'))\}$. In both
  cases, there exists $x \sim u, u'$ with
  $L(x) \in \{L(u'),\precc(L(u'))\}$.  This shows that \rMM holds at
  every vertex $v \in V(G)$.
\end{proof}

We now consider a meshed graph $G$ and a labeling
$L: V(G)\rightarrow \{ 0,1,2\}$ satisfying rules \rMG, \rMGt, \rMGs,
and \rMM. We show that there exists a root $s \in V(G)$ such that
$L = f_s$. To do so, we consider the directed graph $\oG_L$ and we
want to show that $\oG_L$ is acyclic and contains a unique sink that
coincide with the root $s$.

We first show that if  the triangle-square complex of a graph $G$ is simply connected
(which is the case if $G$ is meshed~\cite{CCHO}), then \rMGt and \rMGs are enough to ensure the acyclicity
of $\oG_L$.

\begin{lemma}\label{lem-acyclic}
  If the triangle complex $X\tr(G)$ (respectively, the triangle-square
  complex $X\trsq(G)$) of a graph $G$ is simply connected and
  $L: V(G)\rightarrow \{0,1,2\}$ is a labeling satisfying the rule
  \rMGt (respectively, the rules \rMGt and \rMGs), then $\oG_L$ is
  acyclic.
\end{lemma}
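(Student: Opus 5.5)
We prove the statement with a winding-number (discrete 1-cochain) argument rather than a full Sperner-type count. Assign to each oriented edge $\overrightarrow{ab}$ of $G$ the value $\omega(a,b)\in\{-1,0,+1\}$: $\omega(a,b)=+1$ if $L(b)=\next(L(a))$, $-1$ if $L(b)=\precc(L(a))$, and $0$ if $L(a)=L(b)$. Thus $\omega(a,b)\equiv L(b)-L(a)\pmod 3$ and $\omega(b,a)=-\omega(a,b)$. For a closed walk $C=(c_0,c_1,\dots,c_m=c_0)$ define $\omega(C)=\sum_i \omega(c_i,c_{i+1})\in\mathbb Z$.

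A directed cycle of $\oG_L$ traversed along its arcs has every edge with $L(c_{i+1})=\precc(L(c_i))$, so $\omega(C)=-m\neq 0$. It therefore suffices to show that $\omega(C)=0$ for every closed walk $C$ of $G$.

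First, $\omega$ vanishes on the boundary of every $2$-cell.
\begin{itemize}
\item For a triangle $uvw$, \rMGt says the labels are not $\{0,1,2\}$, so at most two values occur. If all labels are equal, every term is $0$. If the labels are $a,a,b$ with $b=a\pm1$, the two edges joining $a$ to $b$ contribute opposite signs around the cycle, and the third edge contributes $0$.
\item For a square $uvwx$ whose labels use at most two values, a similar check applies: with two values $a,b=a+1$, each transition from $a$ to $b$ contributes $+1$ and each transition back contributes $-1$, and around a closed cycle these counts agree.
\item For a square using all three values, one value occurs twice. By \rMGs the two equal-labeled vertices are non-adjacent, so they are opposite, say $L(u)=L(w)=a$ with $\{L(v),L(x)\}=\{a+1,a-1\}$. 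Going $u\to v\to w\to x\to u$ then contributes $+1,-1,+1,-1$ or $-1,+1,-1,+1$ in some order, with total $0$.
\end{itemize}
In general the sum around any cell is a multiple of $3$ and has absolute value at most the cell's length. For a triangle this forces the sum to be $0$ or $\pm3$, and $\pm3$ is excluded exactly by \rMGt. For a square the sum is $0$ or $\pm3$, and $\pm3$ requires three edges of one sign and one $0$ edge. The $0$ edge joins two equal labels, which then occur adjacent on a square containing all three values, contradicting \rMGs.

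Next, $\omega(C)$ is invariant under elementary homotopies of closed walks in $X\trsq(G)$, or equivalently $\omega(C)=0$ for any $C$ admitting a disk diagram $(D,\varphi)$. Pull $\omega$ back to $D$ via $\varphi$; since $\varphi$ is cellular and sends edges to edges, this gives $\omega(\varphi(a),\varphi(b))$ on the edges of $D$. Summing $\omega$ over the boundaries of all inner faces of $D$, each oriented consistently, every interior edge is counted once in each direction and cancels. The total therefore equals $\omega(\partial D)=\omega(C)$. Each inner face maps to a triangle or square of $G$ with zero sum, so $\omega(C)=0$.

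By the paper's discussion of Van Kampen's lemma, simple connectivity of $X\tr(G)$ (respectively $X\trsq(G)$) guarantees such a disk diagram for every cycle $C$. In the triangle-only case every inner face is a triangle, so only \rMGt is needed.

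The main obstacle is possible degeneracies. A disk diagram may contain faces that $\varphi$ collapses, or edges whose endpoints map to the same vertex. The van Kampen diagrams used here are reduced, with faces mapping to genuine triangles and squares as stated in the paper, so the face-sum argument goes through. If needed, one can avoid diagrams altogether by checking invariance under each elementary homotopy directly: backtracking $aba$ contributes $0$, and replacing one side of a $2$-cell by its other side changes $\omega$ by the cell's sum, which is $0$.
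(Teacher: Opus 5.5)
Your proof is correct, but it takes a different route from the paper. The paper takes a directed cycle of length $3k$ and builds a triangulated disk diagram for it. It then invokes Musin's Sperner-type theorem, which gives at least $|\deg([0,1],D,L')| = k$ fully colored triangles, and maps one of them to a fully colored triangle of $G$, contradicting \rMGt. The square case is reduced to the triangle case by coning off each square $S$ with a new vertex $v_S$ whose label occurs twice on $S$; \rMGs is used only to check that this subdivided labeling still satisfies \rMGt.

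You instead show directly that the $\{-1,0,1\}$-valued cochain $\omega$ vanishes on the boundary of every $2$-cell, and then cancel interior edges in the disk diagram. Your ``multiple of $3$, absolute value at most the cell length'' argument is the clean way to see the vanishing, and it is exactly where \rMGt and \rMGs enter. Note that $\omega(\partial D)=3\deg([0,1],D,L')$ by Musin's Lemma 2.1, so your face-summing argument is in effect the proof underlying Musin's corollary, specialized to what is needed.

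What your route buys:
\begin{itemize}
\item It is self-contained and treats squares directly, with no need to build $G'$ or argue that its triangle complex is simply connected.
\item It works for arbitrary closed walks, so $C$ need not be simple and $\varphi$ need not be bijective on $\partial D$.
\item Your fallback via elementary homotopies avoids degenerate diagrams entirely.
\item It yields a stronger conclusion: $L$ lifts to an integer function $h\equiv L \pmod 3$ with $h(b)-h(a)=\omega(a,b)$ on every edge, and $h$ strictly decreases along the arcs of $\oG_L$.
\end{itemize}
What the paper's route buys is brevity, by citing a known result, together with a count of fully colored triangles that the lemma does not actually need.
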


The proof of \Cref{lem-acyclic} relies on a generalization of
Sperner's Lemma that is due to Musin~\cite{Musin}. Let $T$ be a plane
triangulation and let $\partial T = (u_0, u_1, \ldots, u_{m})$ be its
outer-face (or boundary). Consider an arbitrary labeling
$L: T\rightarrow \{ 0,1,2\}$ of the vertices of $T$ with labels
$0,1,2$.  A triangle $uvw$ of $T$ is a \emph{fully colored triangle}
if $\{ L(u),L(v),L(w)\}=\{ 0,1,2\}$.  Let $p_{01}$ (respectively, $p_{10}$)
denote the number of pairs of consecutive vertices $u_i$ and $u_{i+1}$
of $\partial T$ such that $L(u_i) = 0$ and $L(u_{i+1}) = 1$
(respectively $L(u_i) = 1$ and $L(u_{i+1}) = 0$), and let
$\deg([0,1],T,L)=p_{01}-p_{10}$. The quantities $\deg([1,2],T,L)$ and
$\deg([2,0],T,L)$ are defined in a similar way and it is shown in
\cite[Lemma 2.1]{Musin} that
$\deg([0,1],T,L)=\deg([1,2],T,L)=\deg([2,0],T,L)$. Musin established
the following generalization of Sperner's Lemma.

\begin{proposition}[{\!\!\cite[Corollary 3.1]{Musin}}]\label{th-musin}
  For any planar triangulation $T$, for any labeling
  $L: T\rightarrow \{ 0,1,2\}$, $T$ contains at least
  $|\deg([0,1],T,L)|$ fully colored triangles.
\end{proposition}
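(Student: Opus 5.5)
We prove Proposition~\ref{th-musin} by the classical parity/orientation counting argument behind Sperner's lemma, done with signs rather than parities. Fix a planar embedding of $T$ in which all inner faces are triangles and the outer face is bounded by the closed walk $\partial T=(u_0,u_1,\ldots,u_m)$, traversed counterclockwise (indices mod $m+1$). Orient every inner triangular face counterclockwise as well. For a closed walk $W=(w_0,\ldots,w_{k})$ define
\[
\sigma(W)=\#\{i: L(w_i)=0,\ L(w_{i+1})=1\}-\#\{i: L(w_i)=1,\ L(w_{i+1})=0\},
\]
so that $\deg([0,1],T,L)=\sigma(\partial T)$.

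\emph{Step 1: local computation on a triangle.} For a counterclockwise triangle $t=uvw$ we check $\sigma(t)$ by cases on the multiset of labels. If $0\notin\{L(u),L(v),L(w)\}$ or $1\notin\{L(u),L(v),L(w)\}$, no edge is of type $\{0,1\}$ and $\sigma(t)=0$. If the labels are $\{0,0,1\}$ or $\{0,1,1\}$, going around the triangle one crosses from the $0$-part to the $1$-part once and back once, so one edge is $0\to1$ and one is $1\to0$, giving $\sigma(t)=0$. Finally, if $t$ is fully colored, exactly one edge carries labels $\{0,1\}$, so $\sigma(t)=\pm1$ (namely $+1$ if the cyclic order is $0,1,2$ and $-1$ if it is $0,2,1$).

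\emph{Step 2: summation over faces.} We show
\[
\sum_{t \text{ inner face}}\sigma(t)=\sigma(\partial T).
\]
Every edge of $T$ lying between two inner faces is traversed once in each direction by the two counterclockwise boundaries, so its contributions cancel. Every edge-side incident to the outer face is traversed by exactly one inner face, in the same direction as in the counterclockwise traversal of $\partial T$. This needs a little care when $\partial T$ is not a simple cycle (bridges, cut vertices). I would phrase it in terms of edge-sides/darts: each dart of the plane graph belongs to exactly one face. Then the sum over all faces, with the outer face taken with reversed orientation, counts each edge once in each direction and hence is zero. Rearranging gives the identity. This bookkeeping for degenerate outer boundaries is the only point where I expect any subtlety; the rest is mechanical.

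\emph{Step 3: conclusion.} Combining Steps 1 and 2,
\[
\deg([0,1],T,L)=\sum_{t\ \text{fully colored}}\pm1 .
\]
By the triangle inequality, $|\deg([0,1],T,L)|$ is at most the number of fully colored triangles, which is the claim. (The equalities $\deg([0,1])=\deg([1,2])=\deg([2,0])$ from \cite[Lemma 2.1]{Musin} are not needed for this bound, though the same argument proves them, since each fully colored triangle contributes the same sign to all three quantities.)
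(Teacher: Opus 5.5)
Your proof is correct. The paper gives no proof of its own and simply imports the statement from Musin. Your signed Sperner count is the standard proof of this statement, essentially the combinatorial form of Musin's degree argument: non-fully-colored faces contribute $0$, fully colored faces contribute $\pm1$, and interior darts cancel, so $\sum_t \sigma(t)=\pm\deg([0,1],T,L)$, with the sign fixed by whether $\partial T$ is listed counterclockwise, which the absolute value makes irrelevant. Your dart-based bookkeeping also covers non-simple outer boundaries of the kind that arise in disk diagrams, and, as you observe, the same count yields the equalities of Musin's Lemma 2.1.
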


\begin{proof}[Proof of \Cref{lem-acyclic}]
  We first prove the result when $X\tr(G)$ is simply connected and
  when the labeling $L$ satisfies \rMGt.  Suppose that $\oG_L$ is not
  acyclic and consider a directed cycle
  $C = (v_0, v_1, \dots, v_{\ell-1})$ in $\oG_L$. Note that $\ell=3k$
  is a multiple of $3$.  Since $X\tr(G)$ is simply connected, the
  cycle $C$ admits a \emph{disk diagram} $(D,\varphi)$ where $D$ is a plane
  triangulation and $\varphi: D\to X\tr(G)$ is a cellular map that
  maps bijectively the outer-face
  $\partial D = (u_0, u_1, \ldots, u_{\ell-1})$ of $D$ to
  $C = (v_0, v_1, \dots, v_{\ell-1})$ with $\varphi(u_i) = v_i$. For
  each vertex $u \in V(D)$, let $L'(u) = L(\varphi(v))$. Observe that
  for each $0 \leq i < \ell$,
  $L'(u_{i+1}) = L(v_{i+1}) = \precc(L(v_i)) = \precc(L'(u_i))$ and
  thus $|\deg([0,1],D,L')| = k \geq 1$. By \Cref{th-musin}, there
  exists a fully colored triangle $uvw$ in $D$. Since $u, v, w$ have
  different colors in $D$, they have different images via $\varphi$ in
  $V(G)$ and since $\varphi$ is a cellular map,
  $\varphi(u)\varphi(v)\varphi(w)$ is a fully colored triangle of $G$
  but this is impossible, as \rMGt holds in $G$.

  Consider now a graph $G$ with a simply connected triangle-square
  complex $X\trsq(G)$ and let $L: V\rightarrow \{ 0,1,2\}$ be a
  labeling satisfying \rMGt and \rMGs. We transform $G$ into a
  graph $G'$ with a simply connected triangle complex $X\tr(G')$
  as follows. For each square $S=uvwx$ of $G$ we add
  a new vertex $v_S$ and make $v_S$ adjacent to the vertices
  $u,v,w,x$, only (i.e., add the edges $v_Su,v_Sv,v_Sw$, and $v_Sx$ to
  $E'$). The triangle complex $X\tr(G')$ of $G'$ is obtained by adding
  to $X\tr(G)$ the triangles $v_Suv$, $v_Svw$, $v_Swx$, $v_Sxu$ for
  each square $S=uvwx$ of $G$.  In fact, $X\tr(G')$ can be viewed as
  the partial barycentric subdivision of $X\trsq(G)$ where each square
  is subdivided in four triangles. Since $X\trsq(G)$ is simply
  connected, $X\tr(G')$ is simply connected as well.

  Consider a labeling $L$ of $G$ satisfying \rMGt and \rMGs. We now
  define a labeling $L' : V(G') \to \{0,1,2\}$ satisfying \rMGt as
  follows. For each $v \in V(G)$, let $L'(v) = L(v)$ and for each
  square $S=uvwx$, let $L'(v_S)$ be any label that appears at least
  twice among $L(u),L(v),L(w),L(x)$. Clearly, $L'$ is a labeling from
  $V(G')$ to $\{0,1,2\}$. We now show that $L'$ satisfies
  \rMGt. Suppose this is not the case and consider a triangle $abc$ in
  $X\tr(G')$ such that $\{L(a),L(b),L(c)\} = \{0,1,2\}$. Since \rMGt
  holds in $G$, we can assume that there exists a square $S=uvwx$ such
  that $a = u$, $b= v$, and $c = v_S$. Since $L'(v_S)$ is a color that
  appears at least twice among $L(u),L(v),L(w),L(x)$ and since
  $L'(v_S) \notin \{L(u),L(v)\}$, we have that
  $L'(v_S) = L(w) = L(x)$. Consequently, in the square $S = uvwx$ of
  $G$, we have $\{L(u),L(v),L(w),L(x)\} = \{0,1,2\}$ and $S$ contains
  two adjacent vertices $w$ and $x$ with the same label. But this is
  impossible, as \rMGs holds in $G$. This shows that in $G'$, the
  labeling $L'$ satisfies \rMGt.  Since $G'$ has a simply connected
  triangle complex $X\tr(G')$, by what is shown above, we deduce that
  the directed graph $\oG'_{L'}$ is acyclic. Since $L(v)=L'(v)$ for
  any vertex $v$ of $G$ and since $G$ is a subgraph of $G'$, $\oG_L$
  is a subgraph of $\oG'_{L'}$, and thus $\oG_L$ is also acyclic. This
  concludes the proof of \Cref{lem-acyclic}.
\end{proof}

By \Cref{lem-acyclic}, we deduce that $\oG_L$ is acyclic and thus it
contains at least one source $s$. We now show that if \rMG and
\rMM hold, then $L$ coincides with the map $f_s$. The proof uses
similar ideas as the proof of Theorem~\ref{th-meshed-distance}.

\begin{lemma}\label{lem-meshed-mod}
  Consider a graph $G$ and a labeling $L: V\rightarrow \{ 0,1,2\}$
  satisfying \rMG and \rMM. For any sink $s$ of $\oG_L$ and any
  $v \in V(G)$, we have $L(v) = f_s(v) = d(s,v) \pmod{3}$.
\end{lemma}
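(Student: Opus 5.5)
Let $s$ be a sink of $\oG_L$, so $s$ has no neighbor $u$ with $L(u)=\precc(L(s))$. By \rMG at $s$, the first alternative fails, so $L(s)=0$ and $L(u)=1$ for every $u\in N(s)$. Hence $L(v)=f_s(v)$ holds whenever $d(s,v)\le 1$.

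We then argue by induction on $k=d(s,v)$, as in the proof of Theorem~\ref{th-meshed-distance}. Assume $L=f_s$ on $B_k(s)$ with $k\ge 1$, and let $d(s,v)=k+1$. Choose $u\sim v$ with $d(s,u)=k$ and $w\sim u$ with $d(s,w)=k-1$. Then $v\nsim w$, and by induction $L(u)=f_s(u)$ and $L(w)=\precc(L(u))$. Suppose for contradiction that $L(v)\ne \next(L(u))$. There are two cases.

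\emph{Case $L(v)=L(u)$.} Apply \rMM at the centre $u$ with the non-adjacent pair $v,w$ (here $v$ plays the role of the first neighbour and $w$ that of $u'$). We have $L(v)=L(u)=\next(L(w))$, so there is $x\sim v,w$ with $L(x)\in\{L(w),\precc(L(w))\}$. Since $x\sim w$ and $x\sim v$, we get $d(s,x)\le k$ and $d(s,x)\ge k$, so $d(s,x)=k$. By induction $L(x)=\next(L(w))$, which is neither $L(w)$ nor $\precc(L(w))$, a contradiction.

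\emph{Case $L(v)=\precc(L(u))=L(w)$.} Apply \rMM at $u$ with $v,w$ as the pair, using the alternative $L(v)=L(w)=\precc(L(u))$. This gives $x\sim v,w$ with $L(x)\in\{L(w),\precc(L(w))\}$. Again $d(s,x)=k$, so $L(x)=\next(L(w))$, a contradiction.

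The only other possibility modulo $3$ is $L(v)=\next(L(u))=f_s(v)$, which completes the induction.

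No deep step is needed. The main care point is matching the roles in \rMM: the vertex playing $u'$ must be $w$ in both cases, and the alternative used is $L(v)=L(u)=\next(L(u'))$ in the first case and $L(v)=L(u')=\precc(L(u))$ in the second. The key observation is that any common neighbour of $v$ and $w$ lies at distance exactly $k$ from $s$, so the inductive hypothesis pins its label to $\next(L(w))$. Acyclicity is not used; sinks only serve to start the induction.
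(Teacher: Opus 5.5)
Your proof is correct and follows the paper's argument essentially verbatim. You start from \rMG at the sink, then run the induction by applying \rMM at $u$ to the non-adjacent pair $v,w$, and observe that any common neighbour of $v$ and $w$ lies at distance exactly $k$ from $s$; the only difference is that you treat the two alternatives of \rMM as separate cases with the roles matched explicitly, whereas the paper handles both in a single step.
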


\begin{proof}
  We prove the lemma by induction on $k=d(s,v)$. Since $s$ is a sink,
  $s$ has no neighbor labeled by $\precc(L(s))$ and thus by \rMG, we
  necessarily have $L(s) = 0$ and $L(u) = 1$ for any $u \in
  N(s)$. This shows that the property holds for $k =0$ and $k=1$.

  Assume that the assertion holds for all vertices at distance at most
  $k\geq 1 $ from $s$ and consider $v\in V(G)$ such that
  $d(s,v)=k+1\ge 2$. Pick any $u\sim v$ with $d(s,u)=k$ and any
  $w\sim u$ with $d(s,w)=k-1$. By induction hypothesis,
  $L(u) = d(s,u) \pmod{3}$ and $L(w)=d(s,w) \pmod{3}$. This implies
  that $L(w)=\precc(L(u))$. By contradiction, assume that
  $L(v) \neq d(s,w) \pmod{3}$. This implies that either
  $L(v)=\precc(L(u))=L(w)$ or $L(v)=L(u)=\next(L(w))$. In both cases,
  by \rMM applied to $u$, there exists $x\sim v,w$ such that
  $L(x) \in \{L(w),\precc(L(w))\}$.  Since $x\sim v,w$, we have
  $d(s,x) = k$ and thus by induction hypothesis, we have
  $L(x) = \next(L(w))$ which contradicts the fact that
  $L(x) \in \{L(w),\precc(L(w))\}$. This concludes the proof that
  $L(v)=d(s,v)\pmod{3}$.
\end{proof}

We can now prove Theorem~\ref{meshed-modulo3}

\begin{proof}[Proof of Theorem~\ref{meshed-modulo3}]
  The fact that in any meshed graph $G$, for any root $s$, the map
  $f_s$ satisfies \rMG, \rMGt, \rMGs, \rMM follows from
  Lemmas~\ref{lem-gen-fr-ok} and~\ref{lem-meshed-fr-ok}.

  Consider now a meshed graph $G$ and a labeling
  $L: V(G) \to \{0,1,2\}$ satisfying the rules  $\rMG, \rMGt, \rMGs, \rMM$
  at every $v \in V(G)$. Since the triangle-square complex $X\tr(G)$
  is simply connected~\cite[{Theorem~9.1}]{CCHO}, by
  \Cref{lem-acyclic}, $\oG_L$ is acyclic and thus contains at least
  one sink. By Lemma~\ref{lem-meshed-mod}, for any sink $s$ of $\oG_L$
  and any $v \in V(G)$, $L(v) = d(s,v) \pmod{3} = f_s(v)$. Suppose
  that $\oG_L$ contains two sinks $s, s'$ and let $u$ be a neighbor of
  $s$ on a shortest path from $s$ to $s'$. By
  Lemma~\ref{lem-meshed-mod} applied to $s$, we have $L(s) = 0$ and
  $L(u) = 1$. By Lemma~\ref{lem-meshed-mod} applied to $s'$, we have
  $L(u) = \precc(L(s))$, which is a contradiction. This shows that
  $\oG_L$ has a unique sink.  Designating the sink as a leader and every
  non-sink vertex as a non-leader, we thus have designed a local
  leader election protocol for meshed graphs using only the three
  labels $0,1,2$.

  Note that these protocols are $2$-local. Indeed, for any
  $v \in V(G)$, we can check whether the rules $\rMG$ and $\rMGt$
  hold at $v$ by considering only the labeled ball $B_1(v)$, and
  whether the rules $\rMG$ and $\rMGs$ hold at $v \in V(G)$ by
  considering the labeled ball $B_2(v)$.
\end{proof}

\subsection{Proof of Theorem~\ref{th-bridged-helly-election}}

The proof of Theorem~\ref{th-bridged-helly-election} follows exactly
the same approach as the proof of \Cref{meshed-modulo3}. We first show
that for any root $s \in V(G)$, the map $f_s$ satisfies the local
rules \rMG, \rMGt, \rMTC, \rMB (respectively, \rMG, \rMGt, \rMGs,
\rMH) if $G$ is a weakly bridged graph (respectively, a Helly graph).

\begin{lemma}\label{lem-fr-bridged-ok}
  For any weakly bridged graph $G$ and any root $s \in V(G)$, the labeling
  $f_s: v \mapsto d(s,v) \pmod{3}$ satisfies \rMG, \rMGt, \rMTC, \rMB.
\end{lemma}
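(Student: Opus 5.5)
The rules \rMG and \rMGt hold by \Cref{lem-gen-fr-ok}, which is valid in any graph. So only \rMTC and \rMB remain. The plan is to translate them back to the exact distance statements already verified in the proof of Theorem~\ref{th-bridged-distance}.

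The key tool is an observation from the proof of \Cref{lem-gen-fr-ok}. For any edge $uv$ of $G$ we have $|d(s,u)-d(s,v)|\le 1$. Hence $f_s(u)=\precc(f_s(v))$, $f_s(u)=f_s(v)$, and $f_s(u)=\next(f_s(v))$ are equivalent to $d(s,u)=d(s,v)-1$, $d(s,u)=d(s,v)$, and $d(s,u)=d(s,v)+1$, respectively. In other words, for adjacent vertices the mod 3 labels determine the exact difference of distances.

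For \rMTC, take $u\sim v$ with $f_s(u)=f_s(v)$. By the observation, $d(s,u)=d(s,v)$. Since weakly bridged graphs are weakly modular, (TC$(s)$) gives a vertex $x\sim u,v$ with $d(s,x)=d(s,v)-1$, so $f_s(x)=\precc(f_s(v))$. There is one degenerate case: if $d(s,u)=d(s,v)=0$, then $u=v=s$, which is impossible for an edge.

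For \rMB, take $u,u'\in N(v)$ with $f_s(u)=f_s(u')=\precc(f_s(v))$. The observation gives $d(s,u)=d(s,u')=d(s,v)-1=:k$. Now $u,u'\in B_k(s)$ while $v\notin B_k(s)$. Weakly bridged graphs have convex balls, so $v$ cannot lie on a shortest $(u,u')$-path. If $u\ne u'$ and $u\nsim u'$, then $d(u,u')=2$ and $v\in I(u,u')$, a contradiction. Hence $u\sim u'$. As in \rDB, the rule is read for distinct $u,u'$.

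I expect no real obstacle here. The only points needing care are the reduction from labels to distances, which relies on adjacency, and the edge case involving the root.
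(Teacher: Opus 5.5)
Your proposal is correct and follows essentially the same route as the paper: \rMG and \rMGt come from \Cref{lem-gen-fr-ok}, \rMTC comes from turning equal labels on an edge into equal distances and applying TC$(s)$, and \rMB comes from turning the labels into $d(s,u)=d(s,u')=d(s,v)-1$ and using convexity of $B_k(s)$. Your extra remarks, namely the root edge case and reading \rMB for distinct $u,u'$, are harmless clarifications that the paper leaves implicit.
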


\begin{proof}
  The fact that $f_s$ satisfies \rMG and \rMGt follows from
  \Cref{lem-gen-fr-ok}. Consider two adjacent vertices $u,v$ such that
  $L(u) = L(v)$. This implies that $d(s,v) = d(s,u)$. By TC$(s)$,
  there exists $x \sim u,v$ such that $d(s,x) = d(s,v)
  -1$. Consequently, $L(x) = \precc(L(v))$ and \rMTC holds.
Consider now a vertex $v$ and two neighbors $u, u'$ of $v$ such that
  $L(u)=L(u')=\precc(L(v))$. This implies that
  $d(s,u) = d(s,u') = d(s,v)-1$. Since the balls in $G$ are convex,
  $v$ does not belong to a shortest path from $u$ to $u'$, and
  consequently, $u\sim u'$. This shows that \rMB holds. 
\end{proof}

\begin{lemma}\label{lem-fr-helly-ok}
  For any Helly graph $G$ and any root $s \in V(G)$, the labeling
  $f_s: v \mapsto d(s,v) \pmod{3}$ satisfies \rMG, \rMGt, \rMGs, \rMH.
\end{lemma}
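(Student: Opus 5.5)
The plan is to reuse the argument pattern of the preceding lemmas and the Helly-property argument from the proof of Theorem~\ref{th-helly-distance}. First, \rMG, \rMGt and \rMGs hold in any graph by \Cref{lem-gen-fr-ok}, so only \rMH needs work.

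The key tool is the observation already made in the proof of \Cref{lem-gen-fr-ok}. For an edge $uv$, the value of $f_s(u)$ relative to $f_s(v)$ determines $d(s,u)-d(s,v)\in\{-1,0,1\}$:
\begin{itemize}
\item $f_s(u)=\precc(f_s(v))$ iff $d(s,u)=d(s,v)-1$;
\item $f_s(u)=f_s(v)$ iff $d(s,u)=d(s,v)$.
\end{itemize}

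Now fix $v$ and suppose some $y\in N(v)$ has $f_s(y)=\precc(f_s(v))$. Then $d(s,y)=d(s,v)-1$, so $v\neq s$; write $d(s,v)=k+1$ with $k\ge 0$. Let $S$ be the set of neighbors $u$ of $v$ with $f_s(u)\in\{f_s(v),\precc(f_s(v))\}$. By the translation above, $S$ is exactly the set of neighbors $u$ with $d(s,u)\le k+1$.

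Consider the family of balls $B_k(s)$, $B_1(v)$, and $B_1(u)$ for each $u\in S$. I check that they pairwise intersect:
\begin{itemize}
\item $v$ lies in $B_1(v)$ and in every $B_1(u)$ with $u\in S$, so those balls meet pairwise and each meets $B_1(v)$;
\item $B_k(s)\cap B_1(v)$ contains $y$;
\item $B_k(s)\cap B_1(u)$ is nonempty for $u\in S$, since $d(s,u)\le k+1$ means $u$ is within distance $1$ of $B_k(s)$.
\end{itemize}

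By the Helly property there is a vertex $x$ in all these balls. Since $x\in B_k(s)$ and $d(s,v)=k+1$, we have $x\neq v$, hence $x\sim v$ and $d(s,x)=k$. Therefore $f_s(x)=\precc(f_s(v))$. Moreover $x\in B_1(u)$ for every $u\in S$, i.e., $S\subseteq N[x]$, which is exactly \rMH.

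I expect no real obstacle; the argument is essentially the same as in Theorem~\ref{th-helly-distance}. The only delicate point is the mod-3 bookkeeping: the hypothesis of \rMH must be read as "$v\neq s$", and the set $S$ defined by labels must coincide with the set defined by distances. Both follow from the translation above, which holds because distances along an edge differ by at most $1$.
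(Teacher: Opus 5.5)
Your proof is correct and is essentially the paper's argument. The paper cites the fact, established in the proof of Theorem~\ref{th-helly-distance} via the Helly property applied to $B_k(s)$, $B_1(v)$ and the $B_1(u)$, that $d(s,\cdot)$ satisfies \rDH, and then translates labels into distance differences along edges; you inline that same Helly argument, and you also explicitly cover \rMGt via \Cref{lem-gen-fr-ok} and check the pairwise intersections.
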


\begin{proof}
  The fact that $f_s$ satisfies \rMG and \rMGs follows from
  \Cref{lem-gen-fr-ok}. Consider a vertex $v$, let $k = d(s,v)$, and
  assume that there exists a vertex $y \sim v$ such that
  $L(y) = \precc(L(v))$. Note that this implies that $L(y) = k-1$ and
  thus $d(s,v) > 0$. Since $d(s,\cdot)$ satisfies \rDH, there exists
  $x \sim v$ with $d(s,x) = k-1$ that is adjacent to all $u \in N(v)$
  such that $d(s,u) \leq k$. Note that for any $u \in N(v)$, we have
  $d(s,u) \leq k$ if and only if $L(u) \in
  \{\precc(L(v)),L(v)\}$. Consequently, this shows that \rMH holds.
\end{proof}

We now consider a graph $G$ and a labeling $L: V(G) \to \{0,1,2\}$
satisfying the rules \rMG and \rMGt. We show that if $G$ is a weakly bridged
graph and if $L$ satisfies \rMTC and \rMB, or if $G$ is a Helly graph
and $L$ satisfies \rMH, then there exists $s$ such that $L =
f_s$. Since bridged and Helly graphs have simply connected triangle
complexes, by \Cref{lem-acyclic}, the graph $\oG_L$ is acyclic and it
contains at least one source $s$. We now show that $L$ coincides with
the map $f_s$ when the graph is weakly bridged (respectively, Helly) and when
$L$ satisfies rules \rMG, \rMTC, and \rMB (respectively, \rMG and
\rMH). The following lemmas are the equivalent of
\Cref{lem-meshed-mod}.

\begin{lemma}\label{lem-bridged-tcb}
  Consider a graph $G$ and a labeling $L: V\rightarrow \{ 0,1,2\}$
  satisfying \rMG, \rMTC, and \rMB. For any sink $s$ of $\oG_L$ and any
  $v \in V(G)$, we have $L(v) = f_s(v) = d(s,v) \pmod{3}$.
\end{lemma}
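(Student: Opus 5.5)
We argue by induction on $k=d(s,v)$, exactly as in the proofs of \Cref{lem-meshed-mod} and Theorem~\ref{th-bridged-distance}. For the base cases $k\le 1$: $s$ is a sink, so no neighbor of $s$ carries the label $\precc(L(s))$. Rule \rMG then forces $L(s)=0$ and $L(u)=1$ for every $u\in N(s)$.

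For the inductive step, assume the claim for all vertices at distance at most $k\ge 1$ from $s$, and take $v$ with $d(s,v)=k+1$. Choose $u\sim v$ with $d(s,u)=k$ and $w\sim u$ with $d(s,w)=k-1$; note that $v\nsim w$. By induction, $L(w)=\precc(L(u))$. Suppose $L(v)\neq \next(L(u))$. Then either $L(v)=\precc(L(u))$ or $L(v)=L(u)$.

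\emph{Case $L(v)=\precc(L(u))$.} Here $v$ and $w$ are two neighbors of $u$ with $L(v)=L(w)=\precc(L(u))$. Rule \rMB at $u$ gives $v\sim w$, contradicting $d(s,v)=d(s,w)+2$.

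\emph{Case $L(v)=L(u)$.} Rule \rMTC applied at $u$ to the edge $uv$ gives $x\sim u,v$ with $L(x)=\precc(L(u))=L(w)$. Since $x\sim u$, we have $d(s,x)\in\{k-1,k,k+1\}$, and since $x\sim v$, $d(s,x)\ge k$.
\begin{itemize}
\item If $d(s,x)=k$, the induction hypothesis gives $L(x)=k \bmod 3=L(u)\neq \precc(L(u))$, a contradiction.
\item If $d(s,x)=k+1$, then $x$ is a neighbor of $u$ at distance $k+1$ with $L(x)=\precc(L(u))$. This is the previous case with $x$ in place of $v$: \rMB at $u$ applied to $x$ and $w$ gives $x\sim w$, contradicting $d(s,x)=d(s,w)+2$.
\end{itemize}

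Hence $L(v)=\next(L(u))=(k+1)\bmod 3$, which completes the induction.

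The only delicate point is the subcase $d(s,x)=k+1$. It is settled by applying \rMB to $x$ rather than to $v$, just as in the proof of Theorem~\ref{th-bridged-distance}. This works because the first case used only that the vertex lies at distance $k+1$, is adjacent to $u$, and carries the label $\precc(L(u))$; it never used the choice of $v$.
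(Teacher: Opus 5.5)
Your proof is correct and follows essentially the same route as the paper's. It uses the same base case via the sink and \rMG, settles the case $L(v)=\precc(L(u))$ by \rMB at $u$, and settles the case $L(v)=L(u)$ by \rMTC, reducing the subcase $d(s,x)=k+1$ to the first case with $x$ in place of $v$.
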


\begin{proof}
  We prove the lemma by induction on $k=d(s,v)$. Since $s$ is a sink,
  $s$ has no neighbor labeled by $\precc(L(s))$ and thus by \rMG, we
  necessarily have $L(s) = 0$ and $L(u) = 1$ for any $u \in
  N(s)$. This shows that the property holds for $k =0$ and $k=1$.

  Assume that the assertion holds for all vertices at distance at most
  $k\geq 1 $ from $s$ and consider $v\in V(G)$ such that
  $d(s,v)=k+1\ge 2$.  Pick any $u\sim v$ with $d(s,u)=k$ and any
  $w\sim u$ with $d(s,w)=k-1$. By induction hypothesis,
  $L(u) = d(s,u) \pmod{3}$ and $L(w)=d(s,w) \pmod{3}$. This implies
  that $L(w)=\precc(L(u))$. By contradiction, assume that
  $L(v) \neq d(s,w) \pmod{3}$. This implies that either
  $L(v)=\precc(L(u))=L(w)$ or $L(v)=L(u)=\next(L(w))$. If
  $L(v) = L(w) = \precc(L(u))$, then by \rMB applied at $u$, we have
  $v \sim w$, but this is impossible as $d(s,v) = d(s,w)+2$. If
  $L(v) = L(u)$, then by \rMTC, there exists $x \sim u,v$ such that
  $L(x)=\precc(L(u))$. Note that $k \leq d(s,x) \leq k+1$. If
  $d(s,x) = k+1$, we get a contradiction by the previous case,
  replacing $v$ by $x$. Consequently, $d(s,x) = k = d(s,u)$ and by
  induction hypothesis, $L(x) = d(s,x) \pmod{3} = L(u)$, contradicting
  the fact that $L(x) = \precc(L(u))$. This concludes the proof that
  $L(v) = f_s(v) = d(s,v)\pmod{3}$.
\end{proof}

\begin{lemma}\label{lem-helly-mh}
  Consider a graph $G$ and a labeling $L: V\rightarrow \{ 0,1,2\}$
  satisfying \rMG and \rMH. For any sink $s$ of $\oG_L$ and any
  $v \in V(G)$, we have $L(v) = f_s(v) = d(s,v) \pmod{3}$.
\end{lemma}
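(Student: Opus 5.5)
We argue by induction on $k=d(s,v)$, exactly as in the proofs of \Cref{lem-meshed-mod} and \Cref{lem-bridged-tcb}.

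\emph{Base cases.} Since $s$ is a sink of $\oG_L$, no neighbor of $s$ has label $\precc(L(s))$. Rule \rMG then forces $L(s)=0$ and $L(u)=1$ for every $u\in N(s)$. This settles $k=0$ and $k=1$.

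\emph{Inductive step.} Assume the claim for all vertices at distance at most $k\ge 1$ from $s$, and let $d(s,v)=k+1$. Choose $u\sim v$ with $d(s,u)=k$ and $w\sim u$ with $d(s,w)=k-1$. Then $v\nsim w$. By induction, $L(u)=k \pmod 3$ and $L(w)=\precc(L(u))$. Suppose, for contradiction, that $L(v)\neq (k+1)\pmod 3$. Then $L(v)\in\{L(u),\precc(L(u))\}$.

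Now apply \rMH at $u$. Its hypothesis holds, witnessed by $y=w$. So there is $x\sim u$ with $L(x)=\precc(L(u))$ such that every neighbor $z$ of $u$ with $L(z)\in\{L(u),\precc(L(u))\}$ lies in $N[x]$. Both $w$ and $v$ are such neighbors, so $w\in N[x]$ and $v\in N[x]$.

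We first rule out that $x$ coincides with $w$ or $v$. We have $x\neq w$ and $x\neq v$ because $v\nsim w$: if $x=w$ then $v\in N[w]$, and if $x=v$ then $w\in N[v]$, both impossible. Hence $x\sim v$ and $x\sim w$.

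This pins down $d(s,x)$. Since $d(s,w)=k-1$ and $d(s,v)=k+1$, any common neighbor of $w$ and $v$ satisfies $d(s,x)=k$. By the induction hypothesis, $L(x)=k\pmod 3=L(u)$. This contradicts $L(x)=\precc(L(u))$. Therefore $L(v)=(k+1)\pmod 3=f_s(v)$.

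The only delicate point is this last one: checking that $x$ is genuinely a common neighbor of $v$ and $w$ (not equal to either), so that its distance to $s$ is forced to be $k$. Once that holds, the induction hypothesis gives the contradiction directly.
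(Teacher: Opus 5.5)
Your proof is correct and follows the paper's argument essentially step for step. It uses induction on $d(s,v)$, with the base case from \rMG at the sink. In the inductive step, \rMH applied at $u$ (with witness $w$) gives a vertex $x$ with $L(x)=\precc(L(u))$ adjacent to both $v$ and $w$; then $d(s,x)=k$ contradicts the induction hypothesis. Your explicit check that $x\neq v,w$ is needed, since \rMH only gives $v,w\in N[x]$, and the paper leaves this point implicit.
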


\begin{proof}
  We prove the lemma by induction on $k=d(s,v)$. Since $s$ is a sink,
  $s$ has no neighbor labeled by $\precc(L(s))$ and thus by \rMG, we
  necessarily have $L(s) = 0$ and $L(u) = 1$ for any $u \in
  N(s)$. This shows that the property holds for $k =0$ and $k=1$.

  Assume that the assertion holds for all vertices at distance at most
  $k\geq 1 $ from $s$ and consider $v\in V(G)$ such that
  $d(s,v)=k+1\ge 2$.  Pick any $u\sim v$ with $d(s,u)=k$ and any
  $w\sim u$ with $d(s,w)=k-1$. By induction hypothesis,
  $L(u) = d(s,u) \pmod{3}$ and $L(w)=d(s,w) \pmod{3}$.  This implies
  that $L(w)=\precc(L(u))$. By contradiction, assume that
  $L(v) \neq d(s,w) \pmod{3}$. This implies that
  $L(v) \in \{L(u),\precc(L(u))\}$. Since $w$ is a neighbor of $u$
  with $L(w) = \precc(L(u))$, by \rMH applied at $u$, there exists
  $x \sim u$ with $L(x) = \precc(L(u)) = L(w)$ such that $x\sim
  w,v$. Since $d(s,v) = k+1 = d(s,w) +2$, we have
  $d(s,x) =k = d(s,u)$. By induction hypothesis, we then have
  $L(x) = k \pmod{3} = L(u)$, but this contradicts the fact that
  $L(x) = \precc(L(u))$.  This concludes the proof that
  $L(v) = f_s(v) = d(s,v)\pmod{3}$.
\end{proof}

We can now prove Theorem~\ref{th-bridged-helly-election}

\begin{proof}[Proof of Theorem~\ref{th-bridged-helly-election}]
  The fact that in any weakly bridged graph $G$ (respectively, Helly graph
  $G$), for any root $s$, the map $f_s$ satisfies \rMG, \rMGt, \rMTC,
  \rMB (respectively \rMG, \rMGt, \rMH) follows from
  Lemma~\ref{lem-gen-fr-ok} and Lemma~\ref{lem-fr-bridged-ok}
  (resp. Lemma~\ref{lem-fr-helly-ok}).

  Consider now a weakly bridged graph $G$ (respectively, a Helly graph $G$) and a
  labeling $L: V(G) \to \{0,1,2\}$ satisfying the rules  \rMG, \rMGt,
  \rMTC, \rMB (respectively, \rMG, \rMGt, \rMH) at every $v \in
  V(G)$. Since the triangle complex $X\tr(G)$ is simply
  connected, by \Cref{lem-acyclic}, $\oG_L$ is acyclic and thus
  contains at least one sink. By Lemma~\ref{lem-bridged-tcb}
  (respectively, Lemma~\ref{lem-helly-mh}), for any sink $s$ of
  $\oG_L$ and any $v \in V(G)$, $L(v) = d(s,v) \pmod{3} = f_s(v)$. For
  the same reasons as in the proof of \Cref{meshed-modulo3}, this
  implies that $\oG_L$ has a unique sink.  Designating the sink as a
  leader and every non-sink vertex as a non-leader, we thus have
  designed a local leader election protocol for weakly bridged graphs
  (respectively, Helly graphs) using only the three labels $0,1,2$.

  Note that these protocols are $1$-local. Indeed, for any
  $v \in V(G)$, we can check whether the rules \rMG, \rMGt, \rMTC,
  \rMB, and \rMH hold by considering only the labeled ball $B_1(v)$.
\end{proof}

\section{Local recognition}\label{sec-recognition}
The goal of this section, is to prove the following local recognition
result:

\begin{theorem}\label{local-recogn} Each of the following
  classes of meshed graphs admits an $r$-local recognition protocol
  with labels of size $O(\log D)$:
  \begin{enumerate}[(1)]
  \item chordal and  bridged graphs  with $r=1$,
  \item weakly bridged graphs and Helly graphs with $r=2$,
  \item weakly modular, modular, median, pseudo-modular, bucolic, dual
    polar, sweakly modular graphs with $r=3$,
  \item basis graphs of matroids and of even $\Delta$-matroids with
    $r=3$.
  \end{enumerate}
\end{theorem}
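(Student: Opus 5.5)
The plan is to combine two ingredients. The first is the distance verifiers of Section 4. The second is the known local-to-global (Cartan--Hadamard type) characterizations of the subclasses.

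The key new step is a soundness statement for the verifier of Theorem~\ref{th-meshed-distance} on arbitrary graphs. If a graph $G$ with labeling $D:V(G)\to\N$ satisfies \rDGz, \rDG and \rDM at every vertex, then $X\trsq(G)$ is simply connected. Similarly, if \rDGz, \rDG, \rDTC, \rDB hold, or \rDGz, \rDG, \rDH hold, then the triangle complex $X\tr(G)$ is simply connected.

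To prove this, I would take any cycle $C$ and contract it by induction on the pair (maximum label $M$ on $C$, number of vertices of $C$ with label $M$). By \rDG, labels along $C$ change by at most one per edge. Pick a vertex $v$ of $C$ with $D(v)=M$, and let $u,u'$ be its two neighbours on $C$.
\begin{itemize}
\item If $D(u)=D(u')=M-1$: when $u\sim u'$, cut off the triangle $uvu'$. Otherwise \rDM gives $x\sim u,u'$ with $D(x)\le M-1$, and we replace $v$ by $x$ through the square $uvu'x$ (or through two triangles if $x\sim v$).
\item If $D(u)=M$: one first uses the previous case at a neighbour with smaller label, or applies \rDM to $v$'s down-neighbour given by \rDG. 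This produces either a triangle or a square that lowers the count.
\end{itemize}
If $M=0$, the cycle is trivial, because \rDGz forbids two adjacent vertices labelled $0$. For a homotopically nontrivial cycle, each elementary homotopy across a triangle or square strictly decreases the potential, so $C$ is null-homotopic in $X\trsq(G)$.

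For the bridged and Helly rules, the same argument works. There \rDB or \rDH supplies a common lower neighbour, which yields only triangles, so $X\tr(G)$ is simply connected.

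With this in hand, each class is treated the same way. The certificate is $D=d(s,\cdot)$ for an arbitrary root $s$, which has size $O(\log D)$. The verifier checks the distance rules, which accept every graph of the class since the class is meshed. It also checks the local condition from the relevant local-to-global theorem. Soundness then follows from simple connectivity together with that theorem. The citations are as follows.
\begin{itemize}
\item Chordal and bridged graphs: \cite{Ch_CAT}, namely clique complex simply connected together with no induced $W_4,W_5$. This is checkable in $B_1(v)$ together with \rDTC, \rDB, giving $r=1$. For chordal graphs, one adds that $N(v)$ contains no induced cycle of length at least $4$.
\item Weakly bridged graphs and Helly graphs: \cite{ChOs,CCHO}, with the local conditions living in $B_2$.
\item Weakly modular graphs and their subclasses: \cite{CCHO,BrChChGoOs}, where the condition is local weak modularity (TC and QC for triples within distance $3$) plus forbidden small subgraphs, so $r=3$.
\item Basis graphs: Maurer's and Chepoi's characterizations \cite{Mau,Ch_delta}, which need only $B_3$.
\end{itemize}
One should double-check that each cited theorem is stated with triangle-square (resp. triangle/clique) complex simple connectivity. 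For the flag or clique complex versions, simple connectivity of $X\tr(G)$ suffices, since the $2$-skeleton of the clique complex is $X\tr(G)$.

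The main obstacle is the contraction argument when $D(u)=M$ along $C$, that is, plateaus at the maximum level. There I would first insert, for each plateau edge $vu$, the triangle or square provided by the rules. Concretely, applying \rDM to $v$, with $u$ and a down-neighbour $y$ of $v$ that is not adjacent to $u$, gives $x\sim u,y$ with $D(x)\le M-1$. This pushes the cycle down, and I expect care in ensuring that the potential really decreases.
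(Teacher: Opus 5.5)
Your overall architecture is the paper's: a lemma that the distance rules force $X\trsq(G)$ (resp. $X\tr(G)$) to be simply connected, combined with Cartan--Hadamard characterizations. Item (4), however, has a genuine gap. Maurer's characterization \cite{Mau} and its extension \cite{Ch_delta} use the Positioning Condition (PC), which quantifies over \emph{every} vertex $s$ of $G$ and every square. It is a global condition, not visible in any $B_3(v)$. A single-root certificate $D=d(s,\cdot)$ only lets you test it for that one $s$. So your claim that these characterizations ``need only $B_3$'' is false, and item (4) does not follow from them; this is exactly where your proposed double-check would fail. What is needed is the local-to-global theorem of \cite{ChChOs}, which answered Maurer's conjecture. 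It states that $G$ is a basis graph of a matroid (resp. of an even $\Delta$-matroid) iff $X\trsq(G)$ is simply connected and $G$ satisfies (LPC) and (IC3) (resp. (LPC), (IC4) and (LC)). These conditions are $3$-local, and your contraction lemma supplies the simple connectivity.

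A smaller hole is in item (1). Testing that no $N(v)$ contains an induced cycle of length at least $4$ is the right $1$-local check. However, soundness needs the fact that a bridged graph with no induced $W_k$, $k\ge 4$, is chordal, and you assert this without justification. The paper proves it as follows. Take a shortest non-triangular induced cycle $C$ minimizing $\sum_i d(v_0,v_i)$. Convexity of $B_{p-1}(v_0)$ and TC$(v_0)$ then produce a vertex $c$ that either lowers this sum or is adjacent to all of $C$, forming a $W_k$. Similarly, pseudo-modularity is not a forbidden-subgraph condition. It is local only because metric triangles of weakly modular graphs are strongly equilateral.

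Your contraction lemma itself is fine, and slightly sharper than the paper's, which adds \rDTC to the meshed and Helly verifiers. You can do without it. \rDH at $v$ directly yields a common lower neighbour of a top-level edge $vu$. Likewise, \rDM at $v$, applied to $u$ and a non-adjacent down-neighbour $y$ of $v$, gives $x\sim u,y$ at level $M-1$, hence the triangle $vux$ or the square $vyxu$. But this square move leaves your stated potential (top label, number of top-level vertices) unchanged. You must insert the number of top-level edges as a middle coordinate, which is precisely the paper's lexicographic $(r,e,m)$.
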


Consider a graph $G$ endowed with a labeling $D: V(G) \to
\N$. Consider the 2-local verifier $\cA_M$ that accepts at a vertex $v$
if the rules \rDGz, \rDG, \rDTC, and \rDM are satisfied at
$v$. Similarly, let $\cA_B$ be the $1$-local verifier that accepts at
a vertex $v$ if the rules \rDGz, \rDG, \rDTC, and \rDB are satisfied at
$v$. Finally, let $\cA_H$ be the $1$-local verifier that accepts at a
vertex $v$ if the rules \rDGz, \rDG, \rDTC, and \rDH are satisfied at
$v$.  We show that $\cA_M$ accepts all meshed graphs (as well as
potentially many other graphs). Similarly, $\cA_B$ accepts all weakly
bridged graphs and $\cA_H$ accepts all Helly graphs.

\begin{lemma}\label{lem-recog-ds}
  \begin{enumerate}[(1)]
  \item For any meshed graph $G$ and any root $s \in V(G)$, if
  $D(v) = d(s,v)$ for every $v \in V(G)$, then the verifier $\cA_M$
  accepts at every vertex $v \in V(G)$.

  \item For any weakly bridged graph $G$, and any root
  $s \in V(G)$, if $D(v) = d(s,v)$ for every $v \in V(G)$, then the
  verifier $\cA_B$ accepts at every vertex $v \in V(G)$. 

  \item For any Helly graph $G$, and any root $s \in V(G)$, if
  $D(v) = d(s,v)$ for every $v \in V(G)$, then the verifier $\cA_H$
  accepts at every vertex $v \in V(G)$.
  \end{enumerate}
\end{lemma}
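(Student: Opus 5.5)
The plan is to reduce each item to the completeness halves of Theorems~\ref{th-meshed-distance}, \ref{th-bridged-distance} and \ref{th-helly-distance}. The only rule that each verifier adds beyond those theorems is \rDTC.

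For item (1), Theorem~\ref{th-meshed-distance} already shows that $D=d(s,\cdot)$ satisfies \rDGz, \rDG, and \rDM in any meshed graph. It then suffices to check \rDTC. We use the fact, recalled in Section~\ref{sec-wm}, that (QC$^-$) implies (TC); I will verify this directly for a fixed root $s$.

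Take $u\sim v$ with $d(s,u)=d(s,v)=k\geq 1$. Note that $k=0$ is impossible, since only $s$ has distance $0$. Pick a neighbor $w$ of $v$ with $d(s,w)=k-1$.
\begin{itemize}
\item If $w\sim u$, then $x=w$ works.
\item Otherwise $d(u,w)=2$ and $d(s,u)=d(s,v)=d(s,w)+1$. Then (QC$_1^-(s)$), applied to $u,w$ with the common neighbor $v$, gives $x\sim u,w$ with $d(s,x)=k-1$. This $x$ is a neighbor of $u$ at distance $k-1$, but it need not be adjacent to $v$.
\end{itemize}

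The main obstacle is the second case. To handle it, I would argue by induction, or by a minimal counterexample, on $k$. Alternatively, I would fall back on the cited fact that meshed graphs satisfy (TC) and apply it with root $s$, which gives \rDTC immediately. For safety I would simply invoke that fact as stated in the paper, namely that the weak quadrangle condition (QC$^-$) implies the triangle condition (TC). With it, \rDTC holds for $D=d(s,\cdot)$.

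For item (2), weakly bridged graphs are weakly modular, so they satisfy (TC). The proof of Theorem~\ref{th-bridged-distance} already checks \rDGz, \rDG, \rDTC, and \rDB for $D=d(s,\cdot)$, so $\cA_B$ accepts everywhere.

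For item (3), Theorem~\ref{th-helly-distance} gives \rDGz, \rDG, and \rDH. For \rDTC, take $u\sim v$ with $d(s,u)=d(s,v)=k\geq 1$. The vertex $x$ provided by \rDH at $v$ (established in that proof via the Helly property) satisfies $d(s,x)=k-1$ and is adjacent to every neighbor of $v$ at distance at most $k$ from $s$. In particular $x\sim u$, and $x\sim v$, so \rDTC holds. Alternatively, Helly graphs are weakly modular and hence satisfy (TC). This completes all three items.
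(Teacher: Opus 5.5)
Your proposal is correct and takes essentially the same route as the paper. Each item reduces to the completeness half of Theorems~\ref{th-meshed-distance}, \ref{th-bridged-distance}, and \ref{th-helly-distance}, with \rDTC supplied by (TC$(s)$): meshed graphs satisfy it because (QC$^-$) implies (TC), and weakly bridged and Helly graphs satisfy it as weakly modular graphs. Your direct derivation of \rDTC from \rDH in item (3) is a valid self-contained alternative, using $u\in N[x]$ and $D(u)\neq D(x)$. You were also right not to rely on the unfinished argument in item (1). (QC$^-(s)$) for the single root $s$ does not imply (TC$(s)$), as the house graph rooted at its apex shows, so the implication genuinely needs (QC$^-$) at other base points too.
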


\begin{proof}
  Consider a meshed graph $G$ and a root $s \in V(G)$.  The fact that
  $D=d(s,\cdot)$ satisfies \rDGz, \rDG, and \rDM at every vertex $v$
  follows from Theorem~\ref{th-meshed-distance}. The fact that \rDTC
  holds at every vertex $v$ follows from the fact that meshed graphs
  satisfy TC$(s)$.

  From Theorem~\ref{th-bridged-distance}, for any weakly bridged graph
  $G$ and for any root $s \in V(G)$, the map $D=d(s,\cdot)$ satisfies
  \rDGz, \rDG, \rDTC and \rDB.

  From Theorem~\ref{th-helly-distance}, for any Helly graph
  $G$ and for any root $s \in V(G)$, the map $D=d(s,\cdot)$ satisfies
  \rDGz, \rDG, and \rDH. The fact that \rDTC holds at every vertex $v$
  follows from the fact that Helly graphs satisfy TC$(s)$.
\end{proof}

We now consider an arbitrary graph $G$ and a labeling function
$D: V(G) \to \N$. We show that if $\cA_M$ (respectively, $\cA_B$ or
$\cA_H$) accepts $G$, then the triangle-square complex $X\trsq(G)$
(respectively, the triangle complex $X\tr(G)$) is simply
connected. The proof of this lemma follows the strategy of the proof
of~\cite[Lemma~3.41]{CCHO} but we show here that we can replace the
distances $d(s,v)$ to a fixed root $s$ by the values $D(v)$.

\begin{lemma}\label{lem-recog-sc}
  
  \begin{enumerate}[(1)]
  \item For any graph $G$ and any labeling $D: V(G) \to \N$, if \rDGz,
    \rDG, \rDTC, and \rDM hold at every vertex $v \in V(G)$, then the
    triangle-square complex $X\trsq(G)$ is simply connected.
  \item For any graph $G$ and any labeling $D: V(G) \to \N$, if \rDGz,
    \rDG, \rDTC, and \rDB hold at every vertex $v \in V(G)$, then the
    triangle complex $X\tr(G)$ is simply connected.
  \item For any graph $G$ and any labeling $D: V(G) \to \N$, if \rDGz,
    \rDG, \rDTC, and \rDH hold at every vertex $v \in V(G)$, then the
    triangle complex $X\tr(G)$ is simply connected.
  \end{enumerate}
\end{lemma}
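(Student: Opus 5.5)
We follow the strategy of \cite[Lemma~3.41]{CCHO}: show that every cycle $C$ of $G$ is null-homotopic in $X\trsq(G)$ (respectively, $X\tr(G)$) by a double induction on the pair $(M(C), n(C))$, ordered lexicographically. Here $M(C)=\max\{D(v): v\in C\}$ and $n(C)$ is the number of vertices of $C$ with $D(v)=M(C)$. By \rDGz and \rDG, every edge $uv$ satisfies $|D(u)-D(v)|\le 1$, and every vertex with $D(v)>0$ has a neighbor with label $D(v)-1$. The vertices labeled $0$ are pairwise non-adjacent: an edge between two such vertices would contradict \rDGz, which forces all neighbors of a $0$-labeled vertex to have label $1$. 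So if $M(C)=0$, then $C$ is a single vertex and is trivial. Cycles are taken as closed walks, and backtracking $a\,b\,a$ is removed freely.

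For the induction step, pick a vertex $v$ of $C$ with $D(v)=M(C)=k\ge 1$, and let $u,u'$ be its neighbors on $C$, so that $D(u),D(u')\in\{k-1,k\}$. The goal is to replace the subpath $u\,v\,u'$ by a path avoiding $v$ whose inner vertices have labels $<k$. The replacement must differ from $u\,v\,u'$ by a sequence of triangles and squares, so that the new cycle $C'$ is homotopic to $C$ and has a smaller pair $(M,n)$. The cases are as follows.

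\emph{Case (a)}: $u=u'$. This is backtracking, and we simply delete it.

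\emph{Case (b)}: $u\sim u'$. The triangle $uvu'$ lets us replace $v$ by the edge $uu'$.

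\emph{Case (c)}: $u\nsim u'$ and $D(u)=D(u')=k-1$. Under (1), \rDM at $v$ gives $x\sim u,u'$ with $D(x)\le k-1$. Then $uvu'x$ is a square, or its two triangles if $x\sim v$, and we replace $v$ by $x$. Under (2), \rDB forces $u\sim u'$, which is case (b). Under (3), \rDH at $v$ gives $x$ with $D(x)=k-1$ and $u,u'\in N[x]$, so the two triangles $vux$ and $vu'x$ do the job.

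\emph{Case (d)}: some neighbor of $v$ on $C$, say $u$, has $D(u)=k$. We first use \rDTC on the edge $uv$ to get $x\sim u,v$ with $D(x)=k-1$, and insert $x$ through the triangle $uvx$, so that the path becomes $u\,x\,v\,u'$. This does not increase the number of label-$k$ vertices, but the neighbor of $v$ on the path now has label $k-1$. If $D(u')=k$ as well, we do the same on the other side.

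After this, $v$ is flanked by vertices of label $k-1$, and we are in case (b) or (c), which removes $v$. Overall we delete one label-$k$ vertex and add only vertices of label $\le k-1$, so $n$ decreases, or $M$ decreases once $n$ reaches $0$.

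\textbf{Termination.} In case (d), after removing $v$ the vertex $u$ is still at label $k$, but the count has dropped by one. One has to check that the inserted $x$ and the configuration at $v$ never create new vertices of label $k$. This holds because all inserted vertices have label $k-1$, and \rDM in (1) requires only $D(x)\le D(u')$. In (3) the rule \rDH is applied with $u,u'$ of label $\le k$ directly, so it handles all cases at once, giving triangles $vux$ and $vu'x$ with $D(x)=k-1$.

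\textbf{Main obstacle.} The step I expect to be delicate is (1) when $D(u)=D(v)=k$ and $D(u')=k-1$ with $u\nsim u'$. The other branch of \rDM applies directly and gives $x\sim u,u'$ with $D(x)\le k-1$, so we can replace $v$ by $x$ through the square $uvu'x$, or two triangles if $x\sim v$. This keeps $u$ at label $k$ but removes $v$. The care needed is that the square $uvu'x$ is induced: $u\nsim u'$ is given, and if $v\sim x$ we use the triangles instead. That bookkeeping makes each move a genuine 2-cell of $X\trsq(G)$ and finishes the induction.
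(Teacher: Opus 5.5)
Your proof is correct and follows essentially the same route as the paper, which also uses the strategy of \cite[Lemma~3.41]{CCHO}. You take a vertex of maximal label $k$ on the cycle, insert \rDTC-triangles next to it, and then remove it through a triangle, through \rDB, through the two triangles given by \rDH, or through the square (or two triangles) given by \rDM. The only difference is bookkeeping: you fuse the paper's two steps (first decreasing the number $e$ of edges with both ends labelled $k$, then the number $m$ of vertices labelled $k$) into one compound move, so the lexicographic pair $(M(C),n(C))$ suffices; your use of closed walks, with $n(C)$ counted with multiplicity, is in fact slightly more careful than the paper's use of cycles.
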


\begin{proof}
  Consider a graph $G$ and a labeling $D: V(G) \to \N$ such that
  \rDGz, \rDG, \rDTC, and \rDM (respectively, \rDB or \rDH) hold at every
  $v \in V(G)$ and assume, by contradiction, that $X\trsq(G)$
  (respectively, $X\tr(G)$) is not simply connected. Let $A$ be the
  non-empty set of cycles in $G$, which are not null-homotopic, For a
  cycle $C\in A$, let $r(C)$ denote $\max\{ D(v): v\in C\}$. By \rDGz
  we conclude that $r(C)>0$. Let $A'\subseteq A$ be the set of cycles
  $C$ with minimal $r(C)$ among cycles in $A$. Let $r:=r(C)$ for some
  $C\in A'$.  Let $A''\subseteq A'$ be the set of cycles having the
  minimal number $e$ of edges $uv$ with $D(u)=D(v)=r$.  Further, let
  $A_0\subseteq A''$ be the set of cycles with the minimal number $m$
  of vertices $v$ with $D(v)=r$. Consider a cycle
  $C = (v_0,v_1,\ldots,v_{k-1},v_0)$ in $A_0$. We can assume without
  loss of generality that $D(v_1)=r$. From the definition of $r$ and
  condition \rDG, we deduce that $r-1\le D(v_0)\le r$ and
  $r-1\le D(v_2)\le r$.  We distinguish two cases, depending on the
  values of $D(v_0)$ and $D(v_2)$.
  \begin{case}\label{case 1}
    $D(v_0)=r$ or $D(v_2)=r$. 
  \end{case}
  Assume without loss of generality that $D(v_0)=r$. Then, by \rDTC
  there exists a vertex $x\sim v_0,v_1$ with $D(x)=r-1$.  Observe that
  the cycle $C'=(v_0,x,v_1,\ldots,v_{k-1},v_1)$ belongs to $A'$
  because in $X\trsq(G)$ it is homotopic to $C$ by a homotopy going
  through the triangle $xv_0v_1$. The number of edges $uv$ of $C'$
  such that $D(u)=D(v)=r$ is less than $e$ (we removed the edge
  $v_0v_1$). This contradicts the choice of the number $e$.

  \begin{case}\label{case2}
    $D(v_0)=D(v_2)=r-1$.
  \end{case}
  If $v_0 \sim v_2$, then $C$ is homotopic to
  $C'=(v_0,v_2,\ldots,v_{k-1},v_1)$ by a homotopy going through the
  triangle $v_0v_1v_2$ and therefore, $C' \in A''$. Observe that $C'$
  has fewer vertices with $D(v) = r$ than $C$ (we removed $v_1$). This
  contradicts the choice of the number $m$. In the following, we thus
  assume that $v_0 \nsim v_2$.  If \rDB holds at $v_1$, we immediately
  get a contradiction.

  Suppose now that \rDH holds at $v_1$. Then there exists a vertex
  $y\sim v_0,v_1,v_2$ with $D(y) = r-1$. Consequently, the cycle
  $C''=(v_0,y,v_2,...,v_k,v_1)$ is homotopic to $C$ (via the triangles
  $v_0yv_1$ and $v_1yv_2$). Thus $C''$ belongs to $A''$ and the number
  of its vertices $u$ with $D(u)=r$ is equal to $m-1$. This
  contradicts the choice of the number $m$.
  
  Suppose now that \rDM holds at $v_1$. Then, there exists a vertex
  $y\sim v_0,v_2$ with $D(y)\le r-1$. If $y \sim v_1$, then we get the
  same contradiction as in the case where \rDH holds at $v_1$. If
  $y \nsim v_1$, the cycle $C''=(v_0,y,v_2,...,v_k,v_1)$ is also
  homotopic to $C$ (via the square $v_0v_1v_2y$). As before, $C''$
  belongs to $A''$ and this contradicts the choice of the number $m$.

  In all the cases above we get a contradiction. Consequently, it follows that the
  set $A$ is empty and hence the lemma holds.
\end{proof}

As a consequence of Lemmas~\ref{lem-recog-ds} and~\ref{lem-recog-sc},
there exists a proof labeling scheme that accepts all meshed graphs
(respectively, all weakly bridged graphs and all Helly graphs) and rejects all graphs that
do not have a simply connected triangle-square complex (respectively,
triangle complex). This allows us to use the existing local-to-global
characterizations in order to provide local recognition protocols for
the subclasses of meshed graphs mentioned in
Theorem~\ref{local-recogn}. We establish the different cases of
Theorem~\ref{local-recogn} in the following subsections. In each case,
we recall the necessary definitions and characterizations and then
show how to use them to design a local recognition algorithm.

\subsection{1-Local recognition of bridged and chordal graphs}

By~\cite[Theorem 8.1]{Ch_CAT}, a graph $G$ is bridged if and only if
the triangle complex $X\tr(G)$ of $G$ is simply connected and $G$ does
not contain induced 4-wheels $W_4$ and 5-wheels $W_5$. Clearly, there
exists a $1$-local rule that accepts a graph $G$ if $G$ does not
contain any induced $W_4$ or $W_5$. Then, a $1$-local recognition
protocol for bridged graphs can be obtained by adding this 1-local
rule to the verifier $\cA_B$.

Indeed, for any graph $G$, if $G$ is bridged, by
Lemma~\ref{lem-recog-ds}, for any root $s \in V(G)$, when endowed with
the labeling $D = d(s,\cdot)$, $\cA_B$ accepts at every vertex
$v$. Since $G$ is bridged, $G$ does not contain any $W_4$ or $W_5$ and
$G$ is accepted.  Conversely, if $G$ is not bridged, then either $G$
contains an induced $W_4$ or $W_5$, or $X\tr(G)$ is not simply
connected. In the first case, the verifier rejects at the vertex $u$
that is the center of the wheel $W_4$ or $W_5$.  In the second case,
by Lemma~\ref{lem-recog-sc}, for any labeling $D: V(G) \to \N$,
$\cA_B$ rejects $G$ at some vertex $v$.

We now show that chordal graphs are precisely the bridged graphs not
containing induced $k$-wheels $W_k$ with $k\ge 4$. As detecting wheels
can be done by considering $1$-balls, this implies that there exists a
$1$-local recognition protocol for chordal graphs.

\begin{lemma}
  Chordal graphs are precisely the bridged graphs not containing
  induced $k$-wheels $W_k$ with $k\ge 4$. Therefore, they are the
  graphs with simply connected triangle complexes not containing
  induced $k$-wheels $W_k$ with $k\ge 4$.
\end{lemma}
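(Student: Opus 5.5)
We prove the two inclusions separately and then combine them with the local-to-global characterization of bridged graphs from~\cite[Theorem 8.1]{Ch_CAT} recalled above.

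\emph{Chordal graphs are bridged and wheel-free.} Chordal graphs are bridged because every isometric cycle of length at least $4$ is induced, and chordal graphs have no induced cycle of length at least $4$; so all isometric cycles are triangles. An induced wheel $W_k$ with $k\ge 4$ contains its rim, an induced cycle $C_k$, which is forbidden in a chordal graph. Hence chordal graphs are bridged graphs without induced $W_k$, $k\ge 4$.

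\emph{Bridged wheel-free graphs are chordal.} Let $G$ be bridged without induced $W_k$ ($k\ge4$), and suppose $G$ has an induced cycle $C$ of length $k\ge 4$. The plan is to use the simple connectivity of $X\tr(G)$ (bridged graphs satisfy this by~\cite[Theorem 8.1]{Ch_CAT}) together with a minimal disk diagram argument. Among all induced cycles of length $\ge 4$, choose $C$ and a disk diagram $(D,\varphi)$ with boundary $C$ minimizing the number of triangles of $D$. Since $C$ is induced and has length $\ge4$, $D$ has interior vertices: a boundary edge's triangle must have its third vertex inside, else it would give a chord. In a minimal diagram, an interior vertex $z$ has a link in $D$ that is a cycle. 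If $\varphi(z)$ has a neighbourhood image along that link which, after removing repetitions and shortcuts, contains an induced cycle of length $\ge4$ through neighbours of $\varphi(z)$, we get an induced wheel $W_k$ centered at $\varphi(z)$, a contradiction. So every interior vertex's link maps to a subgraph of $N(\varphi(z))$ whose induced cycles are triangles, which lets us retriangulate locally and decrease the number of triangles (or reduce to a shorter induced cycle), contradicting minimality.

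An alternative, probably cleaner, route avoids diagrams. Take an induced cycle $C$ of length $k\ge4$ in bridged $G$. Since $C$ is not isometric (bridged), use the convexity of balls around convex sets, or the known fact that in bridged graphs every vertex $v$ of an induced cycle $C$ of length $\ge4$ lying in a ``locally'' shortest configuration has a common neighbour $x$ of $v$ and its two $C$-neighbours. I would try this first: pick $C$ of minimal length and, using weak modularity (TC and QC$(s)$ with $s$ a vertex of $C$ farthest-opposite), obtain a vertex $x$ adjacent to three consecutive vertices $v_{i-1},v_i,v_{i+1}$ of $C$. Replacing $v_i$ by $x$ yields a closed walk; minimality forces $x$ to be adjacent to many vertices of $C$. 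Iterating, $x$ is adjacent to all of $C$, producing an induced $W_k$ (the rim is induced since $C$ is), a contradiction.

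The main obstacle is exactly this last step: showing that the vertex $x$ produced by the triangle/quadrangle conditions sees all of $C$ rather than just part of it. Here the minimality of $|C|$ must be used: if $x$ misses some vertex, then $x$ together with an arc of $C$ forms a shorter induced cycle of length $\ge4$. This requires checking the $k=4,5$ cases separately, since short arcs may produce triangles. The second sentence of the lemma then follows immediately from~\cite[Theorem 8.1]{Ch_CAT}, since the forbidden $W_4,W_5$ are among the $W_k$, $k\ge4$.
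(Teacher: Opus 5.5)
Your proof that chordal graphs are bridged and $W_k$-free is fine. So is your derivation of the second sentence from \cite[Theorem 8.1]{Ch_CAT}. The converse, however, has a genuine gap at exactly the step you flag.

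You claim that if the vertex $x\sim v_{i-1},v_i,v_{i+1}$ misses some vertex of $C$, then $x$ and an arc of $C$ form a \emph{shorter} induced cycle of length $\ge 4$. This fails in the critical case where the neighbours of $x$ on $C$ are exactly $v_{i-1},v_i,v_{i+1}$. The arc cycle is then $C'=(C\setminus\{v_i\})\cup\{x\}$, an induced cycle of the \emph{same} length $k$. Minimality of $|C|$ therefore yields nothing, and ``iterating'' the replacement has no reason to terminate. In every other case a maximal arc of $m\ge 1$ missed vertices closes up to an induced cycle of length $m+3$ with $4\le m+3<k$, so no separate treatment of $k=4,5$ is needed.

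The missing ingredient is a second, strictly decreasing potential, which is what the paper uses:
\begin{enumerate}[(i)]
\item Among induced cycles of the minimal length $k$ with a marked vertex $v_0$, minimize $\sum_i d(v_0,v_i)$.
\item Let $p=\max_i d(v_0,v_i)\ge 2$ and take the \emph{first} index $j$ with $d(v_0,v_j)=p$, so that $d(v_0,v_{j-1})=p-1$.
\item Convexity of $B_{p-1}(v_0)$ together with $v_{j-1}\nsim v_{j+1}$ forces $d(v_0,v_{j+1})=p$.
\item (TC) gives $c\sim v_j,v_{j+1}$ with $d(v_0,c)=p-1$, and convexity again gives $c\sim v_{j-1}$.
\end{enumerate}
So it is (TC) plus convexity of balls, not (QC), that produces your $x$. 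Choosing the first farthest vertex is what guarantees adjacency to $v_{j-1}$.

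Now $C'$ has strictly smaller potential, so it is not induced. Hence $c$ has a neighbour on $C$ outside $\{v_{j-1},v_j,v_{j+1}\}$. Every missed arc then gives an induced cycle of length in $[4,k-1]$, contradicting the choice of $k$. So $c$ sees all of $C$, and $C\cup\{c\}$ induces $W_k$.

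Your disk-diagram route is a genuinely different strategy. It would prove the second sentence directly, as a local-to-global statement, without going through bridged graphs. But its central step is asserted rather than proved. You need that the image of the link of an interior vertex $z$ of degree $m$, which is a closed walk in the chordal graph $G[N(\varphi(z))]$, can be refilled with fewer than $m$ triangles. This holds when that walk is a simple cycle: triangulate it by chords, using $m-2$ triangles. Non-simple link images, however, require a separate folding or surgery argument. Your alternative ``or reduce to a shorter induced cycle'' is also left unexplained. As written, neither route closes the argument.
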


\begin{proof}
  The condition is obviously necessary. Conversely, consider a bridged
  graph that does not contain any induced $W_k$ with $k\ge 4$ and
  suppose it is not chordal. Let $k \geq 4$ be the minimal length of
  an induced cycle that is not a triangle.  Consider an induced cycle
  $C = (v_0, \ldots, v_{k-1})$ of length $\ell(C) = k$ that minimizes
  $D(C)=\sum_{0\leq i \leq k-1}d(v_0,v_i)$. Let
  $p = \max \{d(v_0,v_i): 0\leq i \leq k-1\}$, and observe that
  $p \geq 2$ as $v_0 \nsim v_2$.  Consider a minimal index $j$ such
  that $d(v_0,v_j) = p$. Then $d(v_0,v_{j-1}) = p-1$ and
  $p-1 \leq d(v_0,v_{j+1}) \leq p$. By the convexity of $B_{p-1}(v_0)$
  and since $v_{j-1} \nsim v_{j+1}$, we necessarily have
  $d(v_0,v_{j+1}) =p$. By TC($v_0$), there exists $c \sim v_j,v_{j+1}$
  such that $d(v_0,c) = p-1$. By the convexity of $B_{p-1}(v_0)$,
  $c \sim v_{j-1}$.  Consider the cycle
  $C' = (v_0, \ldots, v_{j-1},c,v_{j+1}, \ldots, v_{p-1})$ obtained by
  replacing $v_i$ by $c$ in $C$ and observe that $\ell(C') = k$ and
  $D(C') = D(C) -1$. By our choice of $C$, the cycle $C'$ cannot be
  induced, and by the definition of $k$ and since $C$ is induced, we
  deduce that $c\sim v_i$, for all $0 \leq i \leq k-1$. Consequently,
  $c$ and the vertices of $C$ induce a wheel $W_k$, contradicting our
  initial hypothesis.

  The second statement follows from the local-to-global
  characterization of bridged graphs~\cite{Ch_CAT}.
\end{proof}

\subsection{2-local recognition of weakly bridged graphs and Helly
  graphs}

Recall that weakly bridged graphs are the weakly modular graphs not
containing squares \cite{ChOs}.  An \emph{extended 5-wheel}
$\widehat{W}_5$ is a 5-wheel $W_5$ plus a triangle sharing exactly one
edge with $W_5$.  A graph $G$ satisfies the
\emph{$\widehat{W}_5$-wheel condition} if for any induced $\widehat{W}_5$
there exists a vertex adjacent to all vertices of $\widehat{W}_5$. By
\cite[Theorem A]{ChOs}, weakly bridged graphs are exactly the graphs
$G$ whose triangle complexes $X\tr(G)$ are simply connected, that do
not contain induced cycles of length 4, and that satisfy the
$\widehat{W}_5$-condition.
Clearly, we can detect if a graph $G$ contains a cycle of length $4$
by considering the $2$-balls of $G$. Similarly, any extended $5$-wheel
is contained in the $2$-ball of the center $c$ of the wheel, and one
can thus verify if a graph satisfies the $\widehat{W}_5$-wheel
condition by considering the $2$-balls of $G$. Consequently, a
$2$-local recognition protocol for weakly bridged graphs can be
obtained by adding to the verifier $\cA_B$ the 2-local rules checking
that $G$ does not contain an induced $C_4$ and satisfies the
$\widehat{W}_5$-wheel condition. The proof of correctness works along
the same lines as the proof for bridged graphs.

Recall that a graph $G$ is a \emph{Helly graph} if any collection of
pairwise intersecting balls of $G$ has a nonempty
intersection.
Analogously, $G$ is a \emph{clique-Helly graph} if any collection of pairwise
intersecting maximal cliques of $G$ has a nonempty intersection.
By \cite[Theorem 3.8]{CCHO} Helly graphs are exactly the clique-Helly
graphs $G$ with simply connected triangle complexes
$X\tr(G)$. Clique-Hellyness is clearly a local property. In fact, we can use the
following characterization of clique-Helly graphs~\cite{Dr} (which is
the specification to cliques of a result of \cite{BeDu} for Helly
hypergraphs): a graph $G$ is clique-Helly if and only if for any
triangle $T=uvw$ of $G$ the set $T^*$ of all vertices of $G$ adjacent
with at least two vertices of $T$ contains a universal vertex, i.e., a
vertex adjacent to all remaining vertices of $T^*$. Since
$T^*\subseteq B_2(u)\cap B_2(v)\cap B_2(w)$, the construction of $T^*$
and the existence of the universal vertex for $T^*$ can be ensured by
a 2-local verifier.  Consequently, a $2$-local recognition protocol
for Helly graphs can be obtained by adding the rules of this $2$-local
verifier for clique-Helly graphs to the verifier $\cA_H$. The proof of
correctness works along the same lines as the proof for bridged graphs.

\subsection{3-Local recognition protocols for weakly modular, modular,
  median, sweakly modular, dual polar, bucolic, cage-amalgation, and
  pseudo-modular graphs.}
A graph $G$ is called \emph{locally weakly modular} if it satisfies the \emph{local triangle condition (LTC)} and the \emph{local quadrangle condition (LQC)}. (LTC) asserts that for any triplet of vertices $u,v,w$ such that $v\sim w$ and $d(u, v) = d(u, w) = 2$ there exists $x$ adjacent to $u,v,w$. (LQC) asserts that for any four vertices $u,v,w,z$ such
that $z\sim v,w$ and $d(v,w)=d(u,v)=d(u,w)=d(u,z)-1=2$, there
exists a common neighbor $x$ of $u, v$ and $w$. Clearly, (LTC) is a 2-local rule and (LQC) is a 3-local rule. 
By \cite[Theorem 3.1]{CCHO}, a graph $G$ is weakly modular if and only
if its triangle-square complex $X\trsq(G)$ is simply connected and $G$
is locally weakly modular.  Therefore, the local recognition protocol
for weakly modular graph is obtained by adding the 2-local rule (LTC)
and the 3-local rule to the verifier $\cA_M$.  The proof of
correctness works along the same lines as the proof for bridged graphs

We now consider several subclasses of weakly modular graphs and we
recall their definitions or their characterizations that allow to
derive local recognition protocols. 

\begin{enumerate}[(1)]
\item Modular graphs are precisely the triangle-free weakly modular
  graphs~\cite{Ch_metric},
\item Median graphs are modular graphs without induced
  $K_{2,3}$~\cite{Mu},
\item Sweakly modular graphs are weakly modular graphs that do not
  contain $K_4^-$ or $K_{3,3}^-$ as induced subgraphs~\cite{CCHO},
\item Dual polar graphs are thick sweakly modular graphs, i.e.,
  sweakly modular graphs where any pair of vertices $u,v$ at distance
  $2$ are contained in a common square~\cite{Ca,CCHO},
\item Bucolic graphs are the weakly modular graphs that do not contain
  $K_{2,3}, W_4^-$, or $W_4$ as induced subgraphs~\cite{BrChChGoOs},
\item Cage-amalgamation graphs are the the bucolic graphs that do not
  contain $W_k$ for $k \geq 4$ as induced subgraphs~\cite{BrChChKoLaVa},
\item Pseudo-modular graphs are weakly modular graphs that do not
  contain metric triangle size 2~\cite{Ch_metric}.
\end{enumerate}

By inspecting balls of radius at most 3 around each
vertex, one can decide if a graph contains a triangle, a $K_{2,3}$, a
$K_4^-$, a $K_{3,3}^-$, a $W_4^-$, or a $W_k$, $k \geq 4$. Note that
we can also check if a graph is thick by considering each ball of
radius 2. Finally, since metric triangles of weakly modular graphs are
strongly equilateral, each metric triangle $v_1v_2v_3$ of size 2 in a
weakly modular graph $G$ is contained in the ball $B_2(v_1)$.  This
shows that the 3-local recognition protocol for weakly modular graphs
can be easily transformed in 3-local recognition protocols for all
these classes.

\subsection{3-Local recognition of basis graphs of matroids and even
  $\Delta$-matroids}

A \emph{matroid} on a
finite set $I$ is a collection $\cB\subseteq 2^I$, called
\emph{bases}, which satisfy the following exchange property: for all
$A, B\in \cB$ and $a\in A\setminus B$ there exists
$b\in B\setminus A$ such that
$A\setminus \{ a\}\cup \{ b\}\in \cB$ (the base
$A\setminus \{ a\}\cup \{ b\}$ is obtained from the base A by an
\emph{exchange}). All the bases of a matroid have the same
cardinality. 
$\Delta$-matroids have been introduced independently in~\cite{Bou,ChKa,DrHa}.
A $\Delta$–matroid is a collection $\cB\subseteq 2^I$, called \emph{bases} (not necessarily
equicardinal) satisfying the symmetric exchange property: for any
$A,B\in \cB$ and $a\in A\Delta B$, there exists
$b\in B\Delta A$ such that $A\Delta \{ a,b\}\in \cB$.  A
$\Delta$–matroid whose bases all have the same cardinality modulo 2 is
called an \emph{even $\Delta$–matroid}. Obviously, matroids are even $\Delta$-matroids. The \emph{basis graph} $G = G(\cB)$ of a matroid or even $\Delta$–matroid $\cB$ is the graph whose vertices are the
bases of $\cB$ and edges are the pairs $A, B$ of bases
differing by a single exchange, i.e., $|A\Delta B|= 2$.

Basis graphs of matroids and even $\Delta$-matroids are 
meshed graphs~\cite{Ch_delta}. Basis graphs of matroids have been
nicely characterized in a metric way by Maurer \cite{Mau}. His result
was extended in \cite{Ch_delta} to basis graphs of even
$\Delta$-matroids.  Basis
graphs of matroids and even $\Delta$-matroids have been characterized
in a local-to-global way in \cite{ChChOs}. We now recall these
characterizations.  A graph $G$ satisfies the \emph{Positioning
  Condition (PC)} if for any vertex $s$ and any square $uvwx$ of $G$,
we have $d(s,u)+d(s,w)=d(s,v)+d(s,x)$. If (PC) holds for all squares
$uvwx$ belonging to the ball $B_3(s)$, then this  is called
the \emph{Local Positioning Condition (LPC)}. A graph satisfies the
\emph{Interval Condition (IC$_k$)} if for any two vertices $u,v$ at
distance 2, $u,v$ belongs to a common $4$-cycle (i.e., the graph is
thick) and the subgraph induced by $u$, $v$, and their common
neighbors induce an induced subgraph of the $k$-octahedron.  Finally,
a graph $G$ satisfies the \emph{Link Condition (LC)} (respectively,
the \emph{Bipartite Link Condition (BLC)}) if the neighborhood $N(v)$
of each vertex $v$ induce the line graph of a graph (respectively, the
line graph of a bipartite graph). 

According to~\cite{Mau},  $G$ is the basis graph of a matroid
iff $G$ satisfies (PC), (IC3), and (BLC). According
to~\cite{Ch_delta}, $G$ is the basis graph of an even $\Delta$-matroid
iff $G$ satisfies (PC), (IC4), and (LC). Observe that
(IC3), (IC4), (BLC), and (LC) can be locally verified, but (PC) is a
global condition. Answering to a conjecture of Maurer~\cite{Mau}, the
following local-to-global characterization of basis graphs was established in~\cite{ChChOs}: a graph $G$ is the basis graph of a
matroid (respectively, the basis graph of an even $\Delta$-matroid) iff
the triangle-square complex $X\trsq(G)$ of $G$ is simply connected and
$G$ satisfies (LPC) and (IC3) (respectively, (LPC), (IC4), and (LC)).

Since basis graphs of matroids (respectively, even $\Delta$-matroids)
are meshed and since (LPC) and (IC3) (respectively, (LPC), (IC4), and
(LC)) can be checked by a $3$-local verifier, we can obtain a
$3$-local recognition protocol for basis graphs of matroids
(respectively, even $\Delta$-matroids), by adding the rules of this
$2$-local verifier for local conditions to the verifier $\cA_M$. The
proof of correctness works along the same lines as the proof for
bridged graphs.

\section{Final remarks}

Meshed graphs rather constitute a general framework for numerous
classes of graphs than a specific class of graphs. The key property
that allows to prove \Cref{local-recogn} is that our
distance-certification protocol is able to ``separate'' meshed graphs
from graphs that do not have a simply connected triangle-square
complex. We would like to emphasize that no proof labeling scheme can
recognize graphs that have simply connected triangle-square complexes
because recognizing these graphs is undecidable~\cite{Hak}. On the
other hand, meshed graphs do not admit local-to-global
characterizations~\cite{CCHO}, and therefore, although all subclasses
of meshed graphs from Theorem~\ref{local-recogn} admit $r$-local
recognition protocols with $r\in \{1,2,3\}$, we believe that \emph{the
  class of all meshed graphs does not admit a $r$-local recognition
  protocol for any $r$}. If this assertion is true, then our proof
labeling scheme would allow to separate two natural classes of
graphs even if none of them has a local recognition protocol.

Our leader election protocol for meshed graphs is based on the
properties of the distance function in meshed graphs. These properties
enable us to ensure that even if we have only labels in $\{0, 1, 2\}$, we
can guarantee  that the labels given to the nodes are their distances to
some root counted modulo 3. The same properties enable to prove that
the triangle-square complex of any meshed graph is simply
connected. It is then natural to consider the problem in other classes
of graphs that have nice properties that guarantee that their
triangle-square complexes are simply connected, and to check if such
properties allow to design a local leader election protocol.
Dismantlable graphs~\cite{NoWi,Qui83} mentioned in the introduction is
one such class. They generalize bridged and Helly graphs. Not only their
clique complexes are simply connected, but they are
contractible. There exists a local leader election protocol for
$K_4$-free dismantlable graphs~\cite{ChKo}, but we do not know if it
can be extended to general dismantlable graphs. The distance-based approach of
Section~\ref{sec-election} seems difficult to extend, as there is no
distance-based characterization of dismantlable graphs. Another class
to consider is the class of \emph{ample/lopsided/extremal
  sets}~\cite{La,BaChDrKo,BoRa}. Ample sets generalize median graphs
and have simply connected square complexes (and even contractible cube
complexes), as well as many combinatorial and metric
characterizations and properties.

\subsection*{Acknowledgements} 
J.C. and V.C. were partially supported by the ANR project MIMETIQUE ``Mineurs métriques'' (ANR-25-CE48-4089-01).

% \bibliographystyle{plain}
% \bibliography{biblio}

\end{document}